\theoremstyle{definition}%text non-italic, title bold
\newtheorem{defi}{Definition}[section]
\newtheorem{rema}[defi]{Remark}
\newtheorem{inducassum}[defi]{Inductive assumption}
\theoremstyle{plain}%text italic, title bold
\newtheorem{prop}[defi]{Proposition}
\newtheorem{theo}[defi]{Theorem}
\newtheorem{conj}[defi]{Conjecture}
\newtheorem{lemm}[defi]{Lemma}
\newtheorem{coro}[defi]{Corollary}
\newenvironment{subaligned}{\left\{\aligned}{\endaligned\right.}
\newcommand{\A}{{\mathcal A}}
\newcommand{\B}{{\mathcal B}}
\newcommand{\D}{{\mathcal D}}
\renewcommand{\H}{{\mathcal H}}
\renewcommand{\O}{{\mathcal O}}
\newcommand{\maths}[1]{{\ensuremath{\mathbb #1}}}
\newcommand{\RR}{\maths{R}}
\newcommand{\NN}{\maths{N}}
\renewcommand{\SS}{\maths{S}}
\newcommand{\ZZ}{\maths{Z}}
\newcommand{\TT}{\maths{T}}
\newcommand{\ie}{i.\,e.\ }
\newcommand{\rhs}{right-hand side}
\newcommand{\lhs}{left-hand side}
\newcommand{\mf}{mani\-fold}
\newcommand{\fundform}{second fun\-da\-men\-tal form}
\newcommand{\onorm}{or\-tho\-nor\-mal}
\newcommand{\levi}{Levi-Civita}
\newcommand{\desitter}{de~Sitter}
\newcommand{\eps}{\varepsilon}
\newcommand{\absval}[1]{\lvert #1 \rvert}
\newcommand{\norm}[1]{\lVert #1 \rVert}
\newcommand{\normp}[2]{\lVert #2 \rVert_{#1}}
\newcommand{\normck}[2]{\lVert #2 \rVert _{\ck{#1}}}
\newcommand{\normCzero}[1]{\normp{C^0}{#1}}
\newcommand{\ck}[1]{C^{#1}}
\newcommand{\mean}[1]{\langle #1 \rangle}
\begin{document}

\title[On Cosmic No-Hair in~\texorpdfstring{$\TT^2$}{T2}-symmetric non-linear scalar fields]{On the Cosmic No-Hair Conjecture in~\texorpdfstring{$\TT^2$}{T2}-symmetric non-linear scalar field spacetimes}
\author[K. Radermacher]{Katharina Radermacher}
\address{Department of Mathematics, KTH Royal Institute of Technology, SE-10044 Stockholm, Sweden}
\curraddr{}
\email{kmra@kth.se}
\urladdr{}
\dedicatory{}
\date{\today}
\translator{}
\keywords{}

\newcommand{\Ebas}{E_{\operatorname{bas}}}
\newcommand{\Enew}{E_{\operatorname{exist}}}
\newcommand{\potV}{V}
\newcommand{\constpot}{\potV_{\operatorname{const}}}
\newcommand{\maxpot}{\potV_{\operatorname{max}}}
\newcommand{\Vasympt}{\potV_{\operatorname{asympt}}}
\newcommand{\scalphi}{\phi}
\newcommand{\partt}{\partial_t}
\newcommand{\parttheta}{\partial_\theta}
\newcommand{\partx}{\partial_x}
\newcommand{\party}{\partial_y}
\newcommand{\partpm}{\partial_\pm}
\newcommand{\partmp}{\partial_\mp}
\newcommand{\Afct}{\A}
\newcommand{\Afctpm}{\Afct_\pm}
\newcommand{\Afctmp}{\Afct_\mp}
\newcommand{\hatAfctpm}{\hat\Afct_\pm}
\newcommand{\hatAfctmp}{\hat\Afct_\mp}
\newcommand{\Afcttwopm}{\Afct_{\pm,2}}
\newcommand{\Afcttwomp}{\Afct_{\mp,2}}
\newcommand{\Afcttwoplus}{\Afct_{+,2}}
\newcommand{\Afcttwominus}{\Afct_{-,2}}
\newcommand{\Bfct}{\B}
\newcommand{\Bfctpm}{\Bfct_\pm}
\newcommand{\Bfctmp}{\Bfct_\mp}
\newcommand{\Bfcttwopm}{\Bfct_{\pm,2}}
\newcommand{\Bfcttwomp}{\Bfct_{\mp,2}}
\newcommand{\Bfcttwoplus}{\Bfct_{+,2}}
\newcommand{\Bfcttwominus}{\Bfct_{-,2}}
\newcommand{\hatBfctpm}{\hat\Bfct_\pm}
\newcommand{\hatBfctmp}{\hat\Bfct_\mp}
\newcommand{\DfctNp}{\D_{N+1,+}}
\newcommand{\DfctNm}{\D_{N+1,-}}
\newcommand{\DfctNpm}{\D_{N+1,\pm}}
\newcommand{\DfctNmp}{\D_{N+1,\mp}}
\newcommand{\hatDfctNp}{\hat\D_{N+1,+}}
\newcommand{\hatDfctNm}{\hat\D_{N+1,-}}
\newcommand{\hatDfctNpm}{\hat\D_{N+1,\pm}}
\newcommand{\hatDfctNmp}{\hat\D_{N+1,\mp}}
\newcommand{\quotJ}[1]{\frac{e^{\lambda/2-P}J^2}{t^{{#1}/2}}}
\newcommand{\quotKQJ}[1]{\frac{e^{\lambda/2+P}(K-QJ)^2}{t^{{#1}/2}}}
\newcommand{\boundpot}{C_{pot}}
\newcommand{\nohairconj}{Cosmic No-Hair conjecture}
\newcommand{\setHKT}{C_{\H,K,T}}
\newcommand{\nabladesitter}{\overline\nabla_{\operatorname{dS}}}

%%%
%%Intro sentences for statements
%%%
\newcommand{\consevolequsinterval}{Consider a solution to the evolution equations~\eqref{eqn_timeevolalpha}--\eqref{eqn_eqnofmotionscalfieldwithT2} on~$I\times\SS^1$. }
\newcommand{\consevolequscompinterval}{Consider a solution to the evolution equations~\eqref{eqn_timeevolalpha}--\eqref{eqn_eqnofmotionscalfieldwithT2} on~$I\times\SS^1$. Assume that the interval~$I=[t_0,t]$ is compact. }
\newcommand{\conslocalsolTtwo}{Consider a solution to equations~\eqref{eqn_timeevolalpha}--\eqref{eqn_eqnofmotionscalfieldwithT2} on~$I\times\SS^1$. }
\newcommand{\consconstVandTtwoandlambdaasympt}{Consider the Einstein non-linear scalar field equations with a positive constant potential~$\constpot>0$ and a future global~$\TT^2$-symmetric solution to these equations which has~$\lambda$-asymptotics with constant~$\constpot$. }
\newcommand{\dependconstant}{depending only on the initial data at time~$t_0$ and polynomially on the length of the interval}
\newcommand{\smoothfctdependconstant}{depending only on the initial data at time~$t_0$ and via a smooth~$\RR\rightarrow\RR$ function on the length of the interval}
\newcommand{\assumptnonnegatpotential}{Assume further that the potential~$\potV:\RR\rightarrow\RR$ is a non-negative~$C^\infty$ function. }

\begin{abstract}	
	We consider spacetimes solving the Einstein non-linear scalar field equations with~$\TT^2$-symmetry and show that they admit an areal time foliation in the expanding direction. In particular, we prove global existence and uniqueness of solutions to the corresponding system of evolution equations for all future times. The only assumption we have to make is that the potential is a non-negative smooth function.
	
	In the special case of a constant potential, a setting which is equivalent to a linear scalar field on a background with a positive cosmological constant, we achieve detailed asymptotic estimates for the different components of the spacetime metric. This result holds for all $\TT^3$-Gowdy symmetric metrics and extends to certain~$\TT^2$-symmetric ones satisfying an a priori decay property. 
	Building upon these asymptotic estimates, we show future causal geodesic completeness and prove the \nohairconj.
\end{abstract}

\maketitle

\section{Introduction}

When considering Einstein's field equations in the cosmological setting of General Relativity, the current preference is to choose a non-linear scalar field or a positive cosmological constant as a part of the stress-energy tensor. The reason for these choices is that they are consistent with supernova observations which strongly hint at an accelerated expansion of the universe, see~\cite{riessetal_observevidsupernovaeaccelunivcosmconst}. 
Non-linear scalar field models in particular are used for models of the universe at very early times, when modelling inflatons, as well as for models of the late time expansion referred to as quintessence.
In the present paper, we consider spacetimes solving the Einstein non-linear scalar field equations. The special case of a constant potential coincides with a linear scalar field on a background with a positive cosmological constant.

We focus on spacetimes with~$\TT^2$-symmetry, which constitute an intermediate step between the well-studied spatially homogeneous models and the fully general case without any imposed symmetry. In this setting, we investigate the late time behaviour of solutions to the Einstein non-linear scalar field equations. We show future global existence of solutions under mild assumptions on the potential. In the case of a constant potential, we find future causal geodesic completeness and prove the \nohairconj\ in case of~$\TT^3$-Gowdy symmetry and certain~$\TT^2$-symmetric solutions.

\subsection{The \nohairconj}

The main question when studying the late time behaviour of cosmological solutions to Einstein's field equations, at least when assuming a positive cosmological constant, is the \nohairconj. This conjecture states that, in general, solutions are expected to isotropise in the expanding direction, in the sense that the geometry as perceived by observers becomes \desitter\ like. 
We give a precise formulation of the conjecture in our setting in Conjecture~\ref{conj_nohair}.

\subsection{Matter assumptions: Non-linear scalar field}

When modelling accelerated expansion in spacetimes, the easiest way to achieve this is by incorporating a positive cosmological constant in the Einstein field equations. A more sophisticated version to obtain the same behaviour is to consider a non-linear scalar field. In this matter model, a \textit{potential~$\potV:\RR\rightarrow\RR$} is given satisfying suitable minimal assumptions. Those usually include smoothness properties and one often assumes that the potential is non-negative or even strictly positive.

The stress-energy tensor of a non-linear scalar field is of the form
\begin{equation}
\label{eqn_stressenergytensorgeneralscalarfield}
	T_{\alpha\beta}=\nabla_\alpha\scalphi\nabla_\beta\scalphi-\left[\frac12\nabla_\gamma\scalphi\nabla^\gamma\scalphi+\potV(\scalphi)\right]g_{\alpha\beta},
\end{equation}
where $\scalphi$ is called the \textit{non-linear scalar field}.
Further, the potential and the scalar field satisfy the relation 
\begin{equation}
\label{eqn_eqnofmotiongeneralscalarfield}
	\nabla^\alpha\nabla_\alpha\scalphi-\potV'(\scalphi)=0,
\end{equation}
which is referred to as \textit{equation of motion}.

The Einstein non-linear scalar field equations then read
\begin{equation}
	\operatorname{Ric}-\frac12 Sg=\operatorname{Ein}=T,
\end{equation}
where~$\operatorname{Ric}$ and~$S$ are the Ricci and scalar curvature of the spacetime whose metric is~$g$.

One notices that the assumption of a constant positive potential~$\potV=\potV_0>0$ is equivalent to a linear scalar field on a background with a positive cosmological constant. We discuss this case in detail in Section~\ref{section_constpotV}, proving that the \nohairconj\ holds under certain additional assumptions on the symmetry.

\subsection{Symmetry assumptions: \texorpdfstring{$\TT^2$}{T2} and Gowdy}

In this paper, we study~$\TT^2$-symmetric spacetimes, and one way of doing so is by imposing the following setup:
One assumes that the topology is~$I\times\TT^3$, where~$\TT^3$ is the three-dimensional torus and~$I\subset(0,\infty)$ an interval. The metric of the spacetime is assumed to have the form
\begin{equation}
\label{eqn_metricT2}
	g=t^{-1/2}e^{\lambda/2}({-}dt^2+\alpha^{-1}d\theta^2)+te^P[dx+Qdy+(G+QH)d\theta]^2+te^{{-}P}(dy+Hd\theta)^2,
\end{equation}
where $(\theta,x,y)$ are the standard coordinates of the torus and~$t$ the one of the interval. The functions~$\alpha>0$, $\lambda$, $P$, $Q$, $G$ and~$H$ are assumed to depend only on~$t$ and~$\theta$.
Due to these functions being invariant under changes in the~$x,y$ coordinates, the torus~$\TT^2$ acts smoothly on the spacetime via translation in~$x$ and~$y$, leaving the metric invariant. Furthermore, one notices that the area of the symmetry orbits~$\{t\}\times\{\theta\}\times\TT^2$ is proportional to~$t$. 
One therefore denotes the foliation corresponding to the metric~\eqref{eqn_metricT2} by~\textit{areal time foliation}.

The spacetime is called~\textit{$\TT^3$-Gowdy symmetric} if the twist quantities defined by
\begin{equation}
\label{eqn_twistgeometric}
	J=\epsilon_{\alpha\beta\gamma\delta}X^\alpha Y^\beta \nabla^\gamma X^\delta,\qquad
	K=\epsilon_{\alpha\beta\gamma\delta}X^\alpha Y^\beta \nabla^\gamma Y^\delta
\end{equation}
vanish, where~$X=\partial_x$ and~$Y=\partial_y$ are  the Killing fields of the metric~\eqref{eqn_metricT2} and~$\epsilon$ is its volume form. In terms of the variables appearing in the metric, $\TT^3$-Gowdy symmetry corresponds to~$G$ and~$H$ being time-independent.

\subsection{Previous results}

Models with~$\TT^2$-symmetry have experienced a lot of attention, mainly in the last two decades, as they are considered a stepping stone towards more general geometric assumptions. One important objective has been to prove existence and properties of an areal time foliation, an approach which was started in~\cite{bergerchruscielisenbergmoncrief_globalfoliatvacuumT2} with a discussion of the vacuum case. In the same setting, \cite{isenbergweaver_areasymmetryorbitsT2} shows that the areal coordinates extend towards the past up to time parameter~$R=0$.
In~\cite{clausenisenberg_arealfolAVTDinT2poscosmconst}, the vacuum $\TT^2$-symmetric setting is generalised to even include a positive cosmological constant. Among other statements, it is shown that also in this setting, there exists an areal time foliation for the whole spacetime.

In a matter model different from the one discussed here, namely for solutions to the Einstein-Vlasov equations, similar results have been achieved:
Global existence of an areal foliation is shown to hold for $\TT^3$-Gowdy symmetry in~\cite{andreasson_globalfoliatgowdysymm}. In~\cite{andreassonrendallweaver_existCMCT2Vlasov}, this statement is extended to~$\TT^2$-symmetric solutions.
The above papers culminated in~\cite{smulevici_areasymmetrytoroidhyperb}, which collects and proves global existence theorems and areal foliations for a number of geometric settings with toroidal and hyperbolic symmetry ($\TT^2$-symmetry included).
The matter models discussed are vacuum and Vlasov matter, optionally including a positive cosmological constant.
We refer in particular to Table~1 and~2 in that reference, giving details on the minimal possible value of the time parameter and details on the status of the Strong Cosmic Censorship conjecture in such spacetimes.
\cite{tegankong_einsteinvlasovscalarfieldgowdyexpanddirection} focusses on $\TT^3$-Gowdy symmetry and the Einstein-Vlasov equations with a linear scalar field and shows that solutions are future global.

Regarding the \nohairconj, only few results have been obtained in settings directly related to ours: In \cite{wald_asymptbehavhomogmodelspositcosmconst}, the late time behaviour and \nohairconj\ in spatially homogeneous spacetimes (Bianchi) is discussed for solutions to Einstein's equations with a positive cosmological constant and any matter type satisfying the dominant energy condition. 
The paper~\cite{andreassonringstrom_cosmicnohairT3gowdyeinsteinvlasov} treats in detail solutions to the Einstein-Vlasov equations with a positive cosmological constant. The \nohairconj\ is shown to hold in~$\TT^3$-Gowdy symmetry and extends to certain~$\TT^2$-symmetric solutions, namely those with~$\lambda$-asymptotics.

Turning to the matter model discussed in the present paper, there is extensive literature covering solutions to Einstein's equations with a positive cosmological constant. On the other hand, solutions to the Einstein non-linear scalar field equations have been discussed, though often under stronger symmetry assumptions than ours, such as homogeneity or even isotropy. 
For example, \cite{rendall_accercosmexpscalfieldpositlowerbound} investigates the dynamics of spatially homogeneous non-linear scalar field solutions, mostly of Bianchi type~I--VIII, where the potential has a positive lower bound.
\cite{hayoung_einsteinvlasovscalarfield} adds a proof of global existence in the homogeneous setting, and investigates in more detail the asymptotics in case of an exponential potential.

Inhomogeneous spacetimes solving the Einstein non-linear scalar field equations however have not received much attention so far. 
In~\cite{heinzlerendall_powerlawinflation}, the potential of the non-linear scalar field is assumed to be exponential, and the late time asymptotics are discussed without making assumptions on the symmetry.
\cite{ringstrom_futstabnonlinscalfield} discusses future global non-linear stability in case the potential has a positive minimum, making only local assumptions on the initial data and no assumption on the symmetry. In~\cite{ringstrom_powerlawinflation}, this result is extended to exponential potentials, for initial data which in a certain ball is close to locally spatially homogeneous initial data.
A more comprehensive discussion of (among other things) non-linear scalar fields, touching upon a number of topics related to this matter model, is given in~\cite{ringstrom_topologyfuturestabilityuniverse}.

\subsection{The evolution equations}

The setting which we study in this paper is that of a non-linear scalar field with~$\TT^2$-symmetry. We here give the evolution equations governing spacetime solutions in this setting. A more detailed deduction of the following equations, relying on similar equations found in~\cite{andreassonringstrom_cosmicnohairT3gowdyeinsteinvlasov} for the case of Vlasov matter with a cosmological constant, is given in Appendix~\ref{appendix_evolequsscalarfield}.

One conveniently introduces the notation 
\begin{equation}
\label{eqn_twistanalytic}
	J={-}t^{5/2}\alpha^{1/2}e^{{-}\lambda/2+P}(G_t+QH_t),\quad K=QJ-t^{5/2}\alpha^{1/2}e^{{-}\lambda/2-P}H_t,
\end{equation}
and notices that this coincides with the definition of the twist quantities from~\eqref{eqn_twistgeometric}.
These two functions are used to replace the functions~$G,H$ from the metric~\eqref{eqn_metricT2}. It turns out that for non-linear scalar fields, both~$J$ and~$K$ are constants.
The case of Gowdy symmetry is equivalent to~$J=0=K$.

The remaining quantities~$\alpha>0$, $\lambda$, $P$, $Q$ and~$\scalphi$ evolve according to the following equations, which are a direct consequence of the Einstein field equations:
\begin{align}
\label{eqn_timeevolalpha}
	\frac{\alpha_t}{\alpha}
		={}&{-}\frac{e^{\lambda/2-P}J^2}{t^{5/2}} - \frac{e^{\lambda/2+P}(K-QJ)^2}{t^{5/2}} - 4t^{1/2}e^{\lambda/2}\potV(\scalphi),\\
\label{eqn_timeevollambda}
	\lambda_t
		={}&t\left[P_t^2+\alpha P_\theta^2+e^{2P}(Q_t^2+\alpha Q_\theta^2)+2(\scalphi_t^2+\alpha\scalphi_\theta^2)\right]-\frac{e^{\lambda/2-P}J^2}{t^{5/2}}-\frac{e^{\lambda/2+P}(K-QJ)^2}{t^{5/2}}\\
			&{}-4t^{1/2}e^{\lambda/2}\potV(\scalphi),\\
\label{eqn_spaceevollambda}
	\lambda_\theta
		={}&2t(P_tP_\theta +e^{2P}Q_tQ_\theta+2\scalphi_t\scalphi_\theta),
\end{align}
further, second order differential equations are imposed on~$P$,
\begin{align}
\label{eqn_secondevolP}
	\partial_t(t\alpha^{{-}1/2}P_t)={}&\parttheta(t\alpha^{1/2}P_\theta)+t\alpha^{{-}1/2}e^{2P}(Q_t^2-\alpha Q_\theta^2)\\
	&{}+\frac{\alpha^{{-}1/2}e^{\lambda/2-P}J^2}{2t^{5/2}}-\frac{\alpha^{{-}1/2}e^{\lambda/2+P}(K-QJ)^2}{2t^{5/2}},
\end{align}
on~$Q$,
\begin{equation}
\label{eqn_secondevolQ}
	\partial_t(t\alpha^{{-}1/2}e^{2P}Q_t)-\partial_\theta(t\alpha^{1/2}e^{2P}Q_\theta)
	  =t^{{-}5/2}\alpha^{{-}1/2}e^{\lambda/2+P}J(K-QJ),
\end{equation}
and on a term containing~$\alpha$ and~$\lambda$,
\begin{align}
\label{eqn_secondevollambda}
	\partial_t\left[t\alpha^{{-}1/2}(\lambda_t-2\frac{\alpha_t}{\alpha}-\frac3t)\right]
		={}&\partial_\theta(t\alpha^{1/2}\lambda_\theta) +\alpha^{{-}1/2}\lambda_t \\
		  &{} -t\alpha^{{-}1/2}\left[P_t^2+e^{2P}Q_t^2+2\scalphi_t^2-\alpha(P_\theta^2+e^{2P}Q_\theta^2+2\scalphi_\theta^2)\right]\\
		  &{}-2\alpha^{{-}1/2}(\frac{e^{\lambda/2-P}J^2}{t^{5/2}}+\frac{e^{\lambda/2+P}(K-QJ)^2}{t^{5/2}}) +4t^{1/2}e^{\lambda/2}\alpha^{{-}1/2}\potV.
\end{align}
The potential~$\potV$ is related to the scalar field~$\scalphi$ via the equation of motion
\begin{equation}
	\label{eqn_eqnofmotionscalfieldwithT2}
	\potV'=t^{1/2}e^{{-}\lambda/2}\left[-\scalphi_{tt}+\alpha\scalphi_{\theta\theta}-\frac1t\scalphi_t+\frac{\alpha_t}{2\alpha}\scalphi_t+\frac{\alpha_\theta}{2}\scalphi_\theta \right].
\end{equation}

We notice that if we assume the evolution equations~\eqref{eqn_timeevolalpha}--\eqref{eqn_secondevolQ} to hold, then equation~\eqref{eqn_secondevollambda} holds as well. Further, if we set~$h$ to be the \lhs\ minus the \rhs\ of equation~\eqref{eqn_spaceevollambda}, then equations~\eqref{eqn_timeevolalpha}, \eqref{eqn_timeevollambda} together with~\eqref{eqn_secondevolP}, \eqref{eqn_secondevolQ} imply that
\begin{equation}
\label{eqn_evoleqnisconstraint}
	h_t=\frac{\alpha_t}{2\alpha} h.
\end{equation}
We can therefore consider equation~\eqref{eqn_spaceevollambda} to be a constraint, as it is satisfied on the whole interval of existence as soon as it is satisfied at the initial time.
In both arguments, we make use of the fact that~$J$ and~$K$ are constants. The equivalent statement in the vacuum setting was achieved in~\cite{ringstrom_instabspathomogT2vacuum}.
\begin{rema}
	Direct computation of the individual terms shows that the equation of motion~\eqref{eqn_eqnofmotionscalfieldwithT2} can be written in the form
	\begin{equation}
	\label{eqn_secondevolscalphifromeqnmotion}
		\partial_t(t\alpha^{{-}1/2}\scalphi_t)-\parttheta(t\alpha^{1/2}\scalphi_\theta) ={-}t^{1/2}\alpha^{{-}1/2}e^{\lambda/2}\potV'(\scalphi),
	\end{equation}
	which strongly ressembles the second order equations for~$P$ and~$Q$, equations~\eqref{eqn_secondevolP} and~\eqref{eqn_secondevolQ}.
	These two latter equations also appear in a different form further down, see equations~\eqref{eqn_secondevolPrewritten} and~\eqref{eqn_secondevolQrewritten}.
\end{rema}

\subsection{New results}

A priori, it is not clear that the set of evolution equations~\eqref{eqn_timeevolalpha}--\eqref{eqn_eqnofmotionscalfieldwithT2} has a unique solution nor whether solutions exist more than locally. In order to prove global existence and uniqueness towards the future, which is the time direction we are interested in, we first show that all variables are uniformly bounded on compact time intervals~$I=[t_0,t]$, by constants \smoothfctdependconstant, and that this even holds for their derivatives. This is done in Section~\ref{section_unifboundcomptime}.
In order to control the potential~$\potV$, we impose mild conditions on its behaviour as a function of~$\scalphi$, namely that it is a non-negative~$C^\infty$ function on~$\RR$.

Making use of these boundedness results, we are in a position to connect our set of evolution equations to a general statement on existence and uniqueness of hyperbolic partial differential equations from~\cite{majda_comprfluidsconservlaws}. This yields the following theorem, whose proof is carried out in Section~\ref{section_globalexist}.
\begin{theo}[Future global existence and uniqueness]
\label{theo_globalexistence}
	Consider the evolution equations~\eqref{eqn_timeevolalpha}--\eqref{eqn_eqnofmotionscalfieldwithT2}
	where the potential~$\potV:\RR\rightarrow\RR$ is a non-negative~$C^\infty$ function. 
	To given smooth initial data at time~$t_0$, there exists a unique smooth solution to these evolution equations, and it is global towards the future, \ie it is defined on~$[t_0,\infty)\times\TT^3$.
\end{theo}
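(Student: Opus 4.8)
The plan is to recast the evolution system as a first-order quasilinear symmetric hyperbolic system, invoke the standard local existence, uniqueness and continuation theory for such systems as provided by the general result of~\cite{majda_comprfluidsconservlaws}, and then upgrade the local solution to a global one using the a priori bounds established in Section~\ref{section_unifboundcomptime}.

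First I would introduce first-order unknowns. The genuinely hyperbolic content sits in the three wave equations~\eqref{eqn_secondevolP}, \eqref{eqn_secondevolQ} and~\eqref{eqn_secondevolscalphifromeqnmotion} for~$P$, $Q$ and~$\scalphi$; writing each as a first-order system in the variables~$(\partt\,\cdot,\parttheta\,\cdot)$ together with the field itself, the principal part is a wave operator whose speed is governed by~$\alpha$. The remaining functions~$\alpha$ and~$\lambda$ obey the first-order (in~$t$) equations~\eqref{eqn_timeevolalpha} and~\eqref{eqn_timeevollambda}, which carry no spatial derivatives of~$\alpha$ or~$\lambda$ on their right-hand sides and can therefore be appended to the system as ordinary differential / transport components driven by the wave fields. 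The coefficients of the principal symbol involve~$\alpha$, $e^{\pm P}$ and~$e^{\lambda/2}$ and are smooth functions of the unknowns, and since the problem is posed on the compact manifold~$\SS^1$ (and~$\TT^3$ after restoring the suppressed~$x,y$ directions) no boundary conditions are needed. The one point requiring care is that symmetric hyperbolicity degenerates if~$\alpha\to 0$ or~$\alpha\to\infty$, so I would record that the bounds of Section~\ref{section_unifboundcomptime} keep~$\alpha$ bounded above and below away from zero on every compact time interval, leaving the principal symbol uniformly non-degenerate.

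With the system in this form, the cited theory yields, for given smooth initial data at~$t_0$, a unique smooth solution on a maximal interval~$[t_0,T^*)$ together with a continuation criterion: if~$T^*<\infty$, then an appropriate Sobolev norm of the solution must blow up as~$t\to T^*$. The core of the argument is to rule this out. Here I would invoke the uniform estimates from Section~\ref{section_unifboundcomptime}: on any compact subinterval~$[t_0,t]\subset[t_0,T^*)$, all of~$\alpha,\lambda,P,Q,\scalphi$ and their spatial and temporal derivatives are controlled by constants \smoothfctdependconstant. Because the potential~$\potV$ is a non-negative smooth function and~$\scalphi$ stays in a compact set on such intervals, the terms~$\potV(\scalphi)$ and~$\potV'(\scalphi)$ occurring in~\eqref{eqn_timeevolalpha}--\eqref{eqn_eqnofmotionscalfieldwithT2} are likewise bounded; iterating to higher derivatives gives uniform control of all the Sobolev norms entering the continuation criterion on~$[t_0,t]$ for every~$t<T^*$. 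Since these bounds stay finite as~$t\to T^*$, the criterion cannot trigger, forcing~$T^*=\infty$.

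Two ingredients complete the argument. First, the constraint~\eqref{eqn_spaceevollambda} must be shown to propagate, so that a solution of the reduced evolution system is genuinely a solution of the full Einstein system; this is exactly the content of the identity~\eqref{eqn_evoleqnisconstraint}, which guarantees that if the constraint holds at~$t_0$ it holds throughout. Second, the constancy of the twist quantities~$J$ and~$K$ is used both in setting up the reduced system and in deriving~\eqref{eqn_evoleqnisconstraint}. The main obstacle I anticipate is not a single hard estimate — the heavy analytic work is already carried out in the boundedness section — but rather the bookkeeping needed to verify the hypotheses of the general hyperbolic existence theorem, in particular exhibiting the symmetrizer and matching the continuation criterion of~\cite{majda_comprfluidsconservlaws} to precisely the norms controlled in Section~\ref{section_unifboundcomptime}.
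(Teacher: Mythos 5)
Your proposal is correct and follows essentially the same route as the paper: the paper also reduces the system to a first-order quasilinear symmetric hyperbolic system (in the vector $(\alpha,\lambda,P,Q,\scalphi,P_t/\sqrt\alpha,P_\theta,Q_t/\sqrt\alpha,Q_\theta,\scalphi_t/\sqrt\alpha,\scalphi_\theta)$, with an explicitly exhibited symmetrizer), invokes the local existence, uniqueness and continuation results of~\cite{majda_comprfluidsconservlaws}, and rules out finite-time breakdown using the uniform bounds of Section~\ref{section_unifboundcomptime}, with the constraint propagation~\eqref{eqn_evoleqnisconstraint} and the constancy of~$J$, $K$ playing exactly the roles you assign them. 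The only element you leave implicit — the explicit symmetrizer and the precise matching to Majda's hypotheses — is indeed just the bookkeeping you anticipate, which the paper carries out in detail.
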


Once future global existence is ensured, we can turn our attention to the main aim of the present paper, namely to understand in detail the asymptotic behaviour of the individual variables, and thus of the metric in~\eqref{eqn_metricT2}, towards the future. 

The following subclass of metrics is of central importance in our discussion:
\begin{defi}
\label{defi_lambdaasympt}
	Consider a metric of the form~\eqref{eqn_metricT2} which is defined for all~$t>t_0$ for some~$t_0\ge0$. The metric is said to have~\textit{$\lambda$-asymptotics with constant~$\Vasympt>0$} if, for every~$\eps>0$, there is a~$T>t_0$ such that
	\begin{equation}
		\lambda(t,\theta)\ge{-}3\ln t +2\ln(\frac{3}{4\Vasympt})-\eps,
	\end{equation}
	for all~$(t,\theta)\in[T,\infty)\times\SS^1$.
\end{defi}
We show below, in Lemma~\ref{lemm_lambdaasympt}, that~$\TT^3$-Gowdy symmetric solutions have~$\lambda$-asymptotics with constant~$\maxpot$,
provided we assume that the potential is bounded from above towards the future, \ie
\begin{equation}
	\potV\left[\scalphi(t,\theta)\right]\le\maxpot,
\end{equation}
for all~$(t,\theta)\in [t_0,\infty)\times\SS^1$.
For non-Gowdy solutions, we impose this property on~$\lambda$ for our discussion of the asymptotic behaviour. This approach should be compared to that of~\cite{andreassonringstrom_cosmicnohairT3gowdyeinsteinvlasov}, where the case of Vlasov matter with a positive cosmological constant is discussed. In that paper, a related definition of~$\lambda$-asymptotics is given, which is shown to hold in case of~$\TT^3$-Gowdy symmetry and imposed otherwise, see~\cite[Def.~1]{andreassonringstrom_cosmicnohairT3gowdyeinsteinvlasov}.

\medskip

We notice that the potential~$\potV$ appears in most of the evolution equations and therefore influences all variables' behaviour.
It therefore appears impossible to achieve statements on the asymptotic convergence and decay properties of the variables without imposing some a priori assumption on the potential~$\potV$. In the present paper, in order to obtain detailed asymptotics, we assume that the potential~$\potV$ is constant, $\potV=\constpot$. The resulting field equations are equivalent to those of a linear scalar field on a background with a positive cosmological constant. In this setup and with the assumption of~$\lambda$-asymptotics with the same constant~$\constpot$, we can determine the behaviour of all variables up to arbitrary high order. This is the statement of Proposition~\ref{prop_mainasympmetric}.

Although the restriction to a constant potential is a rather severe one, we hope that our findings can, in a future paper, be extended to include a larger class of potentials. 

The steps taken to achieve Proposition~\ref{prop_mainasympmetric} are laid out in Section~\ref{section_constpotV}. The starting point is the property of~$\lambda$-asymptotics with constant~$\constpot$, from which we obtain estimates for all non-differentiated variables. These are used in the following steps to control first order derivatives. Inductively, we then extend the statements to include derivatives of all orders.
The steps of our discussion are similar to those taken in~\cite{andreassonringstrom_cosmicnohairT3gowdyeinsteinvlasov}, where the case of Vlasov matter with a positive cosmological constant is treated. 
There are two main differences though: While many statements in~\cite{andreassonringstrom_cosmicnohairT3gowdyeinsteinvlasov} make use of the non-negative pressure condition, we do not have access to this property, as non-linear scalar fields in general do not satisfy this condition. Further, Vlasov matter does not include a potential~$\potV$ and a scalar field~$\scalphi$. All statements on these quantities are new for this reason.

\begin{prop}[Asymptotic properties of the variables in the metric]
\label{prop_mainasympmetric}
	Consider a~$\TT^2$-sym\-metric solution to the Einstein non-linear scalar field equations with a positive constant potential~$\constpot$. Choose coordinates so that the corresponding metric takes the form~\eqref{eqn_metricT2} on~$(t_0,\infty)\times \TT^3$. Assume that the metric has~$\lambda$-asymptotics with constant~$\constpot$ and set~$t_1=t_0+2$. 
	Then there are smooth functions~$\alpha_\infty>0$, $P_\infty$, $Q_\infty$, $\scalphi_\infty$, $G_\infty$ and~$H_\infty$ on~$\SS^1$, and, for every~$0\le N\in\ZZ$, a constant~$C_N$ such that
	\begin{align}
		\normck N{P_t(t,\cdot)} + \normck N{Q_t(t,\cdot)} + \normck N{\scalphi_t(t,\cdot)} \le{}& C_N t^{{-}2},\\
		\normck N{P(t,\cdot)-P_\infty} + \normck N{Q(t,\cdot)-Q_\infty} + \normck N{\scalphi(t,\cdot)-\scalphi_\infty} \le{}& C_N t^{{-}1},\\
		\normck N{\frac{\alpha_t}{\alpha}+\frac3t} + \normck N{\lambda_t+\frac3t} \le{}& C_k t^{{-}2},\\
		\normck N{t^3\alpha(t,\cdot)-\alpha_\infty} +\normck N{\lambda(t,\cdot)+3\ln t-2\ln\frac{3}{4\constpot}} \le{}& C_N t^{{-}1},\\
		t\normck N{G_t(,\cdot)} +t\normck N{H_t(,\cdot)} +\normck N{G(t,\cdot)-G_\infty} +\normck N{H(t,\cdot)-H_\infty}\le{}& C_Nt^{{-}3/2},
	\end{align}
	for all~$t\ge t_1$. 
	
	In terms of the geometry, denote by~$\bar g(t,\cdot)$ and~$\bar k(t,\cdot)$ the metric and \fundform\ induced on~$\{t\}\times\TT^3$, by~$\bar g_{ij}$ the components of the metric~$\bar g$ with respect to the vector fields~$\partial_1=\partial_\theta$, $\partial_2=\partial_x$ and~$\partial_3=\partial_y$, and equivalently for~$\bar k$.
	Then
	\begin{equation}
		\normck N{t^{{-}1}\bar g_{ij}(t,\cdot) - \bar g_{\infty,ij}}+ \normck N{t^{{-}1}\bar k_{ij} -\H \bar g_{\infty,ij}}\le C_N t^{{-}1},
	\end{equation}
	for all~$t\ge t_1$, where~$\H = (\constpot/3)^{1/2}$ and 
	\begin{equation}
		\bar g_\infty=\frac{3}{4\constpot\alpha_\infty}d\theta^2 + e^{2P_\infty}\left[dx + Q_\infty dy + (G_\infty+Q_\infty H_\infty)d\theta\right]^2 +e^{{-}2P_\infty}(dy+H_\infty d\theta)^2.
	\end{equation}
\end{prop}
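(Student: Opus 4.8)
The plan is to run a three-stage bootstrap: first extract crude decay from an energy estimate, then sharpen it to the stated polynomial rates by exploiting the precise structure of the wave equations, and finally promote everything to all orders of spatial derivatives by induction on~$N$. The decisive simplification from~$\potV=\constpot$ is that~$\potV'=0$, so the equation of motion~\eqref{eqn_eqnofmotionscalfieldwithT2} becomes the source-free wave equation~\eqref{eqn_secondevolscalphifromeqnmotion}, that is~$\partt(t\alpha^{-1/2}\scalphi_t)=\parttheta(t\alpha^{1/2}\scalphi_\theta)$, of exactly the same type as the equations~\eqref{eqn_secondevolP} and~\eqref{eqn_secondevolQ} for~$P$ and~$Q$, while the potential enters~\eqref{eqn_timeevolalpha} and~\eqref{eqn_timeevollambda} only through the explicit term~$4t^{1/2}e^{\lambda/2}\constpot$. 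The engine of the whole argument is the hypothesis of~$\lambda$-asymptotics: its lower bound on~$\lambda$ gives~$4t^{1/2}e^{\lambda/2}\constpot\ge 3e^{{-}\eps/2}t^{{-}1}$, which by~\eqref{eqn_timeevolalpha} (the twist terms being lower order) forces~$\alpha_t/\alpha\le{-}(3-\delta)/t$ for any~$\delta>0$ and all large~$t$. This one-sided control of the expansion is precisely what is needed to start the energy estimates, and it is what breaks the circular dependence between the expansion and the matter energy.

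For the base case I would introduce the energy
\[
\mathcal E(t)=\frac12\int_{\SS^1}\bigl[t\alpha^{{-}1/2}\bigl(P_t^2+e^{2P}Q_t^2+2\scalphi_t^2\bigr)+t\alpha^{1/2}\bigl(P_\theta^2+e^{2P}Q_\theta^2+2\scalphi_\theta^2\bigr)\bigr]\,d\theta .
\]
Multiplying each second-order equation by the corresponding time derivative and integrating over~$\SS^1$ expresses~$\frac{d}{dt}\mathcal E$ through the weight derivatives~$\partt(t\alpha^{{-}1/2})$ and~$\partt(t\alpha^{1/2})$ together with controllable twist and~$e^{2P}$-coupling remainders; the bound~$\alpha_t/\alpha\le{-}(3-\delta)/t$ makes~$\partt(t\alpha^{{-}1/2})>0$ and~$\partt(t\alpha^{1/2})<0$, so~$\mathcal E$ is essentially monotone and a Grönwall argument yields a first crude decay. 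I would then sharpen this directly from the equations: writing each field equation as~$\partt(t\alpha^{{-}1/2}u_t)=\parttheta(t\alpha^{1/2}u_\theta)+\text{(lower order)}$ with~$t\alpha^{{-}1/2}\sim t^{5/2}$ and~$t\alpha^{1/2}\sim t^{{-}1/2}$, the two sides balance and force~$t^{5/2}u_t=O(t^{1/2})$, where the lower-order terms vanish for~$\scalphi$ and are subleading for~$P,Q$; integrating gives the sharp~$\normck0{P_t}+\normck0{Q_t}+\normck0{\scalphi_t}\le Ct^{{-}2}$ and hence smooth limits~$P_\infty,Q_\infty,\scalphi_\infty$ approached at rate~$t^{{-}1}$. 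With these in hand the energy density in~\eqref{eqn_timeevollambda} is~$O(t^{{-}3})$ and the twist is negligible, so~\eqref{eqn_timeevolalpha} and~\eqref{eqn_timeevollambda} reduce to the coupled system~$\alpha_t/\alpha={-}4\constpot t^{1/2}e^{\lambda/2}+O(t^{{-}2})$ and~$\lambda_t={-}4\constpot t^{1/2}e^{\lambda/2}+O(t^{{-}2})$; its attractor drives~$e^{\lambda/2}$ to~$\frac{3}{4\constpot}t^{{-}3/2}$ and makes~$t^3\alpha$ converge to a data-dependent~$\alpha_\infty$, giving~$\lambda_t+3/t=O(t^{{-}2})$, $\alpha_t/\alpha+3/t=O(t^{{-}2})$ and the sharp~$\normck0{\lambda+3\ln t-2\ln\frac{3}{4\constpot}}\le Ct^{{-}1}$, the constant~$2\ln\frac{3}{4\constpot}$ being exactly the value that balances~${-}3/t$ against~${-}4\constpot t^{1/2}e^{\lambda/2}$.

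Next I would run the induction on~$N$: differentiating~\eqref{eqn_secondevolP}, \eqref{eqn_secondevolQ} and~\eqref{eqn_secondevolscalphifromeqnmotion}~$N$ times in~$\theta$ preserves the principal wave operator, with source terms that are polynomial in lower-order quantities already estimated. Rerunning the energy estimate and the direct balancing on these differentiated equations yields the~$\normck N{\cdot}$ bounds for~$P_t,Q_t,\scalphi_t$ and their limits, and then~$\theta$-differentiating~\eqref{eqn_timeevolalpha} and~\eqref{eqn_timeevollambda} transfers the rates to~$\alpha$ and~$\lambda$ in every~$\ck N$; all limit functions are smooth as~$\ck N$-limits for all~$N$. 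The twist functions are handled last and most directly: solving~\eqref{eqn_twistanalytic} gives~$H_t={-}(K-QJ)t^{{-}5/2}\alpha^{{-}1/2}e^{\lambda/2+P}$ and an analogous formula for~$G_t+QH_t$, so the already-established bounds on~$\alpha,\lambda,P,Q$ produce~$t\normck N{G_t}+t\normck N{H_t}\le Ct^{{-}3/2}$, whence~$G\to G_\infty$ and~$H\to H_\infty$ at rate~$t^{{-}3/2}$ by integration.

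Finally, for the geometric statement I would read off the components~$\bar g_{ij}$ directly from~\eqref{eqn_metricT2} and compute the \fundform~$\bar k_{ij}$ from~$\partt\bar g_{ij}$ and the lapse~$N=t^{{-}1/4}e^{\lambda/4}$. Substituting the asymptotics above, in particular~$t^{{-}3/2}e^{\lambda/2}\alpha^{{-}1}\to\frac{3}{4\constpot\alpha_\infty}$ and~$\frac{1}{2N}\sim\H t$ while~$\partt\bar g_{ij}\sim t^{{-}1}\bar g_{ij}$, gives~$t^{{-}1}\bar g_{ij}\to\bar g_{\infty,ij}$ and~$t^{{-}1}\bar k_{ij}\to\H\bar g_{\infty,ij}$ at rate~$t^{{-}1}$, with~$\H=(\constpot/3)^{1/2}$ emerging as the \desitter\ expansion rate of the effective cosmological constant~$\constpot$; the relation~$\bar k\approx\H\bar g$ is exactly the isotropisation predicted by the \nohairconj. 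The main obstacle is the base case: the~$\lambda$-asymptotics is only a one-sided bound, and converting it into the sharp two-sided rates for~$\lambda$ and~$\alpha$ requires closing the feedback loop in which the wave energy depends on~$\alpha$ through its weights while the expansion~$\alpha_t/\alpha$ and~$\lambda_t$ are in turn driven by that very energy. By comparison, the higher-order induction is technically lengthy but conceptually routine once the energy scheme and the direct balancing of the wave equations are in place.
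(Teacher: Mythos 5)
Your blueprint --- crude energy decay, sharpened polynomial rates, induction in the number of $\theta$-derivatives, then the twist functions and the induced geometry --- is the same as the paper's, and several ingredients are sound: the monotonicity of your energy under $\alpha_t/\alpha\le{-}(3-\delta)/t$, the attractor analysis of $\hat\lambda$ and $t^3\alpha$, and the treatment of $G$, $H$, $\bar g$ and $\bar k$ all parallel Lemmas~\ref{lemm_ebasestimatewitha}--\ref{lemm_alphalambdaderivimproved} and Section~\ref{section_mainproofs}. However, the pivot of your argument, the ``direct balancing'' step, has a genuine derivative-loss gap. To conclude $t\alpha^{-1/2}u_t=O(t^{1/2})$ by integrating $\partt(t\alpha^{-1/2}u_t)=\parttheta(t\alpha^{1/2}u_\theta)+\ldots$ in time, you must bound $\parttheta(t\alpha^{1/2}u_\theta)=t\alpha^{1/2}u_{\theta\theta}+t\parttheta(\alpha^{1/2})u_\theta$ pointwise, i.e.\ you need sup-norm control of $u_{\theta\theta}$ and of $\alpha_\theta$. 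At the stage where you invoke this you only have the integral ($L^2$-type) bounds coming from your energy, so the step is not justified; and the same loss recurs inside your induction, where upgrading $\parttheta^N u_t$ to the rate $t^{-2}$ requires bounds on $(N{+}1)$-st order quantities, i.e.\ the conclusion of the next step of the induction. As ordered, the scheme is circular. (A smaller issue: you also assert $t\alpha^{-1/2}\sim t^{5/2}$ at that stage, but only the upper bound $\alpha\le Ct^{-3}$ is available there; the matching lower bound~\eqref{eqn_improvedestimalpha} is itself a consequence of the first-derivative decay.)

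The paper closes exactly this loop in a different order, and your proof can be repaired by adopting that ordering. One first proves the \emph{weaker} pointwise rate $P_t,Q_t,\scalphi_t=O(t^{-3/2})$ (Lemma~\ref{lemm_PQphifirstderivatives}) using the null quantities $\Afctpm$, $\Bfctpm$: their lightcone derivatives (Lemmas~\ref{lemm_Afctlightconederiv} and~\ref{lemm_Bfctlightconederiv}) contain no derivatives of higher order than the ones being estimated, so Gr\"onwall along characteristics closes at each order without loss of derivatives. The induction in $N$ (Inductive assumption~\ref{inductiveassumpt}, Lemmas~\ref{lemm_alphalambdahigherderiv} and~\ref{lemm_inductassumfornextN}) is then run entirely at this $t^{-3/2}$ level, using the quantities $\DfctNpm$. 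Only after the induction has been closed for \emph{every} $N$ --- so that $C^{N+1}$ bounds are available for each $N$ --- does the paper integrate the wave equations~\eqref{eqn_secondevolP}, \eqref{eqn_secondevolQ}, \eqref{eqn_secondevolscalphifromeqnmotion} in time, which is precisely your balancing step, to upgrade all orders simultaneously to the sharp rate $t^{-2}$ (Corollary~\ref{coro_collectasymestim}). With the balancing step moved to the very end, after an all-orders induction at the weaker rate, your outline becomes a correct proof.
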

The estimates we obtain in Proposition~\ref{prop_mainasympmetric} can be used to prove future causal geodesic completeness. The proof is given further down in Section~\ref{section_mainproofs} and proceeds similarly to that of~\cite[Prop.~4]{ringstrom_futstabnonlinscalfield}, which treats the case of the Einstein non-linear scalar field equations with a potential satisfying~$\potV(0)=\potV_0>0$, $\potV'(0)=0$, $\potV''(0)>0$.
\begin{prop}[Future causal geodesic completeness]
\label{prop_geodcomplete}
	Consider a~$\TT^2$-sym\-metric solution to the Einstein non-linear scalar field equations with a positive constant potential~$\constpot$. Choose coordinates so that the corresponding metric takes the form~\eqref{eqn_metricT2} on~$(t_0,\infty)\times \TT^3$. Assume that the metric has~$\lambda$-asymptotics with constant~$\constpot$.
	Then this spacetime is future causally geodesically complete.
\end{prop}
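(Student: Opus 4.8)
The plan is to prove future causal geodesic completeness directly: I will show that every inextendible, future-directed causal geodesic is defined for all values of its affine parameter. So let $\gamma(s)=(t(s),\theta(s),x(s),y(s))$ be such a geodesic, affinely parametrised on a maximal interval $[0,s_+)$, and normalise $g(\dot\gamma,\dot\gamma)=-\kappa$ with $\kappa\in\{0,1\}$. Since $\partial_t$ is timelike and orthogonal to the slices $\{t\}\times\TT^3$, future-directedness forces $\dot t>0$. If $s_+<\infty$, then $\gamma$ leaves every compact subset of $(t_0,\infty)\times\TT^3$; as the spatial slices are compact, this forces $t(s)\to\infty$. It therefore suffices to derive a contradiction from $s_+<\infty$ by showing that $t\to\infty$ can occur only as $s\to\infty$. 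Throughout I restrict to the region $t\ge t_1$, where the estimates of Proposition~\ref{prop_mainasympmetric} apply. Writing $A:=t^{-1/2}e^{\lambda/2}$ for the lapse-type coefficient in~\eqref{eqn_metricT2}, those estimates show that $A$ is comparable to $t^{-2}$ and $A\alpha^{-1}$ is comparable to $t$ (bounded above and below by positive multiples), while $P,Q,G,H$ converge in $\ck 1$.

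Next I exploit the symmetries. The Killing fields $\partial_x$ and $\partial_y$ of~\eqref{eqn_metricT2} yield conserved quantities $c_x=g(\dot\gamma,\partial_x)$ and $c_y=g(\dot\gamma,\partial_y)$. Because the $t$-direction is orthogonal to $\theta,x,y$ (there are no $dt\,d\theta$, $dt\,dx$ or $dt\,dy$ terms in~\eqref{eqn_metricT2}), the normalisation rearranges to $A\dot t^2=\E$, where
\begin{equation}
	\E:=\kappa+A\alpha^{-1}\dot\theta^2+\frac{c_x^2 e^{-P}}{t}+\frac{(c_y-Qc_x)^2 e^{P}}{t}.
\end{equation}
The last two terms are $O(t^{-1})$ by the convergence of $P$ and $Q$. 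Introducing the (non-conserved) momentum $p_\theta:=g(\dot\gamma,\partial_\theta)$ and solving for $\dot\theta$ gives $A\alpha^{-1}\dot\theta=p_\theta-R$, with $R$ a bounded function of $c_x,c_y,Q,G,H$, so that $A\alpha^{-1}\dot\theta^2=(p_\theta-R)^2/(A\alpha^{-1})\le C(p_\theta-R)^2/t$. Thus, once $p_\theta$ is controlled, $\E$ is bounded from above.

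The crux is therefore to bound $p_\theta$. It obeys $\dot p_\theta=\tfrac12(\partial_\theta g_{\mu\nu})\dot\gamma^\mu\dot\gamma^\nu$. The only contribution involving the large velocity $\dot t$ is $\tfrac12(\partial_\theta g_{tt})\dot t^2=-\tfrac14\lambda_\theta\,\E$, and the constraint~\eqref{eqn_spaceevollambda}, combined with the decay $P_t,Q_t,\scalphi_t=O(t^{-2})$ and the boundedness of $P_\theta,Q_\theta,\scalphi_\theta$, gives $\lambda_\theta=O(t^{-1})$; the remaining terms are quadratic in the small velocities $\dot\theta,\dot x,\dot y$ and are bounded by $C(A\alpha^{-1}\dot\theta^2+t^{-1})$. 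Converting $s$-derivatives into $t$-derivatives via $ds=\sqrt{A/\E}\,dt$ and treating the timelike ($\kappa=1$) and null ($\kappa=0$) cases separately, these estimates yield a differential inequality of the form $\lvert dp_\theta/dt\rvert\le Ct^{-3/2}(1+\lvert p_\theta\rvert)$. Since $t^{-3/2}$ is integrable at infinity, a Grönwall argument bounds $p_\theta$ uniformly on $[t_1,\infty)$, and hence $\E\le\E_{\max}$ for some constant $\E_{\max}$.

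Finally, using $\E\le\E_{\max}$ and that $A$ is comparable to $t^{-2}$,
\begin{equation}
	s_+\ge\int^{\infty}\sqrt{A/\E}\,dt\ge\E_{\max}^{-1/2}\int^{\infty}\sqrt{A}\,dt\ge C\int^{\infty}\frac{dt}{t}=\infty,
\end{equation}
contradicting $s_+<\infty$. Hence every future-directed causal geodesic is future complete. The main obstacle is the third paragraph: since $\partial_\theta$ is \emph{not} a Killing field (the metric functions genuinely depend on $\theta$), $p_\theta$ is not conserved, and one must use the block structure of~\eqref{eqn_metricT2} to isolate the single dangerous $\dot t^2$-contribution—tamed by the decay $\lambda_\theta=O(t^{-1})$—and then verify that the resulting bound is integrable even in the null case, where $\E$ may degenerate to zero.
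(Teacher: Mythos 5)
Your proposal is correct, but it takes a genuinely different route from the paper's proof. The paper argues in the \onorm\ frame~\eqref{eqns_onormframe}: it writes the geodesic equation for the frame component~$\dot c^0$, imports the Christoffel symbols of the metric~\eqref{eqn_metricT2} from the appendix of Andr\'easson--Ringstr\"om, and uses Proposition~\ref{prop_mainasympmetric} to obtain~$\Gamma^0_{00}=0$, $\absval{\Gamma^0_{0i}}\le Ct^{-3/2}$ and~$\Gamma^0_{ij}\dot c^i\dot c^j\ge0$ for large~$t$; since causality expressed in an \onorm\ frame gives~$\absval{\dot c^i}\le\dot c^0$ regardless of whether the geodesic is timelike or null, everything collapses to the single inequality~$\ddot c^0\le Ct^{-3/2}(\dot c^0)^2$, from which boundedness of~$\dot c^0$, at most linear growth of~$c^0$, and extendibility of any causal geodesic with finite affine length follow as in~\cite[Prop.~4]{ringstrom_futstabnonlinscalfield}. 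You instead stay in coordinates and trade the Christoffel computation for the conserved momenta of the Killing fields~$\partial_x,\partial_y$ plus a Gr\"onwall bound on the non-conserved momentum~$p_\theta$, closing with the divergence of~$\int\sqrt{A/\E}\,dt$. The frame formulation buys brevity and makes the null case invisible; your version is self-contained and makes explicit exactly which decay rates ($\lambda_\theta=O(t^{-1})$, boundedness of~$\alpha_\theta/\alpha$, $P_\theta$, $Q_\theta$, $G_\theta$, $H_\theta$) are consumed. One caveat on your third paragraph, which you yourself flag as the crux: the intermediate bound~$C(A\alpha^{-1}\dot\theta^2+t^{-1})$ is slightly too crude --- the terms linear in~$\dot\theta$ are only~$O(t^{-1/2})$ pointwise, and dividing any bare power of~$t$ by~$\sqrt\E$ is fatal along null geodesics with~$\E\to0$. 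Your claimed inequality~$\absval{dp_\theta/dt}\le Ct^{-3/2}(1+\absval{p_\theta})$ is nevertheless true, and uniformly in~$\kappa$: the key observation is that every dangerous term in~$\dot p_\theta$ is dominated either by~$Ct^{-1/2}\sqrt\E$ (the cross terms, via $\absval{\dot\theta}\le Ct^{-1/2}\sqrt\E$ and the analogous bound for~$\dot y$) or by a constant times one of the non-negative summands~$X$ of~$\E$ (namely~$A\alpha^{-1}\dot\theta^2$, $c_x^2e^{-P}/t$, $(c_y-Qc_x)^2e^P/t$), and for such summands one has~$X/\sqrt\E\le\sqrt X\le Ct^{-1/2}(1+\absval{p_\theta-R})$; in either case the conversion factor~$\sqrt{A/\E}$ degrades harmlessly to~$\sqrt A\le Ct^{-1}$. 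With that mechanism spelled out, no case distinction between timelike and null is needed and your argument closes.
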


\medskip

Finally, we discuss how the spacetimes corresponding to the solutions to equations~\eqref{eqn_timeevolalpha}--\eqref{eqn_eqnofmotionscalfieldwithT2} appear to a late time observer. For a proof of the \nohairconj, we need to show that to such an observer the spacetime appears as the \desitter\ spacetime. Let us therefore consider the metric
\begin{equation}
\label{eqn_desitter}
	g_{\operatorname{dS}}={-}dt^2+e^{2\H t}\bar g_{\operatorname{Eucl}},
\end{equation}
where~$(\RR^3,\bar g_{\operatorname{Eucl}})$ is the Euclidean three-dimensional space. Then~$(\RR^4,g_{\operatorname{dS}})$ is a part of the \desitter\ spacetime. An observer moves along a future directed causal curve~$\gamma=(\gamma^0,\bar\gamma)$, which we assume to be future inextendible and defined on~$(s_-,s_+)$. For~$s\nearrow s_+$, the spatial part~$\bar\gamma(s)$ converges to some~$\bar x_0\in\RR^3$.
The form of the metric reveals that the curve~$\gamma$ is contained in the set
\begin{equation}
	C_{\bar x_0,\H} = \{(t,\bar x) : \absval{\bar x - \bar x_0}\le\H^{{-}1}e^{{-}\H t}\}
\end{equation}
at all times. 
In other words, the spacetime outside of this cone-like set is unreachable to the late time observer, and therefore should be irrelevant for the discussion of the \nohairconj.

As we are interested in the late time behaviour, we replace the set~$C_{\bar x_0,\H}$ by one where we bound the time coordinate from below. Further, we introduce a margin in the spatial extension. This makes it possible to work with open sets, and we introduce
\begin{equation}
	\setHKT = \{(t,\bar x) :  t>T,\: \absval{\bar x} < K\H^{{-}1}e^{{-}\H t}\}
\end{equation}
as our main object of interest, where we assume that~$T>0$ and~$K\ge 1$. Note that translation on each hypersurface~$\{t\}\times\RR^3$ is an isometry by which we can move~$\bar x_0$ to the origin in~$\RR^3$.

When considering the Einstein non-linear scalar field equations and assuming the potential to be constant, the resulting equations are equivalent to a linear field on a background with a positive cosmological constant. This is one of the settings in which \cite[Def.~8]{andreassonringstrom_cosmicnohairT3gowdyeinsteinvlasov} applies and defines the concept of future asymptotically \desitter\ like spacetimes, of which we make use here.
\begin{defi}
\label{defi_asymptdeSitterlike}
	Let~$(M,g)$ be a time-oriented, globally hyperbolic Lorentz \mf\ which is future causally geodesically complete.
	Assume that~$(M,g)$ is a solution to Einstein's equations with a positive cosmological constant~$\Lambda$ and set~$\H=(\Lambda/3)^{1/2}$.
	Then~$(M,g)$ is said to be \textit{future asymptotically \desitter\ like} if there is a Cauchy hypersurface~$\Sigma$ in~$(M,g)$ such that for every future oriented and inextendible causal curve~$\gamma$ in~$(M,g)$, the following holds:
	\begin{itemize}
		\item There is an open set~$D$ in~$(M,g)$ such that~$J^-(\gamma)\cap J^+(\Sigma)\subset D$, and~$D$ is diffeomorphic to~$\setHKT$ for a suitable choice of~$K\ge1$ and~$T>0$.
		\item Using~$\psi:\setHKT\rightarrow D$ to denote the diffeomorphism;
		letting~$R(t)=K\H^{{-}1}e^{{-}\H t}$; 
		denoting by~$\bar g_{\operatorname{dS}}(t,\cdot)$ and~$\bar k_{\operatorname{dS}}(t,\cdot)$ the metric and \fundform\ induced on~$S_t=\{t\}\times B_{R(t)}(0)$ by~$g_{\operatorname{dS}}$; 
		denoting by~$\bar g(t,\cdot)$ and~$\bar k(t,\cdot)$ the metric and \fundform\ induced on~$S_t$ by the pullback~$\psi^*g$ of the metric~$g$ by~$\psi$; and letting~$N\in\NN$, one finds
		\begin{equation}
			\lim_{t\rightarrow\infty}(
			\norm{\bar g_{\operatorname{dS}}(t,\cdot)-\bar g(t,\cdot)}_{C^N_{\operatorname{dS}}(S_t)} + 
			\norm{\bar k_{\operatorname{dS}}(t,\cdot)-\bar k(t,\cdot)}_{C^N_{\operatorname{dS}}(S_t)}
			)=0.
		\end{equation}
	\end{itemize}

\end{defi}

\begin{rema}
	In the previous definition, the sets~$J^\pm$ denote the causal future and past.
	Further, the norm we have used is defined by
	\begin{equation}
		\norm{h}_{C^N_{\operatorname{dS}}(S_t)}
		=(\sup_{S_t}\sum_{l=0}^N \bar g_{\operatorname{dS},i_1j_1}\cdots \bar g_{\operatorname{dS},i_lj_l} \bar g_{\operatorname{dS}}^{im}\bar g_{\operatorname{dS}}^{jn}
		\nabladesitter^{i_1}\cdots\nabladesitter^{i_l} h_{ij} \nabladesitter^{j_1}\cdots\nabladesitter^{j_l} h_{mn})^{1/2},
	\end{equation}
	for a covariant vector field~$h$ on~$S_t$. Here, $\nabladesitter$ denotes the \levi\ connection induced by the metric~$\bar g_{\operatorname{dS}}(t,\cdot)$ which was given in Definition~\ref{defi_asymptdeSitterlike}.
\end{rema}
\begin{rema}
	In order to extend the concept of future asymptotically \desitter\ like to solutions with a non-constant potential, Definition~\ref{defi_asymptdeSitterlike} needs to be adapted conceptually, as the sets~$\setHKT$ are no longer meaningful. The necessary adaptations might depend on additional assumptions made on the potential~$\potV$.
\end{rema}

This definition enables us to formally state the \nohairconj. Our formulation should be compared to that of~\cite[Conj.~11]{andreassonringstrom_cosmicnohairT3gowdyeinsteinvlasov}.
\begin{conj}[Cosmic No-Hair, scalar field with positive constant potential]
\label{conj_nohair}
	Let~$\A$ denote the class of initial data such that the corrresponding maximal globally hyperbolic developments are future causally geodesically complete solutions to the Einstein non-linear scalar field equations with a positive constant potential~$\constpot$.
	Then every generic element of~$\A$ has a maximal globally hyperbolic development which is future asymptotically \desitter\ like.
\end{conj}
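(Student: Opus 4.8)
The plan is to establish the \nohairconj\ for the generic class by directly verifying Definition~\ref{defi_asymptdeSitterlike}, feeding it the two main outputs already in hand: the sharp asymptotics of Proposition~\ref{prop_mainasympmetric} and future causal geodesic completeness from Proposition~\ref{prop_geodcomplete}. First I would fix the meaning of \emph{generic}: I take it to be the condition of~$\lambda$-asymptotics with constant~$\constpot$. By Lemma~\ref{lemm_lambdaasympt} this holds automatically for~$\TT^3$-Gowdy symmetric solutions, so there the generic class is the full class, whereas for general~$\TT^2$-symmetric solutions it is imposed as an a priori decay property, exactly as in~\cite{andreassonringstrom_cosmicnohairT3gowdyeinsteinvlasov}. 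Since the maximal globally hyperbolic development of such a solution admits the areal form~\eqref{eqn_metricT2} on~$(t_0,\infty)\times\TT^3$ and is future causally geodesically complete and globally hyperbolic, the entire task reduces to producing, for each future inextendible causal curve~$\gamma$, the open set~$D$, the diffeomorphism~$\psi:\setHKT\rightarrow D$, and the~$C^N_{\operatorname{dS}}$ convergence required by the definition.

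The analytic heart of the construction is a change of time variable turning the areal foliation into a \desitter\ adapted one. From the~$\lambda$-asymptotics the squared lapse of~\eqref{eqn_metricT2} satisfies~$t^{{-}1/2}e^{\lambda/2}\sim\tfrac{3}{4\constpot}t^{{-}2}$, so proper time along the unit normal is, to leading order, $\tau=\tfrac{1}{2\H}\ln t$ with~$\H=(\constpot/3)^{1/2}$, that is $t=e^{2\H\tau}$. Under this substitution the spatial estimate~$t^{{-}1}\bar g_{ij}\to\bar g_{\infty,ij}$ becomes~$\bar g_{ij}\sim e^{2\H\tau}\bar g_{\infty,ij}$, matching the~$e^{2\H t}$ growth of the \desitter\ slices, while~$t^{{-}1}\bar k_{ij}\to\H\bar g_{\infty,ij}$ reproduces the defining relation~$\bar k_{\operatorname{dS}}=\H\bar g_{\operatorname{dS}}$. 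Thus the isotropisation predicted by the \nohairconj\ is encoded precisely in the two estimates of Proposition~\ref{prop_mainasympmetric}, and the polynomial-in-$t$ decay rates translate into exponential-in-$\tau$ decay rates.

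Next I would analyse the causal structure to build~$D$ and~$\psi$. Because the expansion contracts light cones, the spatial diameter of~$J^-(\gamma)\cap J^+(\Sigma)$ in the~$\theta$ direction, and after passing to the universal cover in the~$x,y$ directions, shrinks exponentially in~$\tau$; estimating null geodesics through the asymptotic form of the metric shows this set is contained in a cone-like region diffeomorphic to~$\setHKT$ for suitable~$K\ge1$ and~$T>0$. Since the visible region collapses to the spatial limit point~$\bar x_0$, I may work on the universal cover and ignore the toroidal topology. On the shrinking balls~$B_{R(t)}(0)$ I would then flatten~$\bar g_\infty$: choosing coordinates in which~$\bar g_\infty$ is Euclidean at~$\bar x_0$, the deviation of~$\bar g_\infty$ from~$\bar g_{\operatorname{Eucl}}$ over a ball of radius~$R(t)=K\H^{{-}1}e^{{-}\H t}$ is controlled by powers of~$R(t)$ and hence decays. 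Combining this flattening with the reparametrised asymptotics and transcribing the~$\normck N\cdot$ bounds of Proposition~\ref{prop_mainasympmetric} into the weighted norm~$\norm{\cdot}_{C^N_{\operatorname{dS}}(S_t)}$ would yield the required convergence of~$\bar g$ and~$\bar k$ to their \desitter\ counterparts.

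The main obstacle I expect is this last transcription: one must simultaneously track the exponential shrinking of the balls~$B_{R(t)}(0)$ against the exponential-in-$\tau$ convergence from Proposition~\ref{prop_mainasympmetric}, control how the~\levi\ connection~$\nabladesitter$ and the index contractions defining~$\norm{\cdot}_{C^N_{\operatorname{dS}}(S_t)}$ interact with the local flattening of~$\bar g_\infty$, and verify that the residual errors survive~$N$ spatial differentiations. A secondary point requiring care is that the reparametrisation~$t=e^{2\H\tau}$ is only asymptotic, so its subleading corrections must be shown not to spoil the limit; here the decay~$\normck N{\alpha_t/\alpha+3/t}\le C_N t^{{-}2}$ and~$\normck N{\lambda_t+3/t}\le C_N t^{{-}2}$ is what guarantees that the lapse and the time change are controlled to all orders. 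The absence of a non-negative pressure condition, which complicates the underlying asymptotics, has already been absorbed into Propositions~\ref{prop_mainasympmetric} and~\ref{prop_geodcomplete}, so it does not reappear at this geometric stage.
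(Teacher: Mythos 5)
Your proposal follows essentially the same route as the paper: the paper's proof of Theorem~\ref{theo_nohair} (the provable form of the conjecture, with \emph{generic} interpreted exactly as you do, via~$\lambda$-asymptotics and Lemma~\ref{lemm_lambdaasympt}) likewise feeds Proposition~\ref{prop_mainasympmetric} and Proposition~\ref{prop_geodcomplete} into Definition~\ref{defi_asymptdeSitterlike}, constructing the set~$D\cong\setHKT$ through normal coordinates about a late-time limit point on~$(\TT^3,\bar g_\infty)$ and then transcribing the~$\normck N\cdot$ estimates on~$\bar g$ and~$\bar k$ into the~$\norm{\cdot}_{C^N_{\operatorname{dS}}(S_t)}$ norms, deferring the detailed verification to the corresponding argument in~\cite{andreassonringstrom_cosmicnohairT3gowdyeinsteinvlasov}. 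Your explicit reparametrisation~$t=e^{2\H\tau}$ and the obstacles you flag (shrinking balls versus exponential convergence, interaction of~$\nabladesitter$ with the flattening) are precisely the technical content that the cited reference carries out.
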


\begin{rema}
	We do not expect the statement to hold for each element of~$\A$, but allow for a small subset to have different behaviour.
	The exact definition of this subset of exceptions as well as the notion of 'smallness' might depend on the specific class~$\A$.
	
	In the case of Einstein's vacuum equations with a positive cosmological constant, which is the special case of a constant potential and a vanishing scalar field, a famous counterexample to the conjectured behaviour are the Nariai spacetimes. These are time-oriented, globally hyperbolic, and causally geodesically complete, but without future asymptotically \desitter\ like behaviour.  
	Details on these spacetimes as well as a counterexample solving the Einstein-Maxwell equations with a positive cosmological constant are given in~\cite[p.~126f.]{ringstrom_futstabnonlinscalfield}.
\end{rema}

For~$\TT^3$-Gowdy symmetric solutions, we prove that the~\nohairconj\ holds. Our statement even extends to~$\TT^2$-symmetric spacetimes which have~$\lambda$-asymptotics, see Definition~\ref{defi_lambdaasympt}, a property which is shown to hold in~$\TT^3$-Gowdy symmetric spacetimes in Lemma~\ref{lemm_lambdaasympt}. The proof of the following theorem is carried out in Section~\ref{section_mainproofs}.
\begin{theo}[Cosmic No-Hair, scalar field with positive constant potential]
\label{theo_nohair}
	Consider a~$\TT^2$-sym\-metric solution to the Einstein non-linear scalar field equations with a positive constant potential~$\constpot$.  Choose coordinates so that the corresponding metric takes the form~\eqref{eqn_metricT2} on~$(t_0,\infty)\times \TT^3$. Assume that the metric has~$\lambda$-asymptotics with constant~$\constpot$.
	Then the solution is future asymptotically \desitter\ like.
	In other words, the \nohairconj\ holds in this class of initial data.
\end{theo}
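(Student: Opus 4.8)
The plan is to verify directly that the solution satisfies the two bullet points of Definition~\ref{defi_asymptdeSitterlike}. Several of the prerequisites are already in hand: the spacetime is globally hyperbolic and time-oriented by the areal setup, it is future causally geodesically complete by Proposition~\ref{prop_geodcomplete}, and the complete asymptotic expansion of the induced metric and \fundform\ on the areal slices is furnished by Proposition~\ref{prop_mainasympmetric}. I would take $\Sigma=\{t=t_1\}$ as the required \cauchyhyp, a constant areal-time slice, which is Cauchy because the areal foliation is a foliation by Cauchy hypersurfaces of the globally hyperbolic development. The whole argument then reduces to two tasks: (i) showing that, above $\Sigma$, the causal past of any future inextendible causal curve $\gamma$ is confined to a region that shrinks to a single symmetry orbit as $t\to\infty$; and (ii) constructing a diffeomorphism $\psi$ from a \desitter\ set $\setHKT$ onto an open set $D$ containing that region, under which $\psi^*g$ converges to $g_{\operatorname{dS}}$ in the norm $\norm{\cdot}_{C^N_{\operatorname{dS}}(S_t)}$.

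For the confinement in task (i), the key observation is that the asymptotics $t^3\alpha\to\alpha_\infty$ from Proposition~\ref{prop_mainasympmetric} make $\sqrt\alpha$ integrable in $t$ near infinity, since $\sqrt\alpha\sim\sqrt{\alpha_\infty}\,t^{-3/2}$. In the $(t,\theta)$-part $t^{-1/2}e^{\lambda/2}(-dt^2+\alpha^{-1}d\theta^2)$ of the metric the null condition reads $d\theta=\pm\sqrt\alpha\,dt$, so the $\theta$-width of the past light cone reaching back to time $t$ is bounded by $\int_t^\infty\sqrt\alpha\,dt'=O(t^{-1/2})$. Combined with the analogous, even faster, contraction in the $x,y$ directions (where the metric coefficients grow like $t$), this shows that $\bar\gamma(s)$ converges to a point and that $J^-(\gamma)\cap J^+(\Sigma)$ lies in an open set $D$ that shrinks around a limiting orbit $\theta=\theta_0$ at rate $e^{-\H\tau}$, where $\tau$ denotes proper time along the normal congruence (the ``$t$'' of Definition~\ref{defi_asymptdeSitterlike}). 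After a translation in $x,y$ and passage to a coordinate chart around $\theta_0$, this $D$ is diffeomorphic to $\setHKT$ for suitable $K\ge1$ and $T>0$, establishing the first bullet.

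Next I would construct $\psi$ explicitly through a time-and-space reparametrisation. From the $\lambda$-asymptotics one has $-g_{tt}=t^{-1/2}e^{\lambda/2}\sim\frac{3}{4\constpot}\,t^{-2}$, so proper time satisfies $\tau\sim\frac{1}{2\H}\ln t$, that is $t\sim e^{2\H\tau}$ with $\H=(\constpot/3)^{1/2}$. Under this reparametrisation the areal spatial metric $\bar g\sim t\,\bar g_\infty$ becomes $\sim e^{2\H\tau}\bar g_\infty$, matching the \desitter\ slice metric $e^{2\H\tau}\bar g_{\operatorname{Eucl}}$ once a constant linear change of the $(\theta,x,y)$-coordinates brings the frozen tensor $\bar g_\infty(\theta_0)$ into Euclidean form. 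This is exactly the content of $\psi:\setHKT\to D$, and it simultaneously arranges $R(\tau)=K\H^{-1}e^{-\H\tau}$ to dominate the $e^{-\H\tau}$-contraction rate found in task (i).

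Finally, to obtain the convergence in the second bullet I would feed the estimates of Proposition~\ref{prop_mainasympmetric} into the $C^N_{\operatorname{dS}}$ norm. The pointwise statements $t^{-1}\bar g_{ij}\to\bar g_{\infty,ij}$ and $t^{-1}\bar k_{ij}\to\H\bar g_{\infty,ij}$ yield precisely the \desitter\ relation $\bar k_{\operatorname{dS}}=\H\bar g_{\operatorname{dS}}$ in the limit, and the remaining discrepancy has two small sources: the residual $\theta$-dependence of $\bar g_\infty$, which is $O(e^{-\H\tau})$ because $\theta$ is confined to a region of that width, and every spatial derivative in $\norm{\cdot}_{C^N_{\operatorname{dS}}(S_t)}$, which is contracted against $\bar g_{\operatorname{dS}}^{-1}$ and hence carries an extra factor $e^{-\H\tau}$. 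The hard part will be to make this last point fully rigorous: one must track how $\psi$ and its derivatives of all orders interact with the weighted \desitter\ covariant derivatives $\nabladesitter$, control the non-Gowdy twist contributions $G,H$ (whose freezing is guaranteed by the $t^{-3/2}$-estimates of Proposition~\ref{prop_mainasympmetric}), and verify that every term of $\norm{\bar g_{\operatorname{dS}}-\bar g}_{C^N_{\operatorname{dS}}(S_t)}+\norm{\bar k_{\operatorname{dS}}-\bar k}_{C^N_{\operatorname{dS}}(S_t)}$ tends to zero uniformly on $S_t$. This is where the argument parallels, but must be adapted from, the Vlasov treatment of~\cite{andreassonringstrom_cosmicnohairT3gowdyeinsteinvlasov}.
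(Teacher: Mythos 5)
Your proposal is correct and follows essentially the same route as the paper: the paper likewise proves the theorem by verifying Definition~\ref{defi_asymptdeSitterlike}, noting that the slices $\{t\}\times\TT^3$ are Cauchy hypersurfaces, constructing an open set~$D$ diffeomorphic to~$\setHKT$ around a limit point on the late-time hypersurface~$(\TT^3,\bar g_\infty)$, and feeding the estimates of Proposition~\ref{prop_mainasympmetric} into the $C^N_{\operatorname{dS}}$-norms, deferring the detailed bookkeeping to the proof of Theorem~14 in~\cite{andreassonringstrom_cosmicnohairT3gowdyeinsteinvlasov} exactly as you do. The only cosmetic difference is that the paper obtains the chart around the limit point from normal coordinates of~$\bar g_\infty$ (after bounding its injectivity radius) rather than from your constant linear flattening of~$\bar g_\infty(\theta_0)$; both suffice because the relevant sets shrink at the rate~$e^{-\H\tau}$.
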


\subsection{Notation}
\label{subsection_notationmean}

Throughout the paper, we make use of the following notation for the mean of a scalar function~$f$ on~$\SS^1$:
\begin{equation}
	\mean f=\frac1{2\pi}\int_{\SS^1}f d\theta.
\end{equation}
We even extend this definition to include the spacelike mean of functions defined on~$I\times\SS^1$, writing~$\mean f=\mean{f(t,\cdot)}$.

\subsection*{Acknowledgements}

The author wishes to thank
Hans Ringström for ongoing supervision, for suggesting the topic, as well as for many interesting discussions.

Further, the author would like to acknowledge the support of the Göran Gustafsson Foundation for
Research in Natural Sciences and Medicine. 
This research was supported by the Swedish Research Council,
Reference number 621-2012-3103.

\section{Lightcone evolution equations}

The variables appearing in the metric~\eqref{eqn_metricT2} evolve according to the evolution equations~\eqref{eqn_timeevolalpha}--\eqref{eqn_eqnofmotionscalfieldwithT2}.
In addition to the~$t$- and~$\theta$-derivatives of these variables, we are interested in their evolution along characteristics. In particular, several of our arguments revolve around the quantities
\begin{equation}
\label{eqn_defiAfctpm}
	\Afctpm=(\partpm P)^2 + e^{2P}(\partpm Q)^2
\end{equation}
and 
\begin{equation}
\label{eqn_defiBfctpm}
	\Bfctpm=(\partpm \scalphi)^2,
\end{equation}
where
\begin{equation}
\label{eqn_defilightconederiv}
	\partpm=\partt\pm\alpha^{1/2}\parttheta.
\end{equation}
Similar quantities are introduced in~\cite{andreassonringstrom_cosmicnohairT3gowdyeinsteinvlasov}, where the case of Vlasov matter is discussed. For our discussion, we make use of their lightcone derivatives. We frequently use the following abbreviated notation for the different partial derivatives:
\begin{equation}
	P_t\coloneqq\partt P,\qquad P_\theta\coloneqq\parttheta P,\qquad P_\pm\coloneqq\partpm P,
\end{equation}
and equivalently for all other variables.
\begin{lemm}
\label{lemm_Afctlightconederiv}
	Consider a solution to the evolution equations~\eqref{eqn_timeevolalpha}--\eqref{eqn_eqnofmotionscalfieldwithT2}. Then
	\begin{align}
	\partpm\Afctmp
	  ={}&{-}(\frac2t-\frac{\alpha_t}\alpha)\Afctmp \mp \frac2t\alpha^{1/2}(P_\theta P_\mp +e^{2P}Q_\theta Q_\mp )\\
			&{}+\frac{e^{{-}P+\lambda/2}J^2}{t^{7/2}} P_\mp  - \frac{e^{P+\lambda/2}(K-QJ)^2}{t^{7/2}} P_\mp   + 2\frac{e^{\lambda/2}J(K-QJ)}{t^{7/2}}e^P Q_\mp .
\end{align}
\end{lemm}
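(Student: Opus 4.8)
The plan is to differentiate $\Afctmp=(P_\mp)^2+e^{2P}(Q_\mp)^2$ directly with the operator $\partpm$ and to feed in the second-order evolution equations~\eqref{eqn_secondevolP} and~\eqref{eqn_secondevolQ}. By the product rule,
\begin{equation}
	\partpm\Afctmp = 2P_\mp\,\partpm\partmp P + 2e^{2P}P_\pm (Q_\mp)^2 + 2e^{2P}Q_\mp\,\partpm\partmp Q,
\end{equation}
so the whole computation reduces to finding the mixed second-order lightcone derivatives $\partpm\partmp P$ and $\partpm\partmp Q$.

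First I would prepare two ingredients. The first is to expand the divergence-form equations~\eqref{eqn_secondevolP} and~\eqref{eqn_secondevolQ}, performing the $t$- and $\theta$-differentiations of the $\alpha$- and $e^{2P}$-weights, and then solve for the wave operators $P_{tt}-\alpha P_{\theta\theta}$ and $Q_{tt}-\alpha Q_{\theta\theta}$ in terms of first derivatives and the twist terms. The second is to observe that, since $\alpha$ depends on $t$ and $\theta$, the operators $\partt\pm\alpha^{1/2}\parttheta$ do not commute; a direct expansion, in which the mixed partials $f_{t\theta}=f_{\theta t}$ cancel, gives
\begin{equation}
	\partpm\partmp f = (f_{tt}-\alpha f_{\theta\theta}) \mp \tfrac12\alpha^{-1/2}\alpha_t f_\theta - \tfrac12\alpha_\theta f_\theta
\end{equation}
for any smooth $f$.

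Combining the two ingredients, the $\tfrac12\alpha_\theta f_\theta$ contribution cancels against the analogous term coming from the wave operator, and the two surviving $\alpha_t$-terms collapse by the identity $\tfrac12\alpha^{-1}\alpha_t f_t \mp \tfrac12\alpha^{-1/2}\alpha_t f_\theta = \tfrac12\tfrac{\alpha_t}{\alpha}f_\mp$, using $f_\mp=f_t\mp\alpha^{1/2}f_\theta$. This produces clean formulas for $\partpm\partmp P$ and $\partpm\partmp Q$ in which the coefficient of $\tfrac{\alpha_t}{\alpha}$ is $\tfrac12 P_\mp$ (resp. $\tfrac12 Q_\mp$) and the twist terms carry exactly the factors appearing in the statement. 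Substituting back into the product-rule expansion, the $\tfrac{\alpha_t}{\alpha}$ contributions assemble into $\tfrac{\alpha_t}{\alpha}\Afctmp$; the remaining leading terms $-2t^{-1}P_\mp P_t$ and $-2t^{-1}e^{2P}Q_\mp Q_t$ are precisely the expansion of $-\tfrac2t\Afctmp\mp\tfrac2t\alpha^{1/2}(P_\theta P_\mp+e^{2P}Q_\theta Q_\mp)$, as one reads off from $P_\mp\pm\alpha^{1/2}P_\theta=P_t$ and $Q_\mp\pm\alpha^{1/2}Q_\theta=Q_t$; and the twist contributions match term by term.

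The one point requiring care, and the main obstacle, is the batch of $P$–$Q$ coupling terms that do not appear in the final formula, namely
\begin{equation}
	2e^{2P}\left[P_\mp(Q_t^2-\alpha Q_\theta^2) + P_\pm(Q_\mp)^2 - 2Q_\mp P_t Q_t + 2\alpha Q_\mp P_\theta Q_\theta\right].
\end{equation}
I would show this bracket vanishes identically by passing to null components: with $Q_t^2-\alpha Q_\theta^2=Q_+Q_-$ and the polarisation identity $-P_tQ_t+\alpha P_\theta Q_\theta=-\tfrac12(P_+Q_-+P_-Q_+)$, the bracket becomes $P_\mp Q_+Q_-+P_\pm(Q_\mp)^2-Q_\mp(P_+Q_-+P_-Q_+)$, which is readily checked to be zero for either choice of sign. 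Keeping consistent track of the $\pm/\mp$ conventions throughout is the only genuine source of difficulty; the rest is bookkeeping.
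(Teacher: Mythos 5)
Your proposal is correct and follows precisely the route the paper indicates for this lemma: a direct (if lengthy) computation from the second-order evolution equations~\eqref{eqn_secondevolP} and~\eqref{eqn_secondevolQ}, which the paper itself does not write out but delegates to the analogous Vlasov-matter computation in the cited reference. Your writeup in fact supplies the omitted details correctly --- the commutator formula for $\partpm\partmp$, the collapse of the $\alpha_t$-terms into $\tfrac{\alpha_t}{\alpha}\Afctmp$, and the identical vanishing of the $P$--$Q$ coupling bracket via null components all check out for both sign choices.
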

The proof is a rather lengthy computation, using the evolution equations~\eqref{eqn_secondevolP} and~\eqref{eqn_secondevolQ}.
We skip the details and refer the reader to the proof of~\cite[Lemma~38]{andreassonringstrom_cosmicnohairT3gowdyeinsteinvlasov}, or alternatively to Appendix~\ref{appendix_evolequsscalarfield}, which provides the means of directly transferring the statement of~\cite[Lemma~38]{andreassonringstrom_cosmicnohairT3gowdyeinsteinvlasov} from Vlasov matter to non-linear scalar field matter.
\begin{lemm}
\label{lemm_Bfctlightconederiv}
	Consider a solution to the evolution equations~\eqref{eqn_timeevolalpha}--\eqref{eqn_eqnofmotionscalfieldwithT2}. Then
	\begin{equation}
		\partpm\Bfctmp = {-}(\frac2t-\frac{\alpha_t}{\alpha}) \Bfctmp \mp\frac2t\alpha^{1/2}\scalphi_\theta\scalphi_\mp -2t^{{-}1/2}e^{\lambda/2}\potV'\scalphi_\mp.
	\end{equation}
\end{lemm}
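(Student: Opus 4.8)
The plan is to follow the same strategy as Lemma~\ref{lemm_Afctlightconederiv}, but the computation is substantially shorter since $\Bfctmp=(\scalphi_\mp)^2$ involves only the single scalar $\scalphi$ rather than the pair $P,Q$, and since the twist constants $J,K$ do not enter the equation of motion directly. First I would apply the product rule, $\partpm\Bfctmp = 2\scalphi_\mp\,\partpm\scalphi_\mp$, so that the whole problem reduces to understanding the second-order quantity $\partpm\scalphi_\mp = (\partt\pm\alpha^{1/2}\parttheta)(\scalphi_t\mp\alpha^{1/2}\scalphi_\theta)$.

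Next I would expand this operator explicitly into its four terms, using $\partt(\alpha^{1/2}\scalphi_\theta)=\frac{\alpha_t}{2\alpha^{1/2}}\scalphi_\theta+\alpha^{1/2}\scalphi_{t\theta}$ and $\parttheta(\alpha^{1/2}\scalphi_\theta)=\frac{\alpha_\theta}{2\alpha^{1/2}}\scalphi_\theta+\alpha^{1/2}\scalphi_{\theta\theta}$. The key structural feature is that the two mixed derivatives $\pm\alpha^{1/2}\scalphi_{t\theta}$ cancel, and that — regardless of the sign choice — the surviving second-order part is exactly $\scalphi_{tt}-\alpha\scalphi_{\theta\theta}$, accompanied by the first-order remainder $\mp\frac{\alpha_t}{2\alpha^{1/2}}\scalphi_\theta-\frac{\alpha_\theta}{2}\scalphi_\theta$. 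This is precisely the hyperbolic combination controlled by the equation of motion.

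I would then substitute \eqref{eqn_eqnofmotionscalfieldwithT2}, rearranged as $\scalphi_{tt}-\alpha\scalphi_{\theta\theta}=-\frac1t\scalphi_t+\frac{\alpha_t}{2\alpha}\scalphi_t+\frac{\alpha_\theta}{2}\scalphi_\theta-t^{-1/2}e^{\lambda/2}\potV'$. After this substitution the two $\frac{\alpha_\theta}{2}\scalphi_\theta$ terms cancel, and the remaining first-order terms can be reassembled into multiples of $\scalphi_\mp$: the identity $\frac{\alpha_t}{2\alpha}\scalphi_t\mp\frac{\alpha_t}{2\alpha^{1/2}}\scalphi_\theta=\frac{\alpha_t}{2\alpha}\scalphi_\mp$ handles the $\alpha_t$ terms, while writing $\scalphi_t=\scalphi_\mp\pm\alpha^{1/2}\scalphi_\theta$ converts $-\frac1t\scalphi_t$ into $-\frac1t\scalphi_\mp$ plus a residual $\mp\frac1t\alpha^{1/2}\scalphi_\theta$. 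This yields $\partpm\scalphi_\mp=\left(-\frac1t+\frac{\alpha_t}{2\alpha}\right)\scalphi_\mp\mp\frac1t\alpha^{1/2}\scalphi_\theta-t^{-1/2}e^{\lambda/2}\potV'$.

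Finally, multiplying by $2\scalphi_\mp$ turns the leading coefficient into $-\left(\frac2t-\frac{\alpha_t}{\alpha}\right)\Bfctmp$ and the remaining two terms into $\mp\frac2t\alpha^{1/2}\scalphi_\theta\scalphi_\mp$ and $-2t^{-1/2}e^{\lambda/2}\potV'\scalphi_\mp$, which is exactly the claimed identity. There is no genuine obstacle here; the only points demanding care are the cancellation of the mixed term $\scalphi_{t\theta}$ and the bookkeeping of the $\pm/\mp$ signs, so that both sign cases collapse onto the single operator $\scalphi_{tt}-\alpha\scalphi_{\theta\theta}$ before the equation of motion is invoked.
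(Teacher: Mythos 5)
Your computation is correct and follows exactly the route the paper intends: the paper's proof of this lemma is simply ``a straight-forward computation using the equation of motion~\eqref{eqn_eqnofmotionscalfieldwithT2}, left to the reader,'' and your expansion of $\partpm\scalphi_\mp$, the cancellation of the mixed term $\alpha^{1/2}\scalphi_{t\theta}$ and of the $\tfrac{\alpha_\theta}{2}\scalphi_\theta$ terms, and the regrouping into multiples of $\scalphi_\mp$ is precisely that computation. No gaps; the sign bookkeeping is handled correctly in both cases of $\pm$.
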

The proof, a straight-forward computation using the equation of motion~\eqref{eqn_eqnofmotionscalfieldwithT2}, is left to the reader.

Additionally, we define the quantities
\begin{equation}
	\Afcttwopm\coloneqq(P_{tt}\pm\alpha^{1/2}P_{t\theta})^2+ e^{2P}(Q_{tt}\pm\alpha^{1/2}Q_{t\theta})^2
\end{equation}
as well as
\begin{equation}
	\Bfcttwopm\coloneqq(\scalphi_{tt}\pm\alpha^{1/2}\scalphi_{t\theta})^2.
\end{equation}
We use their lightcone derivatives in an integration approach along characteristic curves to obtain uniform bounds for the second order derivatives of~$P$ and~$Q$, Proposition~\ref{prop_estimsecondderivPQ}, and of~$\scalphi$, Proposition~\ref{prop_estimsecondderivscalphi}.

In the same spirit, we use the lightcone derivatives of
\begin{equation}
	\DfctNpm \coloneqq\left[\parttheta^N\scalphi_t \pm \parttheta^N(\alpha^{1/2}\scalphi_\theta )\right]^2
\end{equation}
to prove precise late time estimates for higher order derivatives of~$\scalphi$ in Lemma~\ref{lemm_inductassumfornextN}.

\section{Preliminary late time properties}
\label{section_consDECprelimestim}

Before diving into a detailed discussion of the individual variables appearing in the metric and how their interlaced asymptotic behaviour can be used to prove global existence and uniqueness on the one hand as well as the \nohairconj\ on the other hand, we state three preliminary lemmata on the late time behaviour of~$\alpha$ and~$\lambda$. These statements are used in several instances further down, adapted to the respective circumstance.

\begin{lemm}
\label{lemm_estimatealphaVnonnegative}
	\conslocalsolTtwo	
	Assume that the potential~$\potV$ is non-negative. Then~$\alpha$ is monotone decreasing on~$I$, \ie
  	\begin{equation}
		\alpha(t,\theta)\le \alpha(t_0,\theta),
	\end{equation}
	for all~$\theta\in\SS^1$ and~$t\ge t_0$, where~$t,t_0\in I$.
\end{lemm}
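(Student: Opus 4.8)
The plan is to obtain the monotonicity directly from the sign of the right-hand side of the evolution equation~\eqref{eqn_timeevolalpha} governing $\alpha_t/\alpha$, with essentially no further analysis required. The whole point is that, under the stated hypotheses, every summand in that equation is non-positive.

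First I would observe that, since $J$ and $K$ are constants, the quantities $J^2$ and $(K-QJ)^2$ are genuine squares and hence non-negative, while the factors $e^{\lambda/2-P}$, $e^{\lambda/2+P}$, $e^{\lambda/2}$, $t^{-5/2}$ and $t^{1/2}$ are all strictly positive throughout $I\times\SS^1$. Invoking the hypothesis $\potV\ge 0$, each of the three terms on the right-hand side of~\eqref{eqn_timeevolalpha} is then the negative of a non-negative quantity, so that $\alpha_t/\alpha\le 0$ on all of $I\times\SS^1$.

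Second, since $\alpha>0$ by assumption, the inequality $\alpha_t/\alpha\le 0$ is equivalent to $\partt(\ln\alpha)\le 0$; hence for each fixed $\theta\in\SS^1$ the map $t\mapsto\ln\alpha(t,\theta)$ is non-increasing on $I$. Integrating this from $t_0$ to $t$ for $t\ge t_0$ (both in $I$) and exponentiating yields $\alpha(t,\theta)\le\alpha(t_0,\theta)$, which is precisely the claimed bound.

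I do not expect any genuine obstacle here: the statement is an immediate consequence of the structure of~\eqref{eqn_timeevolalpha}. The only points deserving explicit mention are that the constancy of $J$ and $K$ makes the two twist contributions manifestly non-positive, and that the sign hypothesis on $\potV$ is exactly what is needed to control the remaining term. Without the assumption $\potV\ge 0$ the monotonicity of $\alpha$ could fail, which is why this hypothesis is singled out in the statement.
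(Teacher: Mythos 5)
Your proof is correct and follows essentially the same route as the paper: both read off $\alpha_t/\alpha\le 0$ from the sign structure of the right-hand side of~\eqref{eqn_timeevolalpha} (the twist terms are manifestly non-positive, and the potential term is non-positive by the hypothesis $\potV\ge 0$) and then integrate in time. The paper phrases the integration via $\partial_t\alpha^{-1/2}\ge 0$ rather than $\partial_t\ln\alpha\le 0$, and along the way records an auxiliary lower bound on $\alpha^{-1/2}e^{\lambda/2}$ only because it is reused in the following lemma; these are cosmetic differences.
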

The statement and proof of this statement and the next are similar to those of~\cite[Prop.~42]{andreassonringstrom_cosmicnohairT3gowdyeinsteinvlasov}.
\begin{proof}
The evolution equations~\eqref{eqn_timeevolalpha} and~\eqref{eqn_timeevollambda} yield that~$\lambda_t-\alpha_t/\alpha\ge0$. Due to compactness of the sphere $\SS^1$, this implies
\begin{equation}
\label{eqn_auxilestimalphaone}
	(\alpha^{-1/2}e^{\lambda/2})(t,\theta)\ge c_0,
\end{equation}
for some constant $c_0>0$ depending on the initial data at time~$t_0\in I$, and all $(t,\theta)\in I\times\SS^1$ with $t\ge t_0$. From the evolution of $\alpha_t/\alpha$, equation~\eqref{eqn_timeevolalpha}, we conclude that
\begin{equation}
\label{eqn_auxilestimalphatwo}
	{-}\alpha_t/\alpha\ge 4t^{1/2}e^{\lambda/2}\potV(\scalphi)
\end{equation}
which is non-negative by assumption. Consequently,
\begin{equation}
	\partial_t\alpha^{-1/2}={-}\frac{\alpha_t}{2\alpha}\alpha^{-1/2}\ge 0,
\end{equation}
and integration with respect to time concludes the proof.
\end{proof}
We devote one of the following chapters to the case of a constant potential. In this case, we can strengthen the statement of the previous lemma and find polynomial decay of~$\alpha$.
\begin{lemm}
\label{lemm_estimatealphaVconstant}
	\conslocalsolTtwo	
	Assume that the potential~$\potV$ is positive and constant, $\potV=\constpot>0$. Then there is a constant $C>0$ depending only on the initial data at time~$t_0$ such that
  	\begin{equation}
		\alpha(t,\theta)\le Ct^{-3},
	\end{equation}
	for all~$\theta\in\SS^1$ and~$t\ge t_0$, where~$t,t_0\in I$.
\end{lemm}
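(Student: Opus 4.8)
The plan is to reuse the two auxiliary estimates already established in the proof of Lemma~\ref{lemm_estimatealphaVnonnegative} and to close them into a self-contained differential inequality for~$\alpha$ alone, which then integrates directly to the polynomial rate. The constant potential is what makes this possible: it converts the purely qualitative monotonicity of the general case into a quantitative lower bound on~$-\alpha_t/\alpha$.

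First I would recall the lower bound~\eqref{eqn_auxilestimalphaone}, namely $(\alpha^{-1/2}e^{\lambda/2})(t,\theta)\ge c_0$ for a constant~$c_0>0$ depending only on the data at~$t_0$; this came from the monotonicity $\lambda_t-\alpha_t/\alpha\ge0$ of~\eqref{eqn_timeevolalpha} and~\eqref{eqn_timeevollambda} together with compactness of~$\SS^1$, and it is equivalent to $e^{\lambda/2}\ge c_0\alpha^{1/2}$. On the other hand, specialising~\eqref{eqn_auxilestimalphatwo} to the constant potential~$\potV=\constpot$ gives $-\alpha_t/\alpha\ge 4\constpot t^{1/2}e^{\lambda/2}$.

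The key step is to substitute the first bound into the second so as to eliminate~$\lambda$, obtaining $-\alpha_t\ge 4c_0\constpot\, t^{1/2}\alpha^{3/2}$, which I would rewrite as $\partt(\alpha^{-1/2})={-}\tfrac12\alpha^{-3/2}\alpha_t\ge 2c_0\constpot\, t^{1/2}$. Integrating in~$t$ at fixed~$\theta$ from~$t_0$ to~$t$ is then elementary and yields $\alpha^{-1/2}(t,\theta)\ge \alpha^{-1/2}(t_0,\theta)+\tfrac43 c_0\constpot(t^{3/2}-t_0^{3/2})$.

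Finally I would read off the bound on~$\alpha$. For~$t$ bounded away from~$t_0$, say $t^{3/2}\ge 2t_0^{3/2}$, the right-hand side is at least $\tfrac23 c_0\constpot\, t^{3/2}$, whence $\alpha\le \tfrac94 c_0^{{-}2}\constpot^{{-}2}\,t^{{-}3}$. On the remaining compact range $t_0\le t\le 2^{2/3}t_0$ I would instead invoke Lemma~\ref{lemm_estimatealphaVnonnegative} to bound $\alpha(t,\theta)\le\alpha(t_0,\theta)$ and note that~$t^{{-}3}$ is bounded below there, so again $\alpha\le Ct^{{-}3}$ with a constant depending only on $\alpha(t_0,\cdot)$ and~$t_0$; taking~$C$ to be the larger of the two constants proves the claim. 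There is no genuine analytic obstacle here — the substitution eliminating~$e^{\lambda/2}$ is the one creative step, and the only point requiring a little care is the splitting into a late-time regime, where the $t^{3/2}$ growth does the work, and a short initial regime handled by the earlier monotonicity.
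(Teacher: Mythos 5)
Your proof is correct and takes essentially the same route as the paper: both combine the two auxiliary bounds~\eqref{eqn_auxilestimalphaone} and~\eqref{eqn_auxilestimalphatwo} (with $\potV=\constpot$) into the differential inequality $\partial_t(\alpha^{-1/2})\ge c\,t^{1/2}$ and integrate in time. Your explicit splitting into a late-time regime, where the $t^{3/2}$ growth dominates, and a short initial regime, handled by the monotonicity of Lemma~\ref{lemm_estimatealphaVnonnegative}, simply spells out the final step that the paper compresses into ``integration with respect to time concludes the proof.''
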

\begin{proof}
As in the previous proof, we conclude the inequalities~\eqref{eqn_auxilestimalphaone} and~\eqref{eqn_auxilestimalphatwo}.  
By assumption, the potential~$\potV$ is a positive constant, and consequently
\begin{equation}
	\partial_t\alpha^{-1/2}={-}\frac{\alpha_t}{2\alpha}\alpha^{-1/2}\ge c_1t^{1/2},
\end{equation}
for all $(t,\theta)\in I\times\SS^1$, where the constant $c_1>0$ depends only on the initial data at time~$t_0$. Integration with respect to time concludes the proof.
\end{proof}

In the next lemma, we take a first step towards our proof of the \nohairconj. We assume for the moment that the solution to the evolution equations extends to all future times, and we assume symmetry to hold. Later, we will consider constant potentials, but for the next lemma, it is enough to assume that the potential is bounded from above towards the future.
\begin{lemm}
\label{lemm_lambdaasympt}
	Consider a solution to equations~\eqref{eqn_timeevolalpha}--\eqref{eqn_eqnofmotionscalfieldwithT2} on~$[t_0,\infty)\times\SS^1$. 
	Assume that the solution has~$\TT^3$-Gowdy symmetry, \ie $J=0=K$, and assume that the potential is bounded from above towards the future, \ie
	\begin{equation}
		\potV\left[\scalphi(t,\theta)\right]\le\maxpot,
	\end{equation}
	for all~$(t,\theta)\in [t_0,\infty)\times\SS^1$.
	Then there is, for every~$\eps>0$, a time~$T>t_0$ such that
	\begin{equation}
		\lambda(t,\theta)\ge {-} 3\ln t + 2 \ln(\frac{3}{4\maxpot})-\eps,
	\end{equation}
	for all~$(t,\theta)\in[t_0,\infty)\times\SS^1$.
\end{lemm}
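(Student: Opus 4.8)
The plan is to turn the evolution equation~\eqref{eqn_timeevollambda} for~$\lambda$ into a scalar differential inequality in the time variable, with~$\theta$ playing the role of a parameter, and then to compare it against an explicitly solvable ordinary differential equation whose solution already exhibits the claimed asymptotics. The guiding heuristic is that the right-hand side of the desired bound, namely~$-3\ln t + 2\ln(3/(4\maxpot))$, is exactly the value of~$\lambda$ for the spatially isotropic \desitter\ solution with potential~$\maxpot$; increasing the potential only forces~$\lambda$ to decay faster, so the upper bound~$\potV\le\maxpot$ should translate into a lower bound on~$\lambda$.

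Concretely, I would first use the Gowdy hypothesis~$J=K=0$ to discard the twist terms in~\eqref{eqn_timeevollambda}, leaving
\begin{equation}
	\lambda_t = t\left[P_t^2+\alpha P_\theta^2 + e^{2P}(Q_t^2+\alpha Q_\theta^2) + 2(\scalphi_t^2+\alpha\scalphi_\theta^2)\right] - 4t^{1/2}e^{\lambda/2}\potV(\scalphi).
\end{equation}
Since~$t>0$ and~$\alpha>0$, the bracketed expression is a sum of squares and hence non-negative; dropping it and using~$\potV\le\maxpot$ together with~$e^{\lambda/2}\ge0$ yields the pointwise differential inequality~$\lambda_t \ge -4\maxpot\, t^{1/2}e^{\lambda/2}$, valid at every~$(t,\theta)$. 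Note that this step uses only the upper bound on the potential, not its sign.

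Next, I would solve the associated comparison ODE~$\mu_t = -4\maxpot\, t^{1/2}e^{\mu/2}$. It is separable: rewriting it as~$\frac{d}{dt}e^{-\mu/2} = 2\maxpot\, t^{1/2}$ and integrating gives~$e^{-\mu/2} = C_0 + \frac{4\maxpot}{3}t^{3/2}$ for a constant~$C_0$, so~$\mu$ exists on all of~$[t_0,\infty)$ and
\begin{equation}
	\mu(t)+3\ln t = -2\ln\left(C_0\, t^{{-}3/2} + \tfrac{4\maxpot}{3}\right) \longrightarrow 2\ln\tfrac{3}{4\maxpot} \qquad(t\to\infty).
\end{equation}
Choosing the initial value~$\mu(t_0)=\min_{\theta\in\SS^1}\lambda(t_0,\theta)$, which is finite by compactness of~$\SS^1$ and continuity of~$\lambda(t_0,\cdot)$, a standard scalar comparison (Gronwall-type) argument applied for each fixed~$\theta$ gives~$\lambda(t,\theta)\ge\mu(t)$ for all~$t\ge t_0$ and all~$\theta$. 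Combining this with the limit above produces, for each~$\eps>0$, a time~$T$ beyond which~$\lambda(t,\theta)\ge\mu(t)\ge -3\ln t + 2\ln(3/(4\maxpot)) - \eps$ uniformly in~$\theta$, which is the assertion.

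The step I expect to require the most care is the comparison argument together with its uniformity in~$\theta$: one must justify passing from the pointwise-in-$\theta$ differential inequality to the pointwise bound~$\lambda(t,\theta)\ge\mu(t)$ (using that the right-hand side is smooth, hence locally Lipschitz, in~$\lambda$), and one must ensure the resulting time~$T$ can be chosen independently of~$\theta$. The latter is handled by starting~$\mu$ at the minimum of~$\lambda(t_0,\cdot)$, so that a single scalar function~$\mu$ bounds~$\lambda$ from below everywhere; moreover the limit of~$\mu(t)+3\ln t$ is independent of~$C_0$, so the constant~$2\ln(3/(4\maxpot))$ does not depend on the initial data. The remaining manipulations are routine.
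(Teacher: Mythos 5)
Your proof is correct and takes essentially the same route as the paper: both discard the twist terms (Gowdy) and the non-negative quadratic terms in~\eqref{eqn_timeevollambda}, use $\potV\le\maxpot$ to obtain a pointwise-in-$\theta$ scalar differential inequality in $t$, and get uniformity of $T$ from compactness of~$\SS^1$ via the initial data. The only difference is in the finish: the paper normalizes to $\hat\lambda=\lambda+3\ln t-2\ln\left(\frac{3}{4\maxpot}\right)$ and argues qualitatively from $\partt\hat\lambda\ge\frac3t\bigl(1-e^{\hat\lambda/2}\bigr)$, whereas you integrate the (equivalent, separable) comparison ODE explicitly and invoke a standard comparison theorem, which is a legitimate and slightly more quantitative way to close the same argument.
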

The statement and proof are similar to~\cite[Prop.~44]{andreassonringstrom_cosmicnohairT3gowdyeinsteinvlasov}. Note however that their statement assumes the non-negative pressure condition to hold, which in general is not satisfied for non-linear scalar field spacetimes.
\begin{proof}
	We set
	\begin{equation}
	\label{eqn_defihatlambda}
		\hat\lambda(t,\theta)=\lambda(t,\theta) + 3\ln t - 2\ln(\frac{3}{4\maxpot}).
	\end{equation}
	The evolution equation for~$\lambda$, equation~\eqref{eqn_timeevollambda}, together with~$J=0=K$, gives
	\begin{equation}
	\label{eqn_evolhatlambda}
		\partt\hat\lambda=t\left[P_t^2+\alpha P_\theta^2+e^{2P}(Q_t^2+\alpha Q_\theta^2)+2(\scalphi_t^2+\alpha\scalphi_\theta^2)\right]
			-4t^{1/2}e^{\lambda/2}\potV(\scalphi)+\frac3t.
	\end{equation}
	Due to the assumption on the potential, we can estimate this expression by
	\begin{equation}
		\partt\hat\lambda \ge \frac3t (1- e^{\hat\lambda/2}).
	\end{equation}
	Consequently, for every~$\eps>0$, there is a~$T$ such that~$\hat\lambda(t,\theta)\ge{-}\eps$ for all~$(t,\theta)\in[T,\infty)\times\SS^1$. This concludes the proof.
\end{proof}

\section{Uniform boundedness on compact time intervals}
\label{section_unifboundcomptime}

In this section, we consider solutions to the evolution equations~\eqref{eqn_timeevolalpha}--\eqref{eqn_eqnofmotionscalfieldwithT2} given on compact time intervals~$I=[t_0,t]$. We show that towards the future all quantities are uniformly bounded up to their second derivative, by constants \smoothfctdependconstant. In fact, our chain of arguments can be extended up to derivatives of arbitrary high degree, see Remark~\ref{rema_unifboundsextendhigherderiv}, but we don't give detailed proofs for derivatives higher than second degree.

In the following section, we make use of the bounds obtained here in order to show future global existence and uniqueness of solutions. This approach is similar to existing work on~$\TT^2$-symmetric spacetimes, and we refer the reader to~\cite{andreasson_globalfoliatgowdysymm}, \cite{isenbergweaver_areasymmetryorbitsT2} and~\cite{andreassonrendallweaver_existCMCT2Vlasov} for comparison.

\begin{theo}
\label{theo_unifbounds}
	Consider the evolution equations~\eqref{eqn_timeevolalpha}--\eqref{eqn_eqnofmotionscalfieldwithT2} where the potential~$\potV:\RR\rightarrow\RR$ is assumed to be a non-negative~$C^\infty$ function. 
	Given a solution defined on~$I\times\SS^1$, where~$I=[t_0,t]$ is a compact interval, the following quantities are uniformly bounded by constants \smoothfctdependconstant:
	\begin{itemize}
		\item The zeroth derivatives $\alpha$, $\alpha^{{-}1}$, $\lambda$, $P$, $Q$, $\scalphi$, $\potV$;
		\item The first derivatives~$\alpha_t$, $\alpha_\theta$, $\lambda_t$, $\lambda_\theta$, $P_t$, $P_\theta$, $Q_t$, $Q_\theta$, $\scalphi_t$, $\scalphi_\theta$, $V'$;
		\item The second derivatives~$P_{tt}$, $P_{t\theta}$, $P_{\theta\theta}$, $Q_{tt}$, $Q_{t\theta}$, $Q_{\theta\theta}$, $\scalphi_{tt}$, $\scalphi_{t\theta}$, $\scalphi_{\theta\theta}$, $V''$.
	\end{itemize}

\end{theo}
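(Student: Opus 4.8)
The plan is to establish a priori bounds whose constants have the asserted form, working upward from zeroth- to second-order quantities and, at each order, first obtaining $L^2$ (energy) control and then upgrading to pointwise bounds by integrating along the characteristics of $\partpm$. The upper bound $\alpha(t,\theta)\le\alpha(t_0,\theta)$ is already available from Lemma~\ref{lemm_estimatealphaVnonnegative}, whose proof also records the lower bound $\alpha^{-1/2}e^{\lambda/2}\ge c_0$ in~\eqref{eqn_auxilestimalphaone}. Since the solution is given as smooth on the compact set $[t_0,t]\times\SS^1$, all quantities are automatically finite; the content of the theorem lies entirely in the \emph{form} of the constants.

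The core is the basic energy
\[
	\Ebas(t)=\frac12\int_{\SS^1}t\alpha^{-1/2}\Big[P_t^2+\alpha P_\theta^2+e^{2P}(Q_t^2+\alpha Q_\theta^2)+2(\scalphi_t^2+\alpha\scalphi_\theta^2)\Big]\,d\theta,
\]
obtained by pairing the second-order equations~\eqref{eqn_secondevolP}, \eqref{eqn_secondevolQ} and~\eqref{eqn_secondevolscalphifromeqnmotion} with $P_t$, $Q_t$ and $\scalphi_t$. Differentiating $\Ebas$ in $t$, the principal flux terms integrate to zero over $\SS^1$ by periodicity, and---crucially---the cubic terms coming from $\partial_t(e^{2P})$ cancel against the source $e^{2P}(Q_t^2-\alpha Q_\theta^2)$ in~\eqref{eqn_secondevolP} because $(P,Q)$ form a wave map into the hyperbolic plane; this null structure is what makes the estimate close. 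What remains is bounded by $C\,\Ebas$ together with contributions from the twist and potential, so Grönwall's inequality controls $\Ebas$ on $[t_0,t]$.

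The main obstacle is that the constant $C$ and the remainder terms involve $\sup$-norms of $\alpha^{-1}$, $e^{\lambda/2}$ and $e^{\pm P}$, which are not yet under control; the estimates therefore have to be closed self-consistently, which I would do by a continuity argument on the maximal subinterval where crude bounds on these coefficients hold. Once $\Ebas$ is controlled there, the constraint~\eqref{eqn_spaceevollambda} bounds the spatial oscillation $\lambda(t,\theta)-\mean{\lambda(t,\cdot)}$ by $C\,\Ebas^{1/2}$ via Cauchy--Schwarz, while integrating the mean of~\eqref{eqn_timeevollambda} controls $\mean{\lambda(t,\cdot)}$, giving two-sided pointwise bounds on $\lambda$. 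Feeding these into~\eqref{eqn_timeevolalpha} and integrating in $t$ keeps $\ln\alpha$ bounded below, hence controls $\alpha^{-1}$ and improves the crude coefficient bounds, closing the continuity argument. Pointwise bounds on $P,Q,\scalphi$ then follow by combining their spatial oscillation (controlled through $\int_{\SS^1}P_\theta^2\,d\theta\le C\,\Ebas/\alpha_{\min}$ and its analogues) with the evolution of their means $\mean{P_t}$, $\mean{Q_t}$, $\mean{\scalphi_t}$; since $\scalphi$ then takes values in a fixed compact set and $\potV\in C^\infty$, the quantities $\potV$, $\potV'$ and $\potV''$ are automatically bounded.

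With the undifferentiated variables and $\Ebas$ in hand, I would upgrade to pointwise first-derivative bounds using the lightcone quantities $\Afctpm$ and $\Bfctpm$ and their evolution in Lemmas~\ref{lemm_Afctlightconederiv} and~\ref{lemm_Bfctlightconederiv}. Along a characteristic of $\partpm$ the operator $\partpm$ acts as a total derivative, so these identities become ODE inequalities of the form $\partpm(\Afctmp+\Bfctmp)\le C(\Afctmp+\Bfctmp)+\text{(lower order)}$; integrating and applying Grönwall yields sup bounds on $P_\pm,Q_\pm,\scalphi_\pm$, hence on $P_t,P_\theta,Q_t,Q_\theta,\scalphi_t,\scalphi_\theta$. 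The derivatives $\alpha_t,\lambda_t,\lambda_\theta$ are then read off directly from~\eqref{eqn_timeevolalpha}, \eqref{eqn_timeevollambda} and~\eqref{eqn_spaceevollambda}, and $\alpha_\theta$ from the $\theta$-derivative of~\eqref{eqn_timeevolalpha}. Finally, the second derivatives are handled by the same characteristic scheme applied to $\Afcttwopm$ and $\Bfcttwopm$ (carried out in Propositions~\ref{prop_estimsecondderivPQ} and~\ref{prop_estimsecondderivscalphi}), recovering the pure spatial derivatives $P_{\theta\theta},Q_{\theta\theta},\scalphi_{\theta\theta}$ algebraically from the second-order equations and $V''$ from the smoothness of $\potV$. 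Each Grönwall step contributes a factor that is a smooth function of $t-t_0$, so the constants depend only on the initial data at $t_0$ and, through a smooth function, on the interval length, as asserted.
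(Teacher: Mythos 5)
Your treatment of the first and second derivatives (lightcone quantities~$\Afctpm$, $\Bfctpm$, $\Afcttwopm$, $\Bfcttwopm$ integrated along characteristics, then Grönwall) coincides with the paper's and is sound \emph{once the zeroth-order quantities are controlled}. The genuine gap is at the foundation. Differentiating your energy~$\Ebas$ produces, besides the wave-map cancellations you correctly identify, terms proportional to~$\alpha_t/\alpha$ (from~$\partial_t(t\alpha^{-1/2})$), the twist sources of~\eqref{eqn_secondevolP}--\eqref{eqn_secondevolQ} paired with~$P_t,Q_t$, and the term~$t^{1/2}\alpha^{-1/2}e^{\lambda/2}\potV'(\scalphi)\scalphi_t$ from~\eqref{eqn_secondevolscalphifromeqnmotion}; all of these carry factors~$e^{\lambda/2\pm P}$, $(K-QJ)^2$ and~$\potV'(\scalphi)$ that are not yet bounded. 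You acknowledge the circularity and propose to close it by ``a continuity argument on the maximal subinterval where crude bounds hold,'' but you give no argument that this bootstrap closes, and there is a concrete obstruction: the Grönwall rate~$C(M)$ grows essentially exponentially in the bootstrap bound~$M$ (through~$e^{\lambda/2}$, $e^{\pm P}$, and through~$\potV'$ evaluated on an as-yet-unbounded range of~$\scalphi$), so the derived bound~$e^{C(M)(t-t_0)}$ exceeds~$M$ for any fixed interval length once~$M$ is large. The bootstrap therefore closes only on intervals whose length shrinks as~$M$ grows, and iterating such steps can break down in finite time --- precisely the failure mode an a priori estimate on an arbitrary compact interval, with constants depending smoothly on~$t-t_0$, must rule out. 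The same circularity reappears when you bound~$\mean{\lambda_t}$ (its right-hand side contains the uncontrolled quotients and~$e^{\lambda/2}\potV$) and when you use~$\alpha_{\min}$ to estimate~$\int_{\SS^1}P_\theta^2\,d\theta$.

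The paper's proof is designed exactly to avoid this. Its energy~$\Enew$ in~\eqref{eqn_defiEnew} differs from yours in two essential ways: it uses the shifted quantity~$(P_t+\frac1t)^2$, and it includes the twist quotients and the potential term~$4t^{-1/2}e^{\lambda/2}\potV$ \emph{inside} the integrand. With these additions the time derivative of~$\Enew$ is computed exactly in Lemma~\ref{lemm_Enewbound} and is manifestly non-positive, so~$\Enew(t)\le\Enew(t_0)$ holds unconditionally --- no Grönwall, no coefficient bounds required. The ordering of the subsequent steps also matters: since~$-\alpha_t/\alpha$ equals exactly the quotient and potential terms appearing in the integrand, one gets~$\partt\alpha^{-1/2}\le\frac t2$ times that integrand, hence the integral bound on~$\int_{\SS^1}\alpha^{-1/2}d\theta$ (Lemma~\ref{lemm_alphaintegrallowerbound}) \emph{before} any pointwise lower bound on~$\alpha$ is known; Cauchy--Schwarz then converts the energy bound into~$\int_{\SS^1}\absval{P_\theta}d\theta\le C$, giving pointwise control of~$P,Q,\scalphi$, then~$\lambda$, then the quotients and~$\potV$, and only at that point the pointwise bound on~$\alpha^{-1}$ (Lemma~\ref{lemm_alphalowerbound}). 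If you replace your energy by~$\Enew$ and adopt this ordering, the remainder of your proposal goes through essentially as written.
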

The lemmata and propositions in this section constitute the proof. To show the statement of Theorem~\ref{theo_unifbounds} we proceed stepwise, first discussing the different zeroth derivatives and then proceeding to the higher ones. 
\begin{rema}
\label{rema_assumpotentialforglobalexist}
	Throughout this section, we assume that the potential~$\potV:\RR\rightarrow\RR$ is non-negative and~$C^\infty$. 
	In particular, this implies that~$\potV$ as well as its first and second derivatives~$V'=d\potV/d\scalphi$ and~$V''=d^2\potV/d\scalphi^2$ do not become unbounded for a bounded scalar field~$\scalphi$.
\end{rema}

Inspired by~\cite[eq.~(42)]{andreassonrendallweaver_existCMCT2Vlasov}, we define the following energy:
\begin{align}
\label{eqn_defiEnew}
	\Enew=\int_{\SS^1}\alpha^{{-}1/2} &
	\left[ (P_t+\frac1t)^2+\alpha P_\theta^2 + e^{2P}(Q_t^2+\alpha Q_\theta^2) + 2(\scalphi_t^2+\alpha\scalphi_\theta^2) \right.\\
	 & \left.  + \quotJ7 + \quotKQJ7 + 4t^{{-}1/2}e^{\lambda/2}\potV \right] d\theta.	
\end{align}
We remark at this point that all terms in the integrand are non-negative due to the assumption on the potential~$\potV$. A direct comparison with the evolution equations, mainly equations~\eqref{eqn_secondevollambda} and~\eqref{eqn_timeevolalpha}, shows that it is equivalent to write
\begin{equation}
\label{eqn_Enewequiv}
	\Enew=\int_{\SS^1}\alpha^{{-}1/2}t^{{-}1}\left[\lambda_t-2\frac{\alpha_t}{\alpha} +\frac1t +2P_t\right]d\theta.
\end{equation}
\begin{lemm}
\label{lemm_Enewbound}
	\consevolequsinterval
	Then, independently of the sign of the potential~$\potV$, the energy~$\Enew$ is monotone decreasing on~$I$, \ie
	\begin{equation}
		\Enew(t)\le \Enew(t_0),
	\end{equation}
	for all~$t\ge t_0$, where~$t,t_0\in I$. In particular, $\Enew$ is bounded towards the future.
	
	If additionally the potential~$\potV$ is non-negative, then there exists a constant~$C$, depending only on the initial data at time~$t_0\in I$, such that
	\begin{equation}
		\int_{\SS^1} \left[ P_t^2+\alpha P_\theta^2 + e^{2P}(Q_t^2+\alpha Q_\theta^2) + 2(\scalphi_t^2+\alpha\scalphi_\theta^2) \right] d\theta \le C ,
	\end{equation}
	for all~$t\ge t_0$, where~$\in I$.
\end{lemm}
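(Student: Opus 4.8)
The plan is to establish the monotonicity by differentiating $\Enew$ directly, and then to deduce the second, unweighted bound from the non-negativity of the integrand in~\eqref{eqn_defiEnew} together with Lemma~\ref{lemm_estimatealphaVnonnegative}.

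For the monotonicity I would work with the equivalent form~\eqref{eqn_Enewequiv}, rewritten as $\Enew=\int_{\SS^1}[t^{-2}W+4\alpha^{-1/2}t^{-2}+2\alpha^{-1/2}t^{-1}P_t]\,d\theta$ with $W=t\alpha^{-1/2}(\lambda_t-2\alpha_t/\alpha-3/t)$. Since the solution is smooth and the domain $\SS^1$ is fixed and compact, I may differentiate under the integral sign. I would then substitute $\partial_tW$ from~\eqref{eqn_secondevollambda}, replace $\partial_t\alpha^{-1/2}=-\tfrac12\alpha^{-1/2}(\alpha_t/\alpha)$ using~\eqref{eqn_timeevolalpha}, and eliminate $P_{tt}$ through~\eqref{eqn_secondevolP}. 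A pleasant feature of this route is that the only surviving $\theta$-derivative terms are the exact derivatives $\partial_\theta(t\alpha^{1/2}\lambda_\theta)$ and $\partial_\theta(t\alpha^{1/2}P_\theta)$ coming from~\eqref{eqn_secondevollambda} and~\eqref{eqn_secondevolP}; these integrate to zero over $\SS^1$ by periodicity, so no genuine integration by parts is required. Note also that the second time derivatives of $Q$ and $\scalphi$ never appear, as they are already packaged inside $\partial_tW$.

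I expect the computation to be the main obstacle, and the decisive structural point is that every term containing the potential cancels. Because $\scalphi_{tt}$ does not occur, no factor $\potV'$ is produced at all; the only explicit potential term is the summand $4t^{1/2}e^{\lambda/2}\alpha^{-1/2}\potV$ of~\eqref{eqn_secondevollambda}, and this is cancelled by the potential hidden inside the term $\alpha^{-1/2}\lambda_t$ of the same equation --- recall from~\eqref{eqn_timeevollambda} that $\lambda_t$ contains $-4t^{1/2}e^{\lambda/2}\potV$ --- together with the potential contributions of $\partial_t\alpha^{-1/2}$ and of $-2t^{-3}W$. This is exactly what makes the conclusion insensitive to the sign of $\potV$; the weight $\alpha^{-1/2}$, the shift $P_t+1/t$ and the precise powers of $t$ in~\eqref{eqn_defiEnew} are tuned for these cancellations. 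Once they are carried out, one finds that the remaining terms assemble into a manifestly non-positive expression --- in the spatially homogeneous Gowdy case one obtains exactly $\frac{d}{dt}\Enew=-\tfrac2t\int_{\SS^1}\alpha^{-1/2}[(P_t+1/t)^2+2\scalphi_t^2]\,d\theta$ --- so that $\frac{d}{dt}\Enew\le0$. Integrating gives $\Enew(t)\le\Enew(t_0)$, and boundedness towards the future is then immediate since $\Enew(t_0)$ is finite.

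For the second assertion, assume $\potV\ge0$, so that every summand of the integrand in~\eqref{eqn_defiEnew} is non-negative; consequently each of them integrates to at most $\Enew(t)\le\Enew(t_0)$. In particular $\int_{\SS^1}\alpha^{1/2}P_\theta^2\,d\theta$, $\int_{\SS^1}\alpha^{-1/2}e^{2P}(Q_t^2+\alpha Q_\theta^2)\,d\theta$, $\int_{\SS^1}2\alpha^{-1/2}(\scalphi_t^2+\alpha\scalphi_\theta^2)\,d\theta$ and $\int_{\SS^1}\alpha^{-1/2}(P_t+1/t)^2\,d\theta$ are all bounded by $\Enew(t_0)$. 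To remove the weight $\alpha^{-1/2}$ I would invoke Lemma~\ref{lemm_estimatealphaVnonnegative}: since $\alpha(t,\theta)\le\alpha(t_0,\theta)\le A_0:=\max_\theta\alpha(t_0,\theta)$ we have $\alpha^{-1/2}\ge A_0^{-1/2}$, hence $\int_{\SS^1}X\,d\theta\le A_0^{1/2}\int_{\SS^1}\alpha^{-1/2}X\,d\theta$ for every $X\ge0$. This bounds all the required terms except the one involving $P_t^2$; for it I would use $P_t^2\le2(P_t+1/t)^2+2/t^2$, which gives $\int_{\SS^1}\alpha^{-1/2}P_t^2\,d\theta\le2\Enew(t_0)+\tfrac2{t^2}\int_{\SS^1}\alpha^{-1/2}\,d\theta$. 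The last factor is controlled uniformly by integrating the differential inequality $\frac{d}{dt}\int_{\SS^1}\alpha^{-1/2}\,d\theta\le\tfrac t2\Enew(t_0)$, which follows from $\partial_t\alpha^{-1/2}=\tfrac12\alpha^{-1/2}(-\alpha_t/\alpha)$, equation~\eqref{eqn_timeevolalpha}, and the observation that the terms occurring there are, up to the factor $t$, among the non-negative summands of $\Enew$. Integration yields $\tfrac1{t^2}\int_{\SS^1}\alpha^{-1/2}\,d\theta\le\tfrac1{t_0^2}\int_{\SS^1}\alpha^{-1/2}(t_0,\cdot)\,d\theta+\tfrac14\Enew(t_0)$, a bound depending only on the initial data at $t_0$, and collecting the pieces produces the claimed constant $C$.
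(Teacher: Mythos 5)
Your proposal is correct. For the monotonicity part it is essentially the paper's argument: differentiating the energy and substituting the evolution equations \eqref{eqn_secondevollambda}, \eqref{eqn_timeevolalpha} and \eqref{eqn_secondevolP}, every potential term cancels exactly as you describe (the explicit $4t^{1/2}e^{\lambda/2}\alpha^{-1/2}\potV$ against the one hidden in $\alpha^{-1/2}\lambda_t$, and the contribution of $\partial_t\alpha^{-1/2}$ against that of $-2t^{-3}W$), and the computation closes with
\begin{equation}
	\frac{d}{dt}\Enew={-}\int_{\SS^1}2t^{-1}\alpha^{-1/2}\left[(P_t+\tfrac1t)^2+e^{2P}\alpha Q_\theta^2+2\scalphi_t^2+\quotJ7+2\quotKQJ7\right]d\theta\le0,
\end{equation}
whose spatially homogeneous Gowdy specialisation is exactly the expression you predict. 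The only genuine divergence is the treatment of $\int_{\SS^1}P_t^2\,d\theta$ in the second claim. The paper works unweighted: from $\int_{\SS^1}(P_t+1/t)^2d\theta\le C$ it deduces $\int_{\SS^1}\absval{P_t+1/t}\,d\theta\le C$ by Cauchy--Schwarz, hence $\int_{\SS^1}\absval{P_t}\,d\theta\le C$, and then uses the identity $P_t^2=(P_t+1/t)^2-2P_t/t-1/t^2$. You instead bound $P_t^2\le2(P_t+1/t)^2+2/t^2$ and control $t^{-2}\int_{\SS^1}\alpha^{-1/2}d\theta$ by integrating $\frac{d}{dt}\int_{\SS^1}\alpha^{-1/2}d\theta\le\frac t2\Enew(t_0)$ --- in effect proving the paper's Lemma~\ref{lemm_alphaintegrallowerbound} ahead of time (non-circularly, since that inequality only uses the already-established monotonicity and equation \eqref{eqn_timeevolalpha}) and observing that its quadratic growth is exactly compensated by your $t^{-2}$ prefactor. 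Both routes produce a constant depending only on the data at $t_0$; the paper's algebraic manipulation of the shifted square is more economical, while your variant makes the uniformity in $t$ more transparent and anticipates an estimate the paper needs immediately afterwards anyway.
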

\begin{proof}
	Differentiating the integrand of~$\Enew$ with respect to time, one finds that
\begin{align}
	\frac{d}{dt}(\Enew(t))={-}\int_{\SS^1} 2t^{{-}1}\alpha^{{-}1/2} & 
	\left[(P_t+\frac1t)^2 + e^{2P}\alpha Q_\theta^2 + 2\scalphi_t^2\right.\\
	& \left. +\quotJ7 +2\quotKQJ7 \right] d\theta,
\end{align}
which is non-positive. This yields the monotonocity statement.

Considering the second estimate in the statement, for all but the first term in the integral this follows directly from the energy estimate together with Lemma~\ref{lemm_estimatealphaVnonnegative} stating that~$\alpha$ is monotone decaying. Instead of obtaining an estimate for~$P_t$ directly, we find that
\begin{equation}
	\int_{\SS^1} (P_t+\frac1t)^2 d\theta \le C.
\end{equation}
This gives
\begin{equation}
	\int_{\SS^1}\absval{P_t+\frac1t}d\theta
	\le C,
\end{equation}
the latter implying that even
\begin{equation}
	\int_{\SS^1} \absval{P_t}d\theta \le C,
\end{equation}
and can be combined to
\begin{equation}
	 \int_{\SS^1} P_t^2 d\theta=\int_{\SS^1} \left[(P_t+\frac1t)^2-\frac2tP_t-\frac1{t^2}\right]d\theta \le C.
\end{equation}
This concludes the proof.
\end{proof}
From Lemma~\ref{lemm_estimatealphaVnonnegative}, we know the following: If we assume a non-negative potential, then~$\alpha$ is monotone decaying towards the future, and therefore, on every compact set~$I=[t_0,t]$, is uniformly bounded from above by a constant depending only on the initial data at time~$t_0$. Further, $\alpha$ is bounded from below by zero.
We will need a stronger lower bound and prove an integral version now as well as a pointwise version further down in Lemma~\ref{lemm_alphalowerbound}.
\begin{lemm}
\label{lemm_alphaintegrallowerbound}
	\consevolequscompinterval
	Then~$\int_{\SS^1}\alpha^{{-}1/2}d\theta$ is uniformly bounded from above by a constant \dependconstant.
\end{lemm}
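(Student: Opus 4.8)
The plan is to control the quantity $\int_{\SS^1}\alpha^{-1/2}d\theta$ by differentiating it in time and relating the result to the energy~$\Enew$ from~\eqref{eqn_defiEnew}, whose monotonicity is already established in Lemma~\ref{lemm_Enewbound}. Using~$\partial_t\alpha^{-1/2}={-}\frac{\alpha_t}{2\alpha}\alpha^{-1/2}$ together with the evolution equation~\eqref{eqn_timeevolalpha} for~$\alpha_t/\alpha$, one finds
\begin{equation}
	\frac{d}{dt}\int_{\SS^1}\alpha^{{-}1/2}d\theta
	= \frac12\int_{\SS^1}\alpha^{{-}1/2}\left[\frac{e^{\lambda/2-P}J^2}{t^{5/2}}+\frac{e^{\lambda/2+P}(K-QJ)^2}{t^{5/2}}+4t^{1/2}e^{\lambda/2}\potV\right]d\theta.
\end{equation}
Since we work throughout this section under the assumption that~$\potV$ is non-negative (Remark~\ref{rema_assumpotentialforglobalexist}), the integrand is non-negative, so the left-hand side is non-negative and we only need an upper bound on its growth.

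The key observation is that the bracketed integrand above is exactly~$t$ times the corresponding twist and potential terms appearing in the integrand of~$\Enew$: the twist terms in~$\Enew$ carry a factor~$t^{{-}7/2}$ versus~$t^{{-}5/2}$ here, and the potential term carries~$t^{{-}1/2}$ versus~$t^{1/2}$ here, the difference being precisely one power of~$t$ in each case. Since these are a subset of the non-negative terms making up~$\Enew$, I would bound
\begin{equation}
	\frac{d}{dt}\int_{\SS^1}\alpha^{{-}1/2}d\theta
	= \frac{t}{2}\int_{\SS^1}\alpha^{{-}1/2}\left[\quotJ7+\quotKQJ7+4t^{{-}1/2}e^{\lambda/2}\potV\right]d\theta
	\le \frac{t}{2}\Enew(t)\le \frac{t}{2}\Enew(t_0),
\end{equation}
where the last inequality uses the monotonicity statement of Lemma~\ref{lemm_Enewbound}.

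Integrating this differential inequality from~$t_0$ to~$t$ then yields
\begin{equation}
	\int_{\SS^1}\alpha^{{-}1/2}(t,\cdot)d\theta \le \int_{\SS^1}\alpha^{{-}1/2}(t_0,\cdot)d\theta + \frac{\Enew(t_0)}{4}(t^2-t_0^2),
\end{equation}
which is a bound depending only on the initial data at time~$t_0$ and growing quadratically, hence polynomially, in the length of the interval, as claimed. The only delicate point in the argument is the matching of powers of~$t$ between the time derivative of~$\int\alpha^{-1/2}d\theta$ and the integrand of~$\Enew$; once one recognizes that the former is precisely~$t$ times a collection of non-negative subterms of the latter, the estimate reduces to the already-proven monotonicity of~$\Enew$ followed by an elementary integration in~$t$, so no genuine obstacle remains.
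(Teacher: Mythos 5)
Your proposal is correct and follows essentially the same route as the paper's own proof: compute $\partt\alpha^{-1/2}=-\tfrac12\alpha^{-1/2}\alpha_t/\alpha$, recognize via the evolution equation~\eqref{eqn_timeevolalpha} that this is bounded by $t/2$ times (non-negative subterms of) the integrand of~$\Enew$, invoke the monotonicity of~$\Enew$ from Lemma~\ref{lemm_Enewbound}, and integrate in time to get the bound $\Enew(t_0)(t^2-t_0^2)/4$ plus the initial value. The only cosmetic difference is that the paper quotes the equivalent form~\eqref{eqn_Enewequiv} of the energy while you match terms against the definition~\eqref{eqn_defiEnew} directly; the estimates are identical.
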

\begin{proof}
	We compute
	\begin{equation}
	      \partt\alpha^{{-}1/2}={-}\frac12\alpha^{{-}1/2}\frac{\alpha_t}{\alpha},
	\end{equation}
	and notice from the evolution equation~\eqref{eqn_timeevolalpha} for~$\alpha$ that this
	is bounded from above by~$t/2$ times the integrand of~$\Enew$ defined in~\eqref{eqn_Enewequiv}.
	Consequently, setting~$D\coloneqq \max_{\theta\in\SS^1}(\alpha^{{-}1/2}(t_0,\theta))$ and integrating over the integral~$[t_0,t]$ as well as over~$\SS^1$, we find
	\begin{equation}
	      \int_{\SS^1}\alpha^{{-}1/2}(t,\theta)d\theta \le \int_{t_0}^t \frac s2 \Enew(s) ds +\int_{\SS^1}\alpha^{{-}1/2}(t_0,\theta)d\theta
	      \le\Enew(t_0)\frac{t^2-t_0^2}4 + 2\pi D,
      \end{equation}
      which is uniformly bounded as claimed.
\end{proof}

\begin{lemm}
\label{lemm_unifboundPQscalphi}
	\consevolequscompinterval
	Then~$P$, $Q$ and~$\scalphi$ are uniformly bounded by constants \dependconstant.
\end{lemm}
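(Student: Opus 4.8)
The plan is to deduce everything from the two integral estimates already available: the energy bound of Lemma~\ref{lemm_Enewbound}, which (for non-negative~$\potV$) controls $\int_{\SS^1}P_t^2\,d\theta$, $\int_{\SS^1}\alpha P_\theta^2\,d\theta$, $\int_{\SS^1}\scalphi_t^2\,d\theta$, $\int_{\SS^1}\alpha\scalphi_\theta^2\,d\theta$ together with the $e^{2P}$-weighted versions $\int_{\SS^1}e^{2P}Q_t^2\,d\theta$ and $\int_{\SS^1}e^{2P}\alpha Q_\theta^2\,d\theta$, all by a constant \dependconstant; and the bound on $\int_{\SS^1}\alpha^{{-}1/2}\,d\theta$ from Lemma~\ref{lemm_alphaintegrallowerbound}. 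Since the energy contributions of~$P$ and~$\scalphi$ are unweighted whereas that of~$Q$ carries the factor~$e^{2P}$, I would bound~$P$ and~$\scalphi$ first and~$Q$ last.

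For~$P$ I would split into mean and oscillation. The mean satisfies $\frac{d}{dt}\mean P=\mean{P_t}$, and Cauchy--Schwarz gives $\absval{\mean{P_t}}\le(2\pi)^{{-}1/2}(\int_{\SS^1}P_t^2\,d\theta)^{1/2}$, which is bounded; integrating in~$t$ controls~$\mean P(t)$ by its initial value plus a term linear in the interval length. For the oscillation, fix~$\theta$ and a point where~$P$ equals its mean, so that $\absval{P(t,\theta)-\mean P(t)}\le\int_{\SS^1}\absval{P_\theta}\,d\theta$; the decisive step is to write $\absval{P_\theta}=\alpha^{{-}1/4}\cdot\alpha^{1/4}\absval{P_\theta}$ and apply Cauchy--Schwarz to obtain $\int_{\SS^1}\absval{P_\theta}\,d\theta\le(\int_{\SS^1}\alpha^{{-}1/2}\,d\theta)^{1/2}(\int_{\SS^1}\alpha P_\theta^2\,d\theta)^{1/2}$, where the first factor is controlled by Lemma~\ref{lemm_alphaintegrallowerbound} and the second by the energy bound. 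Adding the two contributions yields a uniform pointwise bound on~$P$ of the claimed form, and the identical argument bounds~$\scalphi$.

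Finally, having bounded~$P$, the weight~$e^{2P}$ is bounded below by a positive constant, so the weighted energy bounds give $\int_{\SS^1}Q_t^2\,d\theta\le C$ and $\int_{\SS^1}\alpha Q_\theta^2\,d\theta\le C$, and the mean-plus-oscillation argument above then applies verbatim to~$Q$. The main obstacle is the degeneracy of the spatial part of the energy: it controls~$P_\theta$, $\scalphi_\theta$ and~$Q_\theta$ only through the possibly small weight~$\alpha$, and this is exactly what the pairing with the integral bound on~$\alpha^{{-}1/2}$ is designed to absorb. The secondary difficulty is that~$Q$ enters the energy only with the weight~$e^{2P}$, which is why the argument must be bootstrapped, bounding~$P$ before~$Q$ and letting the~$Q$-estimate inherit its dependence on the interval length through the already-established control of~$P$.
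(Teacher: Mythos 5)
Your overall architecture (mean plus oscillation, Cauchy--Schwarz against the $\alpha$-weights, bounding $P$ and $\scalphi$ first and then bootstrapping $Q$ through the resulting positive lower bound on $e^{2P}$) is the same as the paper's, and your mean estimate and $Q$-bootstrap are fine. The gap is in the oscillation step. From the split $\absval{P_\theta}=\alpha^{-1/4}\cdot\alpha^{1/4}\absval{P_\theta}$, Cauchy--Schwarz gives
\begin{equation}
	\int_{\SS^1}\absval{P_\theta}\,d\theta\le\Bigl(\int_{\SS^1}\alpha^{-1/2}\,d\theta\Bigr)^{1/2}\Bigl(\int_{\SS^1}\alpha^{1/2}P_\theta^2\,d\theta\Bigr)^{1/2},
\end{equation}
with weight $\alpha^{1/2}$ in the second factor, not the weight $\alpha$ you wrote; to put $\int\alpha P_\theta^2$ there you would instead need $\int\alpha^{-1}\,d\theta$, which Lemma~\ref{lemm_alphaintegrallowerbound} does not provide. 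This is not a cosmetic slip, because the two ingredients you allow yourself --- $\int\alpha P_\theta^2\,d\theta\le C$ (the second part of Lemma~\ref{lemm_Enewbound}) and $\int\alpha^{-1/2}\,d\theta\le C$ --- genuinely do not control $\int\absval{P_\theta}\,d\theta$: on an interval of length $\eps$ take $\alpha\sim\eps^2$ and $P_\theta\sim\eps^{-3/2}$ (and trivial data elsewhere); then $\int\alpha P_\theta^2\sim1$ and $\int\alpha^{-1/2}\sim1$, yet $\int\absval{P_\theta}\sim\eps^{-1/2}$ is unbounded. So no Hölder-type argument can close the step from those two bounds alone.

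The repair stays inside your toolkit and is exactly what the paper does: invoke the boundedness of the energy $\Enew$ itself (the first part of Lemma~\ref{lemm_Enewbound}) rather than the derived unweighted estimate. The integrand of $\Enew$ carries the overall factor $\alpha^{-1/2}$, and all its terms are non-negative when $\potV\ge0$, so one reads off directly $\int_{\SS^1}\alpha^{1/2}P_\theta^2\,d\theta\le\Enew(t)\le\Enew(t_0)$, and likewise for the $\scalphi$- and $e^{2P}Q$-terms. With this $\alpha^{1/2}$-weighted bound, your split $\alpha^{-1/4}\cdot\alpha^{1/4}\absval{P_\theta}$ closes the oscillation estimate, and the remainder of your argument goes through unchanged. (A minor stylistic difference: the paper runs the mean estimate on $P+\ln t$ using the energy term $(P_t+\frac1t)^2$, whereas you use $\int P_t^2\le C$ from the second part of Lemma~\ref{lemm_Enewbound}; both are valid.)
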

\begin{proof}
	We give the proof for~$P$ and leave its adaptation to~$Q$ and~$\scalphi$ to the reader. We can estimate the derivative of~$P+\ln t$ via
	\begin{equation}
		\absval{\partial_t\mean{P+\ln t}}
		\le\frac1{2\pi}\int_{\SS^1}\absval{P_t+\frac1t}d\theta 
		\le \frac1{\sqrt{2\pi}}(\int_{\SS^1}(P_t+\frac1t)^2d\theta)^{1/2},
	\end{equation}
	which is uniformly bounded due to Lemma~\ref{lemm_Enewbound}.
	As a consequence, the same boundedness holds true for~$\mean{P+\ln t}$, and consequently also for~$\mean P$. As further
	\begin{equation}
		\int_{\SS^1} \absval{P_\theta}d\theta 
		\le \frac1{2\pi}(\int_{\SS^1} \alpha^{1/2} P_\theta^2d\theta)^{1/2} (\int_{\SS^1} \alpha^{{-}1/2}d\theta)^{1/2},
	\end{equation}
	the integral bound on~$\alpha^{{-}1/2}$ found in Lemma~\ref{lemm_alphaintegrallowerbound} and again Lemma~\ref{lemm_Enewbound} imply the statement for~$P$.
\end{proof}

\begin{lemm}
\label{lemm_unifboundlambda}
	\consevolequscompinterval
	Assume further that the potential~$\potV$ is non-negative.
	Then~$\lambda$ is uniformly bounded by a constant \dependconstant.
\end{lemm}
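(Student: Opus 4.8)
The plan is to control $\lambda$ through its spatial mean together with its oscillation around that mean, writing $\lambda(t,\theta)=\mean\lambda(t)+\bigl(\lambda(t,\theta)-\mean\lambda(t)\bigr)$ and bounding the two pieces separately. Since $\SS^1$ is compact, for each fixed $t$ the mean value $\mean\lambda(t)$ is attained at some $\theta_\ast$, so the fundamental theorem of calculus on the circle gives $\absval{\lambda(t,\theta)-\mean\lambda(t)}\le\int_{\SS^1}\absval{\lambda_\theta}\,d\theta$ uniformly in $\theta$. It therefore suffices to bound $\int_{\SS^1}\absval{\lambda_\theta}\,d\theta$ and $\absval{\mean\lambda(t)}$ by quantities growing at most polynomially in $t$.

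First I would bound the oscillation. By the constraint~\eqref{eqn_spaceevollambda}, $\lambda_\theta=2t(P_tP_\theta+e^{2P}Q_tQ_\theta+2\scalphi_t\scalphi_\theta)$, so I need to control the three products in $L^1(\SS^1)$. The obstacle here is a weight mismatch: the plain energy of Lemma~\ref{lemm_Enewbound} bounds $\int P_t^2$ but only $\int\alpha P_\theta^2$, and $\alpha^{-1}$ is not pointwise controlled. The resolution is to use the $\alpha^{-1/2}$-weighted structure of $\Enew$ in~\eqref{eqn_defiEnew}, which bounds $\int\alpha^{-1/2}(P_t+\tfrac1t)^2\,d\theta$ and $\int\alpha^{1/2}P_\theta^2\,d\theta$ (and the analogous $Q$- and $\scalphi$-quantities) by $\Enew(t_0)$. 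Writing $P_t=(P_t+\tfrac1t)-\tfrac1t$ and applying the weighted arithmetic--geometric mean inequality $\absval{(P_t+\tfrac1t)P_\theta}\le\tfrac12(\alpha^{-1/2}(P_t+\tfrac1t)^2+\alpha^{1/2}P_\theta^2)$, the main part of each product is bounded directly by $\Enew$; the leftover terms of the form $\tfrac1t\int_{\SS^1}\absval{P_\theta}\,d\theta$ are handled by Cauchy--Schwarz via $\int_{\SS^1}\absval{P_\theta}\,d\theta\le(\int_{\SS^1}\alpha^{-1/2}\,d\theta)^{1/2}(\int_{\SS^1}\alpha^{1/2}P_\theta^2\,d\theta)^{1/2}$, which is bounded by Lemma~\ref{lemm_alphaintegrallowerbound} and $\Enew$. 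The factor $e^{2P}$ in the $Q$-product is bounded above by Lemma~\ref{lemm_unifboundPQscalphi}. Altogether $\int_{\SS^1}\absval{\lambda_\theta}\,d\theta\le Ct$, with $C$ depending only on the data at $t_0$.

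Next I would bound the mean. Differentiating, $\partt\mean\lambda=\mean{\lambda_t}$, and from~\eqref{eqn_timeevollambda} I write $\lambda_t=A-B$ with $A=t[P_t^2+\alpha P_\theta^2+e^{2P}(Q_t^2+\alpha Q_\theta^2)+2(\scalphi_t^2+\alpha\scalphi_\theta^2)]\ge0$ and $B$ the sum of the two twist terms and $4t^{1/2}e^{\lambda/2}\potV\ge0$. The positive part satisfies $\int_{\SS^1}A\,d\theta\le Ct$ by the second statement of Lemma~\ref{lemm_Enewbound}. The key point for $B$, which avoids any appeal to a bound on $e^{\lambda/2}$ and hence any circularity, is that the same twist and potential terms occur in the integrand of $\Enew$ but weighted by $\alpha^{-1/2}$ and with an extra factor $t^{-1}$; since $\alpha\le\alpha(t_0,\cdot)$ is bounded above by Lemma~\ref{lemm_estimatealphaVnonnegative}, I can drop the $\alpha^{-1/2}$ weight at the cost of a constant and obtain $\int_{\SS^1}B\,d\theta\le Ct\,\Enew(t_0)$. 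Hence $\absval{\mean{\lambda_t}}\le\tfrac1{2\pi}\int_{\SS^1}(A+B)\,d\theta\le Ct$, and integrating in time gives $\absval{\mean\lambda(t)}\le\absval{\mean\lambda(t_0)}+\tfrac{C}{2}(t^2-t_0^2)$.

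Combining the two bounds yields $\absval{\lambda(t,\theta)}\le\absval{\mean\lambda(t)}+\int_{\SS^1}\absval{\lambda_\theta}\,d\theta\le C(1+t^2)$, which is the claimed bound depending only on the initial data at $t_0$ and polynomially on the length of the interval. The main obstacle is the oscillation estimate of the second paragraph: naively the products in $\lambda_\theta$ are not controlled, because time- and space-derivatives enter the plain energy with incompatible $\alpha$-weights. The point is precisely that the $\alpha^{-1/2}$-weighted energy $\Enew$ pairs $\alpha^{-1/2}(P_t+\tfrac1t)^2$ with $\alpha^{1/2}P_\theta^2$ in exactly the combination required for the weighted arithmetic--geometric mean inequality, while the integral bound on $\alpha^{-1/2}$ from Lemma~\ref{lemm_alphaintegrallowerbound} absorbs the lower-order shift by $1/t$.
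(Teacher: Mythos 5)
Your proof is correct and follows essentially the same route as the paper: both split $\lambda$ into its spatial mean and oscillation, bound $\int_{\SS^1}\absval{\lambda_\theta}\,d\theta$ via the constraint~\eqref{eqn_spaceevollambda} together with the $\alpha^{-1/2}$-weighted energy~$\Enew$ and the integral bound of Lemma~\ref{lemm_alphaintegrallowerbound}, and bound $\mean{\lambda_t}$ via~\eqref{eqn_timeevollambda}, the energy, and the upper bound on $\alpha$ from Lemma~\ref{lemm_estimatealphaVnonnegative}. The only difference is that you spell out the $P_t=(P_t+\tfrac1t)-\tfrac1t$ shift and the Cauchy--Schwarz handling of the leftover $\tfrac1t\int\absval{P_\theta}$ term explicitly, which the paper leaves implicit.
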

\begin{proof}
	From the evolution equation~\eqref{eqn_spaceevollambda}, we conclude that
	\begin{align}
		\absval{\lambda_\theta}={}&\absval{2t(P_tP_\theta +e^{2P}Q_tQ_\theta)+4t\scalphi_t\scalphi_\theta}\\
		  \le{}&t\alpha^{{-}1/2}(P_t^2+\alpha P_\theta^2 + e^{2P}(Q_t^2+\alpha Q_\theta^2) + 2(\scalphi_t^2+\alpha \scalphi_\theta^2)).
	\end{align}
	Lemma~\ref{lemm_alphaintegrallowerbound} and Lemma~\ref{lemm_Enewbound} thus yield that~$\absval{\lambda(t,\theta_1)-\lambda(t,\theta_2)}$ is uniformly bounded on compact time intervals.
	Further
	\begin{equation}
		\absval{\partt\mean\lambda}
		\le\frac1{2\pi}\int_{\SS^1}\absval{\lambda_t},
	\end{equation}
	which, due to evolution equation~\eqref{eqn_timeevollambda}, is bounded via the energy~$\Enew$. Due to monotonicity of this energy, Lemma~\ref{lemm_Enewbound}, and the upper bound on~$\alpha$ from Lemma~\ref{lemm_estimatealphaVnonnegative}, this implies that~$\mean\lambda$ is uniformly bounded on compact time intervals, and combining this statement with the bounded difference of~$\lambda$ at two points~$\theta_1,\theta_2$ implies the statement.
\end{proof}

As a direct consequence of the previous lemmata showing uniform bounds for~$\alpha$ (from above, and an integral bound from below), $\lambda$, $P$, $Q$ and~$\scalphi$, we conclude the following:
\begin{coro}
\label{coro_unifboundquotients}
	\consevolequscompinterval
	\assumptnonnegatpotential
	Then quotients of the form
	\begin{equation}
		\frac{e^{\lambda/2-P}J^2}{t^{\mu}}, \qquad \frac{e^{\lambda/2+P}(K-QJ)^2}{t^{\mu}},
	\end{equation}
	$\mu\in\RR$, as well as the potential~$\potV$ are uniformly bounded by constants \dependconstant.
\end{coro}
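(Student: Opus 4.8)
The plan is to read the bound off directly from the uniform control of the undifferentiated variables already established in Lemmata~\ref{lemm_unifboundPQscalphi} and~\ref{lemm_unifboundlambda}. The first observation is that $\alpha$ does not actually occur in either quotient: each is the product of an exponential factor $e^{\lambda/2\mp P}$, the square of a \emph{constant}, and a power $t^{-\mu}$. Since $J$ and $K$ are constant in both $t$ and $\theta$, the only genuinely varying ingredients are $\lambda$, $P$, $Q$ and the power of $t$; the role of $\alpha$ is confined to the upstream arguments (via Lemma~\ref{lemm_estimatealphaVnonnegative} and the integral bound of Lemma~\ref{lemm_alphaintegrallowerbound}) that produced those variable bounds in the first place.

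First I would exponentiate the existing estimates. As $\lambda$ and $P$ are uniformly bounded on $I\times\SS^1$ by Lemmata~\ref{lemm_unifboundlambda} and~\ref{lemm_unifboundPQscalphi}, the factors $e^{\lambda/2}$ and $e^{\pm P}$ are bounded from above by constants of the appropriate type. Using the bound on $Q$ as well, one controls $(K-QJ)^2\le 2K^2+2J^2Q^2$. For the power of $t$, I use that $t$ ranges over the compact interval $I=[t_0,t]$ with $t_0>0$: for $\mu\ge 0$ one has $t^{-\mu}\le t_0^{-\mu}$, while for $\mu<0$ the factor $t^{-\mu}=t^{\absval{\mu}}$ is bounded by the corresponding power of the right endpoint of $I$, hence by a constant controlled by the length of the interval. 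Multiplying these three bounds yields the asserted estimate for both quotients.

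The bound on the potential is handled separately but just as directly. By Lemma~\ref{lemm_unifboundPQscalphi} the scalar field $\scalphi$ is uniformly bounded on $I\times\SS^1$, so $\scalphi(t,\theta)$ takes values in a fixed compact subset of $\RR$. Since $\potV$ is continuous (indeed $C^\infty$) by assumption, as recorded in Remark~\ref{rema_assumpotentialforglobalexist}, it is bounded on that compact set, and this bound depends only on the initial data through the range of $\scalphi$. Note that only continuity of $\potV$ as a function of its argument is used here, not any finer structure of the potential.

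I do not expect a genuine obstacle, since the statement is essentially bookkeeping once the previous lemmata are in hand. The one point requiring care is the precise dependence of the constants on the length of the interval: exponentiating the bounds on $\lambda$ and $P$ converts their (polynomial) growth in the interval length into a smooth—indeed exponential—dependence, so one should verify that the resulting constant still matches the form asserted in the statement (or, if necessary, record the dependence through a smooth function of the interval length rather than a polynomial one). Beyond this, the estimate is an immediate consequence of collecting the bounds on $\lambda$, $P$, $Q$ and $\scalphi$ together with the elementary control of $t^{-\mu}$ on a compact interval bounded away from zero.
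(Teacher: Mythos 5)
Your proof is correct and is exactly the argument the paper intends: the corollary is stated there as a direct consequence of Lemmata~\ref{lemm_unifboundPQscalphi} and~\ref{lemm_unifboundlambda}, with~$J$, $K$ constant, $t$ ranging over a compact interval bounded away from zero, and~$\potV$ controlled by continuity on the compact range of~$\scalphi$. Your closing caveat is also well taken: exponentiating the polynomial bounds on~$\lambda$ and~$P$ produces a smooth (exponential) rather than polynomial dependence on the interval length, so the constant is really of the type ``\smoothfctdependconstant'' used in the subsequent statements.
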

Strictly speaking, at this point it is enough to assume~$C^1$ or even Lipschitz regularity for the potential, but from here on out we make the smoothness assumption we also use in the global existence and uniqueness statement.

We can now improve the integral lower bound on~$\alpha$ to a pointwise one.
\begin{lemm}
\label{lemm_alphalowerbound}
	\consevolequscompinterval
	\assumptnonnegatpotential
	Then~$\alpha^{{-}1}$ is uniformly bounded from above, or equivalently~$\alpha$ is uniformly bounded away from zero, by constants \smoothfctdependconstant.
\end{lemm}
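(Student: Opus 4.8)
The plan is to control the logarithmic derivative $\alpha_t/\alpha$ pointwise on the compact interval $I=[t_0,t]$ and then integrate in time. First I would read off directly from the evolution equation~\eqref{eqn_timeevolalpha} that
\[
	\partt\ln\alpha = \frac{\alpha_t}{\alpha}
		= {-}\frac{e^{\lambda/2-P}J^2}{t^{5/2}} - \frac{e^{\lambda/2+P}(K-QJ)^2}{t^{5/2}} - 4t^{1/2}e^{\lambda/2}\potV(\scalphi),
\]
so that bounding $\alpha^{-1}$ from above is equivalent to bounding how negative this quantity can become over the interval.

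Each term on the right-hand side is already under control thanks to the boundedness results obtained earlier in this section. The two twist terms are bounded in absolute value by Corollary~\ref{coro_unifboundquotients} with $\mu=5/2$, while for the potential term I would combine the bound on $\potV$ from the same corollary with the bound on $e^{\lambda/2}$ coming from the uniform bound on $\lambda$ in Lemma~\ref{lemm_unifboundlambda}, noting that $t^{1/2}$ is trivially bounded on the compact interval $I$. This produces a constant $C_1$, \dependconstant, such that $\absval{\alpha_t/\alpha}\le C_1$ on all of $I\times\SS^1$.

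Integrating from $t_0$ to $t$ and exponentiating then gives
\[
	\alpha(t,\theta) \ge \alpha(t_0,\theta)\,e^{{-}C_1(t-t_0)}
		\ge \Bigl(\min_{\theta\in\SS^1}\alpha(t_0,\theta)\Bigr)e^{{-}C_1(t-t_0)} > 0,
\]
where the minimum is positive since $\alpha(t_0,\cdot)$ is continuous and strictly positive on the compact sphere $\SS^1$. Taking reciprocals yields the claimed uniform upper bound on $\alpha^{-1}$.

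I do not expect a genuine obstacle here: once the bounds of this section are available, the statement follows from a one-line integration, and the monotone decay of $\alpha$ from Lemma~\ref{lemm_estimatealphaVnonnegative} only confirms that a lower bound is what is needed. The one point that needs care is the precise form of the dependence on the interval length asserted in the lemma. Since $C_1$ depends only polynomially on the length of the interval (through Lemma~\ref{lemm_unifboundlambda} and Corollary~\ref{coro_unifboundquotients}), the factor $e^{C_1(t-t_0)}$ is a smooth function of $(t-t_0)$, so the final bound is controlled by a constant \smoothfctdependconstant, as required.
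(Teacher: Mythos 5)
Your proof is correct and follows essentially the same route as the paper: the paper's proof also considers $\ln\alpha$, notes that its time derivative is uniformly bounded on compact time intervals via the evolution equation~\eqref{eqn_timeevolalpha} and the previously established bounds on the zeroth-order quantities, and concludes by integrating and exponentiating. Your version merely spells out the constants and the dependence on the interval length more explicitly, which is consistent with the assertion of the lemma.
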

\begin{proof}
	We consider~$\ln \alpha$ and conclude from the evolution equation~\eqref{eqn_timeevolalpha} that its time derivative is non-positive and uniformly bounded on compact time intervals due to the uniform boundedness of all zeroth derivatives, which we have proven in the earlier statements in this section. As a consequence, $\ln\alpha$ is uniformly bounded by a constant \dependconstant.
	By the proporties of the exponential function, the statement follows.
\end{proof}

Having found uniform bounds for all non-differentiated variables, we now turn to their first derivatives.
\begin{lemm}
\label{lemm_unifboundtimederivalpha}
	\consevolequscompinterval
	\assumptnonnegatpotential
	Then~$\alpha_t/\alpha$ and~$\alpha_t$ are uniformly bounded by constants \dependconstant.
\end{lemm}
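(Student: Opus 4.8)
The plan is to read off the bound on $\alpha_t/\alpha$ directly from the evolution equation~\eqref{eqn_timeevolalpha} and then deduce the bound on $\alpha_t$ by combining this with the already-established upper bound on $\alpha$. The key observation is that every term on the \rhs\ of~\eqref{eqn_timeevolalpha} has by now been shown to be uniformly bounded on compact time intervals, so there is essentially nothing left to estimate.

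Concretely, recall that
\begin{equation}
	\frac{\alpha_t}{\alpha}
		={}-\frac{e^{\lambda/2-P}J^2}{t^{5/2}} - \frac{e^{\lambda/2+P}(K-QJ)^2}{t^{5/2}} - 4t^{1/2}e^{\lambda/2}\potV(\scalphi).
\end{equation}
The first two terms are quotients of exactly the form treated in Corollary~\ref{coro_unifboundquotients} (with $\mu=5/2$), hence uniformly bounded by a constant \dependconstant. For the third term, the factor $t^{1/2}$ is bounded on the compact interval $I=[t_0,t]$, the exponential $e^{\lambda/2}$ is controlled by the uniform bound on $\lambda$ from Lemma~\ref{lemm_unifboundlambda}, and $\potV(\scalphi)$ is bounded by Corollary~\ref{coro_unifboundquotients} (using that $\scalphi$ is bounded by Lemma~\ref{lemm_unifboundPQscalphi} and that $\potV$ is continuous, as noted in Remark~\ref{rema_assumpotentialforglobalexist}). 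Summing these, $\absval{\alpha_t/\alpha}$ is uniformly bounded by a constant \dependconstant.

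First I would state the bound on $\alpha_t/\alpha$ as above. Then, to pass to $\alpha_t$ itself, I would simply write $\alpha_t = (\alpha_t/\alpha)\cdot\alpha$ and invoke the upper bound $\alpha\le\alpha(t_0,\cdot)$ from Lemma~\ref{lemm_estimatealphaVnonnegative}; the product of two uniformly bounded quantities is uniformly bounded, again by a constant \dependconstant. I do not expect any genuine obstacle here: the lemma is a bookkeeping corollary of the boundedness results already assembled in this section, and the only point requiring a moment's care is making sure every factor appearing on the \rhs\ of~\eqref{eqn_timeevolalpha} has indeed been bounded in an earlier lemma, which it has.
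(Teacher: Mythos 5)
Your proposal is correct and follows essentially the same route as the paper, which likewise reads the bound on $\alpha_t/\alpha$ directly off equation~\eqref{eqn_timeevolalpha} using the boundedness of all non-differentiated variables, and then obtains $\alpha_t$ from the upper bound on $\alpha$. You also correctly avoid invoking the pointwise lower bound on $\alpha$ from Lemma~\ref{lemm_alphalowerbound}, which is exactly the point the paper emphasizes, since that would weaken the claimed dependence of the constants to the smooth-function form.
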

\begin{proof}
	This is a direct consequence of the evolution equation~\eqref{eqn_timeevolalpha} and boundedness of all non-differentiated variables. Note that we do not make use of the lower bound of~$\alpha$.
\end{proof}

For the first derivatives of the variables~$P$, $Q$ and~$\scalphi$, we use a lightcone integration ansatz. Similar approaches have been used for example in~\cite{andreassonrendallweaver_existCMCT2Vlasov} or~\cite{andreasson_globalfoliatgowdysymm} (both times in Part 2 of Section 4). 
To this end, we consider the quantities~$\Afctpm$ and~$\Bfctpm$ from equations~\eqref{eqn_defiAfctpm} and~\eqref{eqn_defiBfctpm} whose lightcone derivatives we can estimate using the boundedness statements for the zeroth derivatives. We integrate these lightcone derivatives along characteristic curves between the~$t_0$- and the~$t$-timeslice. Grönwall's lemma then yields uniform bounds for~$\Afctpm$ and~$\Bfctpm$, which imply the requested bounds on the derivatives of~$P$, $Q$ and~$\scalphi$.

\begin{prop}
\label{prop_estimfirstderivPQ}
	\consevolequscompinterval
	\assumptnonnegatpotential
	Then~$P_t$, $P_\theta$, $Q_t$ and~$Q_\theta$ are uniformly bounded, by constants \smoothfctdependconstant.
\end{prop}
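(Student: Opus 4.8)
The plan is to follow the lightcone integration strategy outlined in the paragraph preceding the statement, using the quantity~$\Afctpm=(\partpm P)^2+e^{2P}(\partpm Q)^2$ from~\eqref{eqn_defiAfctpm} together with the evolution formula for~$\partpm\Afctmp$ established in Lemma~\ref{lemm_Afctlightconederiv}. The key observation is that~$\Afctpm$ controls exactly the combinations~$P_t\pm\alpha^{1/2}P_\theta$ and~$Q_t\pm\alpha^{1/2}Q_\theta$, so a uniform bound on both~$\Afctplus$ and~$\Afctminus$ (together with the already-established pointwise lower bound on~$\alpha$ from Lemma~\ref{lemm_alphalowerbound}) immediately yields bounds on all of~$P_t$, $P_\theta$, $Q_t$ and~$Q_\theta$ by taking sums and differences.

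First I would inspect the right-hand side of the formula in Lemma~\ref{lemm_Afctlightconederiv}. The leading term~${-}(2/t-\alpha_t/\alpha)\Afctmp$ is of the form (bounded coefficient)~$\times\Afctmp$, since Lemma~\ref{lemm_unifboundtimederivalpha} bounds~$\alpha_t/\alpha$ and~$t\ge t_0>0$. The remaining terms are either linear in the first derivatives~$P_\mp$, $Q_\mp$ with coefficients that are uniformly bounded (the twist terms use Corollary~\ref{coro_unifboundquotients} to bound the quotients~$e^{\lambda/2\mp P}J^2/t^{7/2}$ etc., and the~$\mp(2/t)\alpha^{1/2}(P_\theta P_\mp+e^{2P}Q_\theta Q_\mp)$ term uses boundedness of~$\alpha$ and~$P$), or quadratic in the first derivatives. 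Using Young's inequality~$|ab|\le\tfrac12(a^2+b^2)$ on the linear terms, I can estimate the entire right-hand side by~$C_1\Afctmp+C_2$, where~$C_1,C_2$ depend only on the zeroth-order bounds and on~$t_0,t$. The point is that a term linear in~$P_\mp$ is absorbed into a constant times~$\Afctmp$ plus a constant, because~$\Afctmp$ itself dominates~$(P_\mp)^2$.

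The next step is the integration along characteristics. The operator~$\partpm=\partt\pm\alpha^{1/2}\parttheta$ is the derivative along the characteristic curves of the metric; these curves have bounded speed~$\alpha^{1/2}$ on the compact slab~$[t_0,t]\times\SS^1$, so from any point in the~$t$-slice one can trace a characteristic back to the~$t_0$-slice. Along such a curve, $\tfrac{d}{ds}\Afctmp=\partpm\Afctmp\le C_1\Afctmp+C_2$, and Grönwall's lemma gives a bound on~$\Afctmp$ at time~$t$ in terms of its value at~$t_0$ (controlled by the smooth initial data) multiplied by~$e^{C_1(t-t_0)}$ plus a~$C_2$-dependent term. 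Running this for both signs yields uniform bounds on~$\Afctplus$ and~$\Afctminus$, hence on all four derivatives. The main obstacle is the bookkeeping in the first step: one must verify that \emph{every} term on the right-hand side of Lemma~\ref{lemm_Afctlightconederiv}, including the twist contributions, is genuinely of the absorbable form~$C_1\Afctmp+C_2$, with constants that have the claimed dependence (smoothly on the interval length through the~$e^{C_1(t-t_0)}$ factor), and to confirm that the characteristics actually sweep out the whole slab so that the Grönwall bound is uniform in~$\theta$.
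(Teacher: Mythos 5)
Your overall strategy is the paper's strategy: bound $\Afctpm$ via its lightcone derivative from Lemma~\ref{lemm_Afctlightconederiv}, integrate along characteristics, apply Grönwall, and then recover $P_t,P_\theta,Q_t,Q_\theta$ using the lower bound on~$\alpha$ from Lemma~\ref{lemm_alphalowerbound} and the bound on~$P$ from Lemma~\ref{lemm_unifboundPQscalphi}. However, there is a genuine gap in your first step. You claim the entire right-hand side of Lemma~\ref{lemm_Afctlightconederiv} can be estimated by $C_1\Afctmp+C_2$, treating the term $\mp\tfrac2t\alpha^{1/2}(P_\theta P_\mp+e^{2P}Q_\theta Q_\mp)$ as linear in $P_\mp,Q_\mp$ with bounded coefficients "using boundedness of $\alpha$ and $P$". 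This is false: the coefficients here are $P_\theta$ and $Q_\theta$, which are precisely among the unbounded first derivatives you are trying to control. Applying Young's inequality to this term produces $\alpha P_\theta^2$ and $e^{2P}\alpha Q_\theta^2$, and since $2\alpha^{1/2}P_\theta=P_+-P_-$, these are controlled only by the \emph{sum} $\Afct_++\Afct_-$, not by $\Afctmp$ alone. The correct estimate, as in the paper, is
\begin{equation}
\absval{\partpm\Afctmp}\le C+C(\Afct_++\Afct_-),
\end{equation}
and no estimate of the form $C_1\Afctmp+C_2$ is available.

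This defect breaks your Grönwall step as stated: integrating $\Afctmp$ along a single characteristic gives an inequality whose right-hand side involves $\Afctpm$ evaluated along that same curve, a quantity not controlled by the integration you are performing, so the argument does not close. The repair is structural, not mere bookkeeping: one must couple the two signs. Following the paper, set $G=\tfrac12(\Afct_++\Afct_-)$, fix $(t,\theta)$, integrate $\Afct_+$ along the minus-characteristic and $\Afct_-$ along the plus-characteristic, both ending at $(t,\theta)$, and sum. Taking the supremum over $\theta\in\SS^1$ then yields
\begin{equation}
\sup_{\theta\in\SS^1}G(t,\cdot)\le C+\int_{t_0}^{t}C\sup_{\theta\in\SS^1}G(s,\cdot)\,ds,
\end{equation}
to which Grönwall applies, giving the uniform bound on $G$ and hence on $P_t$, $e^PQ_t$, $\alpha^{1/2}P_\theta$, $\alpha^{1/2}e^PQ_\theta$. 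Your final recovery of the four derivatives from these quantities is correct as written.
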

\begin{prop}
\label{prop_estimfirstderivscalphi}
	\consevolequscompinterval
	\assumptnonnegatpotential
	Then~$\scalphi_t$ and~$\scalphi_\theta$ are uniformly bounded, by constants \smoothfctdependconstant.
\end{prop}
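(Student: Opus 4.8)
The plan is to run the same lightcone-integration argument already announced for Proposition~\ref{prop_estimfirstderivPQ}, now applied to the scalar field quantity $\Bfctpm=(\partpm\scalphi)^2$. The point of departure is the identity for its lightcone derivative recorded in Lemma~\ref{lemm_Bfctlightconederiv}. Since $\scalphi_t=\frac12(\partpm\scalphi+\partmp\scalphi)$ and $\alpha^{1/2}\scalphi_\theta=\frac12(\partpm\scalphi-\partmp\scalphi)$, one has $\Bfctpm+\Bfctmp=2(\scalphi_t^2+\alpha\scalphi_\theta^2)$, so a uniform bound on the sum $\Bfctpm+\Bfctmp$ immediately controls $\scalphi_t$ and, after invoking the pointwise lower bound on $\alpha$ from Lemma~\ref{lemm_alphalowerbound}, also $\scalphi_\theta$. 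Thus it suffices to bound this sum.

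Next I would estimate the \rhs\ of Lemma~\ref{lemm_Bfctlightconederiv} using only zeroth-order information. The coefficient $-(2/t-\alpha_t/\alpha)$ is uniformly bounded on the compact interval $I=[t_0,t]$ because $t\ge t_0>0$ and $\alpha_t/\alpha$ is bounded by Lemma~\ref{lemm_unifboundtimederivalpha}. The mixed term $\mp\frac2t\alpha^{1/2}\scalphi_\theta\scalphi_\mp$ is, up to the bounded factor $2/t$, a product of two lightcone derivatives and is therefore controlled by $\Bfctpm+\Bfctmp$. In the potential term $-2t^{{-}1/2}e^{\lambda/2}\potV'\scalphi_\mp$, the factor $t^{{-}1/2}e^{\lambda/2}\potV'$ is uniformly bounded by the zeroth-order bounds on $\lambda$ and $\scalphi$ together with the smoothness of $\potV$ (cf.\ Corollary~\ref{coro_unifboundquotients} and Remark~\ref{rema_assumpotentialforglobalexist}), while $\absval{\scalphi_\mp}\le\frac12(1+\Bfctmp)$. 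Altogether one obtains a differential inequality of the schematic form $\absval{\partpm\Bfctmp}\le C(\Bfctpm+\Bfctmp+1)$, with $C$ depending only on the zeroth-order bounds.

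I would then integrate along characteristics. For a fixed $(t,\theta)$ the quantity $\Bfctmp$ is transported along the $\partpm$-characteristic and $\Bfctpm$ along the $\partmp$-characteristic; tracing both back to the initial slice $\{t_0\}\times\SS^1$ and adding the two integral identities yields
\begin{equation}
	(\Bfctpm+\Bfctmp)(t,\theta)\le w(t_0)+C\int_{t_0}^t\bigl(w(s)+1\bigr)\,ds,
\end{equation}
where $w(s)=\sup_{\theta}(\Bfctpm+\Bfctmp)(s,\cdot)$ and $w(t_0)$ is determined by the initial data. Taking the supremum over $\theta$ on the \lhs\ and applying Grönwall's lemma bounds $w(t)$ by an expression of the form $(w(t_0)+C(t-t_0))e^{C(t-t_0)}$, which is exactly the asserted type of constant: it depends only on the initial data at $t_0$ and, through $C$ and the exponential, via a smooth function on the length of the interval.

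The only real subtlety, as in the $P$, $Q$ case, is that the mixed term couples $\Bfctpm$ and $\Bfctmp$, so the two functions cannot be estimated separately; this forces one to work with the combined quantity and to integrate the two lightcone derivatives along their respective characteristics simultaneously. The one genuinely new feature compared with Proposition~\ref{prop_estimfirstderivPQ} is the inhomogeneous potential contribution $-2t^{{-}1/2}e^{\lambda/2}\potV'\scalphi_\mp$; but since $\potV'$ is bounded whenever $\scalphi$ is, this term only adds a harmless linear-plus-constant contribution to the Grönwall inequality and presents no real obstacle.
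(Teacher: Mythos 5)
Your proposal is correct and follows essentially the same route as the paper: the lightcone derivative identity of Lemma~\ref{lemm_Bfctlightconederiv}, term-by-term estimation via the zeroth-order bounds (including the treatment of the potential term through the boundedness of~$\potV'$ on the range of~$\scalphi$), the schematic inequality $\absval{\partpm\Bfctmp}\le C(\Bfct_++\Bfct_-+1)$, integration along the two characteristics followed by Grönwall, and finally the lower bound on~$\alpha$ from Lemma~\ref{lemm_alphalowerbound} to recover~$\scalphi_\theta$. The paper presents this more tersely by deferring the characteristic-integration step to the proof of Proposition~\ref{prop_estimfirstderivPQ}, but the substance is identical to what you wrote.
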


\begin{proof}[Proof of Proposition~\ref{prop_estimfirstderivPQ}]
	We consider the function~$\Afctmp$ defined in~\eqref{eqn_defiAfctpm}, whose lightcone derivative has been computed in Lemma~\ref{lemm_Afctlightconederiv}. All terms appearing in the second line of that statement can be estimated by~$C\sqrt{\Afctmp}$, due to uniform boundedness of the quotients given via Corollary~\ref{coro_unifboundquotients} and Lemma~\ref{lemm_unifboundPQscalphi}. For the terms in the first line in Lemma~\ref{lemm_Afctlightconederiv}, a short computation of the individual terms of the function~$\Afctmp$ gives
	\begin{align}
		\mp\frac2t\alpha^{1/2}(P_\mp P_\theta +e^{2P}Q_\mp Q_\theta)
		  ={}&\frac{1}{2t}(\Afctmp-\Afctpm)+\frac2t\alpha(P_\theta^2+e^{2P} Q_\theta^2)\\
		  \le{}&\frac1t \Afctmp +\frac1{2t}(\Afct_++\Afct_-).
	\end{align}
	Further, $\alpha_t/\alpha$ is uniformly bounded by Lemma~\ref{lemm_unifboundtimederivalpha}. Consequently, we can estimate
	\begin{equation}
	\label{eqn_auxilestimintegrderivAfctpm}
		  \absval{\partpm\Afctmp}
		  \le C\Afctmp + C\sqrt{\Afctmp} +C(\Afct_++\Afct_-)
		  \le C+ C(\Afct_++\Afct_-),
	\end{equation}
	where in the second estimate we have used~$\sqrt x\le 1+x$.

	We now set
	\begin{equation}
		G(t,\theta)=P_{t}^2+\alpha P_{\theta}^2+e^{2P}Q_{t}+e^{2P}\alpha Q_{\theta}^2=\frac12(\Afct_++\Afct_-)
	\end{equation}
	and fix a point~$(t,\theta)$. We further set~$c_\pm$ to be two characteristic curves in the spacetime, starting at some point on the~$t_0$-level and ending in the point~$(t,\theta)$, such that their velocity vector is~$\partpm$.
	In particular, the curve parameter~$s$ can be chosen such that~$c_\pm(s)$ is contained in the set~$\{s\}\times \TT^3$.

	We can rewrite~$G$ by integrating along these curves, in fact
	\begin{align}
		G(t,\theta)={}&\frac12(\Afct_++\Afct_-)(t,\theta)\\
		  ={}&\frac12\Afct_+(c_-(t_0))+\frac12\int_{t_0}^{t}\partial_-\Afct_+(c_-(s))ds\\
		      &{}+\frac12\Afct_-(c_+(t_0))+\frac12\int_{t_0}^{t}\partial_+\Afct_-(c_+(s))ds.
	\end{align}
	Applying estimate~\eqref{eqn_auxilestimintegrderivAfctpm}, we find that
	\begin{equation}
		\sup_{\theta\in\SS^1}G\le C+ \int_{t_0}^{t_1}(C\sup_{\theta\in\SS^1}G)ds.
	\end{equation}
	Grönwall's lemma yields a uniform bound on~$\sup_\theta G$ and hence on~$G$, which implies uniform bounds on~$P_t$, $e^PQ_t$, $\alpha^{1/2}P_\theta$ and~$\alpha^{1/2}e^PQ_\theta$. The lower bound on~$\alpha$ found in Lemma~\ref{lemm_alphalowerbound} and the uniform bound on~$P$, see Lemma~\ref{lemm_unifboundPQscalphi}, concludes the proof.
\end{proof}

\begin{proof}[Proof of Proposition~\ref{prop_estimfirstderivscalphi}]
	A short computation of the individual terms of the function~$\Bfctpm$ defined in~\eqref{eqn_defiBfctpm} reveals that
	\begin{align}
		\mp \frac2t\alpha^{1/2}\scalphi_\theta\scalphi_\mp
		  ={}&\frac{1}{2t}(\Bfctmp-\Bfctpm) +\frac2t\alpha\scalphi_\theta^2\\
		  \le{}&\frac1t\Bfctmp +\frac1{2t}(\Bfct_++\Bfct_-).
	\end{align}
	Combining this estimate with Lemma~\ref{lemm_Bfctlightconederiv}, we can estimate the lightcone derivative of~$\Bfctpm$:
	\begin{equation}
	\label{eqn_lightestimBfctmp}
		\partpm\Bfctmp\le {-}\frac1t \Bfctmp +\frac1{2t}(\Bfct_++\Bfct_-) +\frac{\alpha_t}{\alpha}\scalphi_\mp^2-2t^{{-}1/2}e^{\lambda/2}\potV'\scalphi_\mp.
	\end{equation}
	
	The first derivative of the smooth function~$\potV$ has uniform bounds on the compact set of possible~$\scalphi$-values, see Remark~\ref{rema_assumpotentialforglobalexist}, and using the uniform bounds for all zeroth derivatives, we can estimate the last term in~\eqref{eqn_lightestimBfctmp} via
	\begin{equation}
		\absval{2t^{{-}1/2}e^{\lambda/2}\potV'\scalphi_\mp}\le C\absval{\scalphi_\mp}\le C+C\scalphi_\mp^2.
	\end{equation}
	Using Lemma~\ref{lemm_unifboundtimederivalpha} to estimate the term~$\alpha_t/\alpha$, we therefore find that inequality~\eqref{eqn_lightestimBfctmp} implies
	\begin{equation}
		\absval{\partpm\Bfctmp}\le C+ C(\Bfct_++\Bfct_-).
	\end{equation}
	This is the same estimate as the one we had available for~$\Afctmp$ in the previous proof, see~\eqref{eqn_auxilestimintegrderivAfctpm}.
	We set
	\begin{equation}
		H(t,\theta)=\scalphi_t^2+\alpha\scalphi_\theta^2=\frac12(\Bfct_++\Bfct_-)
	\end{equation}
	and integrate the lightcone derivatives of~$\Bfctpm$ along characteristic curves~$c_\pm$ defined as in the previous proof. Using the same steps as in the previous proof, we show uniform boundedness of~$H$, and from this obtain the statement.
\end{proof}

As a direct consequence of Proposition~\ref{prop_estimfirstderivscalphi} giving uniform bounds on the first derivatives of~$\scalphi$, we find the following corollary, see also Remark~\ref{rema_assumpotentialforglobalexist}.

\begin{coro}
\label{coro_boundsremainingfirstderiv}
	\consevolequscompinterval
	\assumptnonnegatpotential
	Then the first derivatives~$\potV'=d\potV/d\scalphi$, $\partial_t\potV=\potV'\scalphi_t$ and~$\partial_\theta\potV=\potV'\scalphi_\theta$ of the potential are uniformly bounded by constants \smoothfctdependconstant.
\end{coro}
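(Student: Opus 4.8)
The plan is to obtain each of the three bounds by writing the quantity as a product of factors already controlled in this section, so that the corollary follows by combining earlier estimates rather than by any new computation.

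First I would bound $\potV'(\scalphi)$ pointwise. By Lemma~\ref{lemm_unifboundPQscalphi} the scalar field $\scalphi$ is uniformly bounded on $I\times\SS^1$, so its values lie in a compact interval $[{-}M,M]\subset\RR$ whose endpoint $M$ depends only on the initial data at $t_0$ and on the length of $I$. Since $\potV$ is $C^\infty$ by assumption, $\potV'$ is continuous and hence attains a finite maximum on the compact set $[{-}M,M]$; this is the quantitative form of the observation recorded in Remark~\ref{rema_assumpotentialforglobalexist}, that $\potV'$ cannot blow up for a bounded scalar field. The resulting bound $\sup_{\absval{s}\le M}\absval{\potV'(s)}$ then controls $\potV'(\scalphi)$ uniformly.

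Next I would treat the spacetime derivatives. By the chain rule $\partial_t\potV=\potV'(\scalphi)\scalphi_t$ and $\partial_\theta\potV=\potV'(\scalphi)\scalphi_\theta$, so that each is the product of the factor $\potV'(\scalphi)$ just bounded with one of $\scalphi_t$, $\scalphi_\theta$. The latter are uniformly bounded by Proposition~\ref{prop_estimfirstderivscalphi}, by constants \smoothfctdependconstant, and multiplying the two bounds yields the claim for $\partial_t\potV$ and $\partial_\theta\potV$.

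There is no genuine analytic difficulty here; the only point that warrants attention is the bookkeeping of the constant dependence. One must verify that inserting the bound $M$ on $\scalphi$ into the smooth function $\potV'$, and then multiplying by the bounds on $\scalphi_t$ and $\scalphi_\theta$, leaves one within the admissible class of constants \smoothfctdependconstant. Since the relevant dependences on the length of $I$ are combined only by composition with the fixed smooth function $\potV'$ and by taking products, they again produce a bound of the stated form, and the corollary follows.
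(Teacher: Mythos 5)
Your argument is correct and is essentially the paper's own proof: the corollary is stated there as a direct consequence of Proposition~\ref{prop_estimfirstderivscalphi} together with Remark~\ref{rema_assumpotentialforglobalexist}, which is exactly the combination you spell out (bound $\scalphi$ via Lemma~\ref{lemm_unifboundPQscalphi}, bound $\potV'$ on the resulting compact range by smoothness, then apply the chain rule with the bounds on $\scalphi_t$, $\scalphi_\theta$). Your extra care about the constant class \smoothfctdependconstant\ is a reasonable elaboration of what the paper leaves implicit.
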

\begin{lemm}
	\consevolequscompinterval
	\assumptnonnegatpotential
	Then~$\lambda_t$ and~$\lambda_\theta$ are uniformly bounded by constants \smoothfctdependconstant.
\end{lemm}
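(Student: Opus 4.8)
The plan is to observe that, in contrast to the first derivatives of $P$, $Q$ and $\scalphi$ treated above via a lightcone integration, the derivatives $\lambda_t$ and $\lambda_\theta$ require no dynamical argument at all: both are given \emph{algebraically} in terms of quantities already bounded in this section. Concretely, I would read off $\lambda_\theta$ from the constraint equation~\eqref{eqn_spaceevollambda} and $\lambda_t$ from the evolution equation~\eqref{eqn_timeevollambda}, and then bound each term on the respective right-hand side directly. No Grönwall-type or integration step is needed.

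For $\lambda_\theta$, the right-hand side of~\eqref{eqn_spaceevollambda} is $2t$ times a sum of the products $P_tP_\theta$, $e^{2P}Q_tQ_\theta$ and $2\scalphi_t\scalphi_\theta$. Each of the factors $P_t$, $P_\theta$, $Q_t$, $Q_\theta$ is uniformly bounded by Proposition~\ref{prop_estimfirstderivPQ} and each of $\scalphi_t$, $\scalphi_\theta$ by Proposition~\ref{prop_estimfirstderivscalphi}; the coefficient $e^{2P}$ is bounded since $P$ is, by Lemma~\ref{lemm_unifboundPQscalphi}; and $t$ is bounded on the compact interval $I=[t_0,t]$. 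Multiplying the individual bounds yields the claim for $\lambda_\theta$.

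For $\lambda_t$, I would treat the three groups of terms in~\eqref{eqn_timeevollambda} in turn. The bracketed term is $t$ times a combination of $P_t^2$, $\alpha P_\theta^2$, $e^{2P}Q_t^2$, $e^{2P}\alpha Q_\theta^2$, $\scalphi_t^2$ and $\alpha\scalphi_\theta^2$; here the first derivatives are bounded by Propositions~\ref{prop_estimfirstderivPQ} and~\ref{prop_estimfirstderivscalphi}, $\alpha$ is bounded from above by Lemma~\ref{lemm_estimatealphaVnonnegative}, and $e^{2P}$ is bounded by Lemma~\ref{lemm_unifboundPQscalphi}. The two twist quotients, which here carry $\mu=5/2$, are bounded by Corollary~\ref{coro_unifboundquotients}, and the final term $4t^{1/2}e^{\lambda/2}\potV(\scalphi)$ is bounded because $\lambda$ is bounded by Lemma~\ref{lemm_unifboundlambda} (so $e^{\lambda/2}$ is bounded) and $\potV$ is bounded by Corollary~\ref{coro_unifboundquotients}. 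With $t$ again bounded on $I$, all terms are uniformly bounded, which gives the claim for $\lambda_t$.

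There is essentially no obstacle in the estimates themselves; the only care needed is the bookkeeping of how the resulting constants depend on the length of the interval. Since the first-derivative bounds entering these expressions already depend only on the initial data at $t_0$ and smoothly on the interval length (in the sense of the preceding propositions), and the explicit factors $t$ and $t^{1/2}$ contribute in a smooth, indeed polynomial, way, the constants bounding $\lambda_t$ and $\lambda_\theta$ are again of the asserted form.
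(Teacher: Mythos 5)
Your proposal is correct and is essentially the paper's own proof: the paper likewise observes that $\lambda_\theta$ and $\lambda_t$ are given algebraically by the right-hand sides of equations~\eqref{eqn_spaceevollambda} and~\eqref{eqn_timeevollambda}, all of whose terms are uniformly bounded by the preceding results of the section. Your version merely spells out the term-by-term bookkeeping (first-derivative bounds, the bounds on $\alpha$, $P$, $\lambda$, the twist quotients and the potential) that the paper leaves implicit.
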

\begin{proof}
	This follows directly from the evolution equations, as all quantities appearing on the \rhs\ of equations~\eqref{eqn_timeevollambda} and~\eqref{eqn_spaceevollambda} are uniformly bounded by the boundedness statements we have obtained so far.
\end{proof}

\begin{lemm}
	\consevolequscompinterval
	\assumptnonnegatpotential
	Then~$\alpha_\theta$ is uniformly bounded by constants \smoothfctdependconstant.
\end{lemm}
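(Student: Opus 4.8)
The plan is to derive a first-order ordinary differential equation in~$t$ for the quantity~$\alpha_\theta/\alpha$ and then integrate it. The key observation is that the evolution equation~\eqref{eqn_timeevolalpha} may be read as~$\partial_t\ln\alpha=R$, where~$R$ denotes its \rhs. Differentiating this relation with respect to~$\theta$ and using equality of mixed partials, the \lhs\ becomes~$\partial_\theta\partial_t\ln\alpha=\partial_t(\alpha_\theta/\alpha)$, so that
\begin{equation}
	\frac{d}{dt}\left(\frac{\alpha_\theta}{\alpha}\right)=\partial_\theta R.
\end{equation}
Since the solution is smooth, these manipulations are justified, and it suffices to show that~$\partial_\theta R$ is uniformly bounded on~$I\times\SS^1$.

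Next I would compute~$\partial_\theta R$ term by term. Recalling that~$J$ and~$K$ are constants, the~$\theta$-derivative of each of the two twist quotients produces the original quotient multiplied by a factor of the form~$\tfrac12\lambda_\theta\pm P_\theta$, together with a cross term of the shape~$t^{{-}5/2}e^{\lambda/2+P}(K-QJ)JQ_\theta$; the~$\theta$-derivative of the potential term~$4t^{1/2}e^{\lambda/2}\potV$ produces contributions involving~$\lambda_\theta$ and~$\potV'\scalphi_\theta$. Every factor appearing here is already known to be uniformly bounded on the compact interval: the quotients and~$\potV$ by Corollary~\ref{coro_unifboundquotients}, the exponentials by the bounds on~$\lambda$, $P$, $Q$ from Lemmata~\ref{lemm_unifboundlambda} and~\ref{lemm_unifboundPQscalphi}, the first derivatives~$\lambda_\theta$, $P_\theta$, $Q_\theta$, $\scalphi_\theta$ by the preceding lemma and Propositions~\ref{prop_estimfirstderivPQ} and~\ref{prop_estimfirstderivscalphi}, and~$\potV'$ by Corollary~\ref{coro_boundsremainingfirstderiv}. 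As~$t_0>0$, the powers of~$t$ are harmless. Hence~$\partial_\theta R$ is uniformly bounded by a constant \smoothfctdependconstant. Integrating the displayed ODE from~$t_0$ to~$t$ then gives
\begin{equation}
	\frac{\alpha_\theta}{\alpha}(t,\theta)=\frac{\alpha_\theta}{\alpha}(t_0,\theta)+\int_{t_0}^t(\partial_\theta R)(s,\theta)\,ds,
\end{equation}
whose \rhs\ is controlled in terms of the initial data at~$t_0$ and the length of the interval. Finally, multiplying by~$\alpha$, which is uniformly bounded from above by Lemma~\ref{lemm_estimatealphaVnonnegative}, yields the desired bound on~$\alpha_\theta$.

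The argument is essentially routine once the earlier first-derivative bounds are in place, so I do not expect a genuine obstacle. The only point requiring a little care is the cross term~$t^{{-}5/2}e^{\lambda/2+P}(K-QJ)JQ_\theta$: here one should observe that on a \emph{compact} time interval~$e^{\lambda/2+P}$ and~$(K-QJ)$ are each bounded individually, using boundedness of~$\lambda$, $P$ and~$Q$, so that the estimate does not rely on the quotient structure exploited in Corollary~\ref{coro_unifboundquotients}. This is precisely the step where compactness of~$I$ enters and is what makes the constant depend on the interval length.
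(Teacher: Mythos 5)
Your proof is correct and takes essentially the same route as the paper: differentiate the evolution equation~\eqref{eqn_timeevolalpha} with respect to~$\theta$, bound every resulting term by the previously established estimates on the zeroth and first derivatives, and integrate in time over the compact interval. The only cosmetic difference is that you work with~$\alpha_\theta/\alpha=\partial_\theta\ln\alpha$, so the ODE has no linear term, whereas the paper integrates the linear equation~$\partial_t\alpha_\theta=(\alpha_t/\alpha)\,\alpha_\theta+g$ with bounded coefficients and obtains~$\alpha_\theta$ directly.
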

\begin{proof}
	Using the evolution~\eqref{eqn_timeevolalpha}, we find that the~$\theta$-derivative of~$\alpha_t$ is
	\begin{equation}
		\alpha_{\theta t}=\frac{\alpha_t}{\alpha}\alpha_\theta + \alpha\parttheta(\frac{e^{\lambda/2-P}J^2}{t^{5/2}} + \frac{e^{\lambda/2+P}(K-QJ)^2}{t^{5/2}} + 4t^{1/2}e^{\lambda/2}\potV(\scalphi)).
	\end{equation}
	This equation is of the form
	\begin{equation}
		\partt\alpha_\theta=f\alpha_\theta +g,
	\end{equation}
	where the functions~$f$ and~$g$ are uniformly bounded by constants \smoothfctdependconstant, due to the results we have achieved in this section so far. Integration implies the statement.

\end{proof}
At this point, we have proven that on compact time intervals all first order derivatives, both with respect to time~$t$ and space~$\theta$, of all variables appearing in the evolution equations, are uniformly bounded by constants \smoothfctdependconstant. We continue with the second derivatives, but are only interested in~$P$, $Q$ and~$\scalphi$, as these are the only variables whose second derivatives we need to control in order to conclude global existence in Section~\ref{section_globalexist}. For the proof, we first provide bounds on the second derivatives of~$\alpha$.

\begin{lemm}
	\consevolequscompinterval
	\assumptnonnegatpotential
	Then~$\alpha_{tt}$ and~$\alpha_{t\theta}$ are uniformly bounded by constants \smoothfctdependconstant.
\end{lemm}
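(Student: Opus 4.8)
The plan is to obtain both bounds by direct differentiation of the evolution equation~\eqref{eqn_timeevolalpha}, exploiting that every first-order derivative appearing in this section has already been controlled. Write $F$ for the right-hand side of~\eqref{eqn_timeevolalpha}, so that $\alpha_t=\alpha F$; the function $F$ depends only on $\lambda$, $P$, $Q$, $\scalphi$ (through the potential $\potV$) and explicitly on $t$. Note that $F$ is itself uniformly bounded, since it equals $\alpha_t/\alpha$, which was handled in Lemma~\ref{lemm_unifboundtimederivalpha}.

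First I would differentiate $\alpha_t=\alpha F$ once more in time, obtaining $\alpha_{tt}=\alpha_t F+\alpha F_t$. Here $F_t$ is a combination of $\lambda_t$, $P_t$, $Q_t$ and $\partial_t\potV=\potV'\scalphi_t$ together with the explicit $t$-dependence. Every factor entering this expression---namely $\alpha$, $\alpha_t$, $F$, the first time derivatives of $\lambda$, $P$, $Q$, and $\partial_t\potV$---has been shown uniformly bounded earlier in this section, via Proposition~\ref{prop_estimfirstderivPQ}, Proposition~\ref{prop_estimfirstderivscalphi}, Corollary~\ref{coro_boundsremainingfirstderiv}, and the preceding bounds on $\lambda_t$ and $\alpha_t$. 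Hence $\alpha_{tt}$ is uniformly bounded by a constant \smoothfctdependconstant.

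The bound for $\alpha_{t\theta}$ follows in the same manner by differentiating $\alpha_t=\alpha F$ in $\theta$ instead of $t$: one finds $\alpha_{t\theta}=\alpha_\theta F+\alpha F_\theta$, where $F_\theta$ now involves $\lambda_\theta$, $P_\theta$, $Q_\theta$ and $\partial_\theta\potV=\potV'\scalphi_\theta$. The factor $\alpha_\theta$ is bounded by the immediately preceding lemma, and all remaining factors are the corresponding spatial first derivatives, again already controlled, so that $\alpha_{t\theta}$ is likewise uniformly bounded.

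I do not expect a genuine obstacle here. In contrast to the bound on $\alpha_\theta$, no Grönwall or integration argument is required, because neither $\alpha_{tt}$ nor $\alpha_{t\theta}$ appears on the right-hand side, and both expressions are purely algebraic in quantities already bounded. The only point demanding care is the potential: one must invoke the smoothness of $\potV$ (Remark~\ref{rema_assumpotentialforglobalexist}) together with Corollary~\ref{coro_boundsremainingfirstderiv} to guarantee that $\partial_t\potV$ and $\partial_\theta\potV$ remain bounded, which is precisely where the smooth dependence on the interval length enters.
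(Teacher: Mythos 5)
Your proposal is correct and takes essentially the same route as the paper: the paper's proof likewise observes that the $t$- and $\theta$-derivatives of the evolution equation~\eqref{eqn_timeevolalpha} contain only zeroth- and first-order derivative terms, all of which are already uniformly bounded by the preceding results of the section. Your explicit product-rule decomposition $\alpha_{tt}=\alpha_t F+\alpha F_t$, $\alpha_{t\theta}=\alpha_\theta F+\alpha F_\theta$ merely spells out the same algebraic argument in detail.
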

\begin{proof}
	The~$\theta$- and~$t$-derivative of the evolution equation~\eqref{eqn_timeevolalpha} contain only zeroth and first derivative terms. Those are uniformly bounded due to the statement we have obtained so far.
\end{proof}

\begin{prop}
\label{prop_estimsecondderivPQ}
	\consevolequscompinterval
	\assumptnonnegatpotential
	Then the second derivatives~$P_{tt}$, $P_{t\theta}$ and~$P_{\theta\theta}$ of~$P$, and equivalently for~$Q$, are uniformly bounded by constants \smoothfctdependconstant.
\end{prop}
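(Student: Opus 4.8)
The plan is to reproduce the lightcone integration argument of Proposition~\ref{prop_estimfirstderivPQ}, now applied to the second-order quantity~$\Afcttwopm$. The key observation is that, since~$\partpm P_t=P_{tt}\pm\alpha^{1/2}P_{t\theta}$ and likewise for~$Q_t$, one has~$\Afcttwopm=(\partpm P_t)^2+e^{2P}(\partpm Q_t)^2$; thus~$\Afcttwopm$ has exactly the algebraic form of~$\Afctpm$, but with~$P,Q$ replaced by~$P_t,Q_t$. Controlling~$\Afcttwopm$ will therefore bound~$P_{tt}$, $P_{t\theta}$, $Q_{tt}$ and~$Q_{t\theta}$, once the lower bound on~$\alpha$ from Lemma~\ref{lemm_alphalowerbound} and the bound on~$P$ from Lemma~\ref{lemm_unifboundPQscalphi} are invoked.

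First I would differentiate the second-order evolution equations~\eqref{eqn_secondevolP} and~\eqref{eqn_secondevolQ} once in~$t$, obtaining wave-type equations for~$P_t$ and~$Q_t$ whose principal part coincides with that for~$P$ and~$Q$. Feeding these into the computation of~$\partpm\Afcttwomp$, exactly as in the derivation of Lemma~\ref{lemm_Afctlightconederiv}, produces the leading term~$-(2/t-\alpha_t/\alpha)\Afcttwomp$ together with lower-order source terms. These involve the twist quotients (uniformly bounded by Corollary~\ref{coro_unifboundquotients}), the zeroth and first derivatives of all variables (bounded by the earlier results of this section), the second derivatives~$\alpha_{tt}$ and~$\alpha_{t\theta}$ (bounded by the preceding lemma), and products such as~$e^{2P}Q_tQ_{tt}$ in which a single second derivative appears linearly. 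Absorbing the latter via~$\sqrt{\Afcttwomp}\le1+\Afcttwomp$ and handling the cross terms coming from the~$\alpha^{1/2}$ factors inside~$\partpm$ just as the~$\mp\frac2t\alpha^{1/2}(P_\theta P_\mp+e^{2P}Q_\theta Q_\mp)$ term was handled in Proposition~\ref{prop_estimfirstderivPQ}, I expect the estimate
\begin{equation}
	\absval{\partpm\Afcttwomp}\le C+C(\Afcttwoplus+\Afcttwominus),
\end{equation}
the exact analogue of~\eqref{eqn_auxilestimintegrderivAfctpm}. Setting~$\tilde G=\frac12(\Afcttwoplus+\Afcttwominus)$, integrating along the characteristic curves~$c_\pm$ between the~$t_0$- and the~$t$-timeslice, and applying Grönwall's lemma then yields a uniform bound on~$\tilde G$, and hence on~$P_{tt}$, $P_{t\theta}$, $Q_{tt}$ and~$Q_{t\theta}$.

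It remains to control the purely spatial second derivatives~$P_{\theta\theta}$ and~$Q_{\theta\theta}$, for which the lightcone machinery is unnecessary. Expanding the derivatives in~\eqref{eqn_secondevolP} gives a relation of the form~$\alpha^{{-}1/2}P_{tt}-\alpha^{1/2}P_{\theta\theta}=(\text{terms in zeroth and first derivatives})$, and analogously from~\eqref{eqn_secondevolQ} for~$Q_{\theta\theta}$. Since the right-hand side is already bounded and~$P_{tt}$ is now known to be bounded, solving for~$P_{\theta\theta}$ and using the upper and lower bounds on~$\alpha$ gives the claim, and similarly for~$Q_{\theta\theta}$.

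I expect the main obstacle to be the computation of~$\partpm\Afcttwomp$ itself: differentiating the already intricate equations~\eqref{eqn_secondevolP}--\eqref{eqn_secondevolQ} generates a large number of terms, and one must verify that each is a constant, controlled by an already-established bound, or linear in a single second derivative so as to be absorbable into~$C(\Afcttwoplus+\Afcttwominus)$. The only genuine danger is the appearance of an uncontrolled third-order derivative, but the wave-equation structure of~\eqref{eqn_secondevolP}--\eqref{eqn_secondevolQ} guarantees that the third derivatives recombine into the good quadratic form~$\Afcttwopm$, exactly as in the first-derivative case.
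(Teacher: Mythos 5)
Your strategy is the one the paper itself uses: the same quantity~$\Afcttwopm$, the same lightcone-integration-plus-Grönwall argument modelled on Proposition~\ref{prop_estimfirstderivPQ}, and the same final step of recovering~$P_{\theta\theta}$, $Q_{\theta\theta}$ from the wave equations once~$P_{tt}$, $P_{t\theta}$ are under control. There is, however, one concrete point at which your verification, as written, would fail. When you differentiate the wave equation~\eqref{eqn_secondevolPrewritten} in time, the term~$\partt(\alpha P_{\theta\theta})$ splits as~$\alpha P_{t\theta\theta}+\alpha_t P_{\theta\theta}$: the third-order part indeed recombines into the wave operator acting on~$P_t$, as you say, but the leftover source~$\alpha_t P_{\theta\theta}$ (and likewise~$\alpha_t Q_{\theta\theta}$ from the~$Q$-equation) is linear in a second derivative that is \emph{not} controlled by~$\Afcttwoplus+\Afcttwominus$, which only bounds~$P_{tt}$, $\alpha^{1/2}P_{t\theta}$, $e^PQ_{tt}$ and~$e^P\alpha^{1/2}Q_{t\theta}$. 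So your claims that every term linear in a second derivative is ``absorbable into~$C(\Afcttwoplus+\Afcttwominus)$'' and that ``the only genuine danger is the appearance of an uncontrolled third-order derivative'' are not quite right: with this source term present, the Grönwall inequality does not close.

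The fix is exactly the substitution you postpone to the very end: use the wave equation itself to write~$P_{\theta\theta}=\alpha^{-1}(P_{tt}-F)$, where~$F$ denotes the right-hand side of~\eqref{eqn_secondevolPrewritten}, so that
\begin{equation}
	\alpha_t P_{\theta\theta}=\frac{\alpha_t}{\alpha}P_{tt}-\frac{\alpha_t}{\alpha}F
\end{equation}
becomes linear in~$P_{tt}$ plus terms bounded by the zeroth- and first-derivative results (note that~$\alpha_t/\alpha$ is bounded by Lemma~\ref{lemm_unifboundtimederivalpha}, so no lower bound on~$\alpha$ is needed here). This is precisely the step the paper makes explicit: ``The term with~$P_{\theta\theta}$ can be replaced by the evolution equation~\eqref{eqn_secondevolPrewritten}.'' Once that substitution is made, your estimate~$\absval{\partpm\Afcttwomp}\le C+C(\Afcttwoplus+\Afcttwominus)$, the characteristic integration with Grönwall's lemma, and the concluding recovery of the purely spatial second derivatives from~\eqref{eqn_secondevolPrewritten} and~\eqref{eqn_secondevolQrewritten} all go through exactly as you describe.
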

\begin{proof}
	We start by rewriting the evolution equations~\eqref{eqn_secondevolP} and~\eqref{eqn_secondevolQ} as follows:
	\begin{equation}
	\label{eqn_secondevolPrewritten}
		P_{tt}-\alpha P_{\theta\theta}={-}\frac 1tP_t +\frac{\alpha_t}{2\alpha}P_t +\frac{\alpha_\theta}{2}P_\theta +e^{2P}(Q_t^2-\alpha Q_\theta^2)+\frac{e^{\lambda/2-P}J^2}{2t^{7/2}}-\frac{e^{\lambda/2+P}(K-QJ)^2}{2t^{7/2}}
	\end{equation}
	and 
	\begin{equation}
	\label{eqn_secondevolQrewritten}
		Q_{tt}-\alpha Q_{\theta\theta}={-}\frac 1tQ_t +\frac{\alpha_t}{2\alpha}Q_t +\frac{\alpha_\theta}{2}Q_\theta -2(P_tQ_t-\alpha P_\theta Q_\theta)+\frac{e^{\lambda/2-P}J(K-QJ)}{t^{7/2}}.
	\end{equation}
	We consider 
	\begin{equation}
		\Afcttwopm=(P_{tt}\pm\alpha^{1/2}P_{t\theta})^2+ e^{2P}(Q_{tt}\pm\alpha^{1/2}Q_{t\theta})^2
	\end{equation}
	and wish to estimate the lightcone derivative~$\partpm \Afcttwomp$. In order to do so, we notice that
	\begin{equation}
		\partpm(P_{tt}\mp\alpha^{1/2}P_{t\theta})=\partt(P_{tt}-\alpha P_{\theta\theta})+\alpha_tP_{\theta\theta} \mp\frac{\alpha_t}{2\alpha}\alpha^{1/2}P_{t\theta}-\frac{\alpha_\theta}{2} P_{t\theta}.
	\end{equation}
	The term with~$P_{\theta\theta}$ can be replaced by the evolution equation~\eqref{eqn_secondevolPrewritten}, and the first bracket can be computed using the time derivative of the same equation. This yields an expression which contains only derivatives up to order two, and none where~$P$ or~$Q$ is differentiated with respect to~$\theta$ twice.
	Moreover, the second derivatives of~$P$ and~$Q$ occur linearly.
	Equivalently, we can treat~$\partpm(e^{2P}(Q_{tt}\mp\alpha^{1/2}Q_{t\theta}))$. 
	We now combine the two expressions, using the uniform bounds of all zeroth and first derivatives as well as of~$\alpha_{tt}$ and~$\alpha_{t\theta}$ which we have achieved in this section so far. This yields
	\begin{equation}
		  \absval{\partpm\Afcttwomp}
		  \le C+C(\Afcttwoplus+\Afcttwominus).
	\end{equation}
	This estimate coincides with the one we found for~$\Afctmp$ in the proof of Proposition~\ref{prop_estimfirstderivPQ}, see~\eqref{eqn_auxilestimintegrderivAfctpm}, and the same chain of arguments as we employed there shows that
	\begin{equation}
		P_{tt}^2+\alpha P_{t\theta}^2+e^{2P}Q_{tt}+e^{2P}\alpha Q_{t\theta}^2
	\end{equation}
	is uniformly bounded on compact time intervals. Combining this with Lemma~\ref{lemm_alphalowerbound} providing a lower bound on~$\alpha$, as well as equations~\eqref{eqn_secondevolPrewritten} and~\eqref{eqn_secondevolQrewritten}, we also find uniform bounds for~$P_{t\theta}^2$ and $Q_{t\theta}^2$ as well as~$P_{\theta\theta}^2$ and~$Q_{\theta\theta}^2$, which concludes the proof.
\end{proof}
\begin{prop}
\label{prop_estimsecondderivscalphi}
	\consevolequscompinterval
	\assumptnonnegatpotential
	Then the second derivatives~$\scalphi_{tt}$, $\scalphi_{t\theta}$ and~$\scalphi_{\theta\theta}$ are uniformly bounded by constants \smoothfctdependconstant.
\end{prop}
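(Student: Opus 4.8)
The plan is to transcribe the lightcone integration argument of Proposition~\ref{prop_estimsecondderivPQ}, now applied to the scalar field. First I would rewrite the equation of motion~\eqref{eqn_eqnofmotionscalfieldwithT2}, equivalently the form~\eqref{eqn_secondevolscalphifromeqnmotion}, as the wave equation
\begin{equation}
	\scalphi_{tt}-\alpha\scalphi_{\theta\theta}={-}\frac1t\scalphi_t+\frac{\alpha_t}{2\alpha}\scalphi_t+\frac{\alpha_\theta}{2}\scalphi_\theta-t^{{-}1/2}e^{\lambda/2}\potV'.
\end{equation}
This is structurally identical to the rewritten equations~\eqref{eqn_secondevolPrewritten} and~\eqref{eqn_secondevolQrewritten} for~$P$ and~$Q$, the only new feature being the source term~${-}t^{{-}1/2}e^{\lambda/2}\potV'$ in place of the quadratic and twist contributions.

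Next I would consider~$\Bfcttwopm=(\scalphi_{tt}\pm\alpha^{1/2}\scalphi_{t\theta})^2$ and estimate the lightcone derivative~$\partpm\Bfcttwomp$. Exactly as in the proof of Proposition~\ref{prop_estimsecondderivPQ}, and by a computation depending only on how~$\partpm$ acts and not on the field, one has
\begin{equation}
	\partpm(\scalphi_{tt}\mp\alpha^{1/2}\scalphi_{t\theta})=\partt(\scalphi_{tt}-\alpha\scalphi_{\theta\theta})+\alpha_t\scalphi_{\theta\theta}\mp\frac{\alpha_t}{2\alpha}\alpha^{1/2}\scalphi_{t\theta}-\frac{\alpha_\theta}{2}\scalphi_{t\theta}.
\end{equation}
I would replace the isolated~$\scalphi_{\theta\theta}$ using the wave equation above, and substitute the time derivative of the same equation for the bracket~$\partt(\scalphi_{tt}-\alpha\scalphi_{\theta\theta})$. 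This removes every twice~$\theta$-differentiated occurrence of~$\scalphi$ and leaves an expression in which all second derivatives enter linearly, with coefficients built from zeroth and first derivatives of~$\alpha$, $\lambda$, $\scalphi$ together with~$\alpha_{tt}$ and~$\alpha_{t\theta}$, all already bounded on compact intervals by the earlier lemmata of this section.

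The one genuinely new ingredient, and the step I expect to demand the most care, is the time derivative of the potential source term: differentiating~${-}t^{{-}1/2}e^{\lambda/2}\potV'$ in~$t$ produces contributions involving~$\lambda_t$ and, via the chain rule~$\partial_t\potV'=\potV''\scalphi_t$, the second derivative~$\potV''$. Here I would invoke Remark~\ref{rema_assumpotentialforglobalexist}: since~$\potV$ is smooth and~$\scalphi$ takes values in a compact set, $\potV''$ is uniformly bounded, while~$\lambda_t$ is controlled by the bounds on first derivatives already obtained. This is precisely where the non-linear scalar field departs from the Vlasov matter of~\cite{andreassonringstrom_cosmicnohairT3gowdyeinsteinvlasov}, and where the smoothness hypothesis on~$\potV$ is indispensable.

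Collecting these bounds yields the estimate~$\absval{\partpm\Bfcttwomp}\le C+C(\Bfcttwoplus+\Bfcttwominus)$, which has the same form as~\eqref{eqn_auxilestimintegrderivAfctpm}. From this point the argument is verbatim that of Proposition~\ref{prop_estimfirstderivPQ}: integrating the lightcone derivatives of~$\Bfcttwopm$ along the characteristic curves~$c_\pm$ and applying Grönwall's lemma gives a uniform bound on~$\tfrac12(\Bfcttwoplus+\Bfcttwominus)=\scalphi_{tt}^2+\alpha\scalphi_{t\theta}^2$. The lower bound on~$\alpha$ from Lemma~\ref{lemm_alphalowerbound} then bounds~$\scalphi_{t\theta}$, and feeding this back into the wave equation bounds~$\scalphi_{\theta\theta}$, which concludes the proof.
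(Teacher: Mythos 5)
Your proposal is correct and follows essentially the same route as the paper's own proof: the same quantities~$\Bfcttwopm$, the same lightcone derivative identity, substitution of the equation of motion~\eqref{eqn_eqnofmotionscalfieldwithT2} and its time derivative to eliminate~$\scalphi_{\theta\theta}$, the bound on~$\potV''$ via Remark~\ref{rema_assumpotentialforglobalexist}, and the Grönwall argument along characteristics followed by the lower bound on~$\alpha$ from Lemma~\ref{lemm_alphalowerbound}. Your write-up is in fact slightly more explicit than the paper's (spelling out the wave-equation form and the appearance of~$\alpha_{tt}$, $\alpha_{t\theta}$ and~$\lambda_t$ in the coefficients), but the argument is the same.
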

\begin{proof}
	We consider
	\begin{equation}
		\Bfcttwopm=(\scalphi_{tt}\pm\alpha^{1/2}\scalphi_{t\theta})^2
	\end{equation}
	and, similar to the approach in the previous proof, compute
	\begin{equation}
		\partpm(\scalphi_{tt}\mp\alpha^{1/2}\scalphi_{t\theta}) =\partt(\scalphi_{tt}-\alpha \scalphi_{\theta\theta})+\alpha_t\scalphi_{\theta\theta} \mp\frac{\alpha_t}{2\alpha}\alpha^{1/2}\scalphi_{t\theta}-\frac{\alpha_\theta}{2} \scalphi_{t\theta}.
	\end{equation}
	Using the equation of motion~\eqref{eqn_eqnofmotionscalfieldwithT2} as well as its time derivative, this can be refomulated into an expression with derivatives only up to second order, and where~$\scalphi$ is not differentiated with respect to~$\theta$ twice.
	Moreover, the second order derivatives of~$\scalphi$ appear linearly.
	We can therefore proceed to estimate
	\begin{equation}
		  \absval{\partpm\Bfcttwomp}
		  \le C+C(\Bfcttwoplus+\Bfcttwominus).
	\end{equation}
	In this last step, we applied the assumption on the potential in order to bound the time derivative of the term with~$V'$, appearing on the \lhs\ of the equation of motion, see also Remark~\ref{rema_assumpotentialforglobalexist}.
	As in the previous proof and that of Proposition~\ref{prop_estimfirstderivPQ}, we conclude from this estimate that
	\begin{equation}
		\scalphi_{tt}^2+\alpha \scalphi_{t\theta}^2
	\end{equation}
	is uniformly bounded on compact time intervals. The lower bound on~$\alpha$ from Lemma~\ref{lemm_alphalowerbound} and the equation of motion, equation~\eqref{eqn_eqnofmotionscalfieldwithT2}, imply that the same holds for~$\scalphi_{t\theta}^2$ and~$\scalphi_{\theta\theta}^2$.
\end{proof}
As a direct consequence, this even yields boundedness of the second derivatives of the potential.
\begin{coro}
	\consevolequscompinterval
	\assumptnonnegatpotential
	Then the second derivatives~$\potV''=d^2\potV/d\scalphi^2$, $\partt^2\potV$, $\partt\parttheta\potV$ and~$\parttheta^2\potV$ of the potential are uniformly bounded by constants \smoothfctdependconstant.
\end{coro}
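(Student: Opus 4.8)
The plan is to exploit that $\potV$ is a function of the single variable $\scalphi$, so that every $(t,\theta)$-derivative of $\potV$ is, by the chain rule, a polynomial in the partial derivatives of $\scalphi$ whose coefficients are $\potV'(\scalphi)$ and $\potV''(\scalphi)$ evaluated along the solution. First I would record the elementary identities
\begin{align}
\partt^2\potV &= \potV''(\scalphi)\,\scalphi_t^2 + \potV'(\scalphi)\,\scalphi_{tt},\\
\partt\parttheta\potV &= \potV''(\scalphi)\,\scalphi_t\scalphi_\theta + \potV'(\scalphi)\,\scalphi_{t\theta},\\
\parttheta^2\potV &= \potV''(\scalphi)\,\scalphi_\theta^2 + \potV'(\scalphi)\,\scalphi_{\theta\theta},
\end{align}
together with the quantity $\potV''=\potV''(\scalphi)$ itself, whose boundedness it then remains to justify.

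Next I would bound each factor separately. By Lemma~\ref{lemm_unifboundPQscalphi} the scalar field $\scalphi$ takes values in a fixed compact interval determined by the stated constants; since $\potV\in C^\infty$, its derivatives $\potV'$ and $\potV''$ are continuous and hence uniformly bounded on this compact range, exactly as observed in Remark~\ref{rema_assumpotentialforglobalexist}. This immediately yields the bound on $\potV''$. For the three mixed partials I would combine these coefficient bounds with the uniform bounds on the first derivatives $\scalphi_t,\scalphi_\theta$ from Proposition~\ref{prop_estimfirstderivscalphi} and on the second derivatives $\scalphi_{tt},\scalphi_{t\theta},\scalphi_{\theta\theta}$ from Proposition~\ref{prop_estimsecondderivscalphi}. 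Since every summand in the displayed identities is a product of such bounded factors, the three quantities are uniformly bounded; tracking the dependence of the individual constants, and using that finite sums and products of constants of the asserted type are again of that type, shows that the resulting bounds are \smoothfctdependconstant, as required.

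There is no genuine obstacle here: the statement is a purely algebraic consequence, via the chain rule, of the second-derivative estimates for $\scalphi$ already in hand. The only point requiring a moment's care is the boundedness of $\potV'(\scalphi)$ and $\potV''(\scalphi)$, which is \emph{not} automatic from smoothness of $\potV$ alone but relies on the a priori control of the range of $\scalphi$; this is precisely where the smoothness of the potential, funnelled through the earlier uniform bound on $\scalphi$, enters.
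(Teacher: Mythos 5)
Your proposal is correct and follows essentially the same route as the paper: the corollary is stated there as a direct consequence of Proposition~\ref{prop_estimsecondderivscalphi}, with the boundedness of $\potV'$ and $\potV''$ along the solution supplied by Remark~\ref{rema_assumpotentialforglobalexist} together with the uniform bound on $\scalphi$, and your chain-rule identities merely make this implicit computation explicit. The only point the paper leaves tacit --- that smoothness of $\potV$ alone does not suffice and one needs the compact range of $\scalphi$ from Lemma~\ref{lemm_unifboundPQscalphi} --- is exactly the point you flag, so nothing is missing.
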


\begin{rema}
\label{rema_unifboundsextendhigherderiv}
	We end our discussion at this point, as we have achieved uniform bounds on all quantities we need for the proof of global existence in the next section. However, we can continue to even higher derivatives:
	In the same spirit as we have treated the first derivatives of~$\lambda$, we can now go on to treat its second order derivatives. 
	With this, we find uniform bounds on all second derivatives. Continuing inductively, using the same steps as for the second derivatives we have shown here, we can extend this discussion to even higher derivatives. 
	
	In detail, we can show the following: For all variables in the evolution equations~\eqref{eqn_timeevolalpha}--\eqref{eqn_eqnofmotionscalfieldwithT2}, all derivatives up to order~$N$ are uniformly bounded on compact time intervals~$I=[t_0,t]$,
	by constants \smoothfctdependconstant.
\end{rema}

\section{Global existence and uniqueness}
\label{section_globalexist}

In Theorem~\ref{theo_unifbounds} in the previous section, we have shown that solutions to the evolution equations~\eqref{eqn_timeevolalpha}--\eqref{eqn_eqnofmotionscalfieldwithT2} are uniformly bounded on compact time intervals, by constants \smoothfctdependconstant. This boundedness was shown to hold up to first and, for certain functions, even second derivative.
We now relate our set of evolutions equations to a general existence and uniqueness result for partial differential equations from~\cite{majda_comprfluidsconservlaws} and prove global existence and uniqueness of solutions.

\begin{proof}[Proof of Theorem~\ref{theo_globalexistence}]
	After giving the evolution equations~\eqref{eqn_timeevolalpha}--\eqref{eqn_eqnofmotionscalfieldwithT2}, we have shown on page~\pageref{eqn_evoleqnisconstraint} that equation~\eqref{eqn_secondevollambda} is a consequence of the remaining ones, and argued that equation~\eqref{eqn_spaceevollambda} for~$\lambda_\theta$ can be considered a constraint equation. For these reasons, one easily sees that the set of evolution equations~\eqref{eqn_timeevolalpha}--\eqref{eqn_eqnofmotionscalfieldwithT2} is equivalent to the time evolution of the vector
	\begin{equation}
	\label{eqn_vectorglobalexist}
		u\coloneqq(\alpha,\lambda,P,Q,\scalphi,\frac{P_t}{\sqrt\alpha},P_\theta,\frac{Q_t}{\sqrt\alpha},Q_\theta,\frac{\scalphi_t}{\sqrt\alpha},\scalphi_\theta)^T.
	\end{equation}
	
	Now, one checks that the time evolution of the vector~\eqref{eqn_vectorglobalexist} admits the structure of a symmetric hyperbolic system as defined in~\cite[(2.1a),(2.1b),(1.17)]{majda_comprfluidsconservlaws}. To this end, one computes~$\partial u/\partial t$ and realises that all terms containing derivatives of components of~$u$ can be collected in one expression, namely
	\begin{equation}
		F \cdot \frac{\partial u}{\partial \theta},
	\end{equation}
	where~$F$ is the~$11$ by~$11$ matrix whose only non-zero entries are the vector
	\begin{equation}
		v^T\coloneqq\frac{1}{2\sqrt\alpha}(0,0,0,0,0,P_\theta,\frac{P_t}{\sqrt\alpha},Q_\theta,\frac{Q_t}{\sqrt\alpha},\scalphi_\theta,\frac{\scalphi_t}{\sqrt\alpha})^T
	\end{equation}
	as first column and three permutation matrices
	\begin{equation}
		\sqrt\alpha\begin{pmatrix}
		           	0 & 1\\ 1 & 0
		           \end{pmatrix}
	\end{equation}
	on the diagonal of the lower right~$6$ by~$6$ submatrix. Defining now a matrix~$\tilde A$ whose first column and first row are set to 
	\begin{equation}
		\frac{1}{2\alpha}(2\alpha+2v^Tv,0,0,0,0,\frac{P_t}{\sqrt\alpha},P_\theta,\frac{Q_t}{\sqrt\alpha},Q_\theta,\frac{\scalphi_t}{\sqrt\alpha},\scalphi_\theta)^T,
	\end{equation}
	all diagonal elements apart from the first one are set to~$1$, and all remaining components vanish, we find that
	\begin{equation}
		\tilde A \cdot F
	\end{equation}
	is a symmetric matrix. Further, $\tilde A$ is symmetric, positive definite, and bounded from above and below by multiples of the identity matrix. Consequently, our system of equations satisfies the properties~\cite[(1.17)]{majda_comprfluidsconservlaws}.

	Given smooth initial data~$u_0$, we can therefore apply Theorems~2.1,
	or alternatively Corollary~1, in~\cite{majda_comprfluidsconservlaws} to this system. This yields uniqueness of solutions in a small time interval. As all components of the vector~\eqref{eqn_vectorglobalexist} are uniformly bounded up to their~first derivative due to Theorem~\ref{theo_unifbounds}, and as~$\SS^1$ is a compact set, the continuation criterion in Theorem~2.2 on page~31--32, or alternatively Corollary 2, in the same reference implies that the maximal existence interval has no upper bound, as well as (tacitly) gives global uniqueness. This concludes the proof.
\end{proof}

\section{Constant potential}
\label{section_constpotV}

In this section, we assume the potential to be constant and positive, $\potV=\constpot>0$, and show estimates for the individual variables. This leads us to a proof of the \nohairconj\ in this setting. The approach is inspired by that taken in~\cite{andreassonringstrom_cosmicnohairT3gowdyeinsteinvlasov} for the case of Vlasov matter with positive cosmological constant.
Some of the statements and proofs have a counterpart in that reference. However, the details differ, and all statements about the scalar field~$\scalphi$ are new. We therefore provide the intermediate steps, though sometimes in an abbreviated form, if the arguments are very close to those of~\cite{andreassonringstrom_cosmicnohairT3gowdyeinsteinvlasov}. For the convenience of the reader familiar with that paper, the structure and notation in the present paper is similar to theirs.

\begin{rema}
	In the previous sections, we have found that solutions to equations~\eqref{eqn_timeevolalpha}--\eqref{eqn_eqnofmotionscalfieldwithT2} have a maximal existence interval which extends to~$+\infty$, this is the statement of Theorem~\ref{theo_globalexistence}. 
	The resulting~$\TT^2$-symmetric metrics as in equation~\eqref{eqn_metricT2} are therefore defined on~$I\times\TT^3$, for~$I=(t_0,\infty)$, $t_0>0$. Throughout this section, we denote solutions of this form as~\textit{future global}. 
	
	We further set~$t_1=t_0+2$ and use it without additional explanation. This choice is made to ensure that~$\ln t_1$ positive and bounded away from zero.
\end{rema}

\begin{rema}
	We know that solutions to equations~\eqref{eqn_timeevolalpha}--\eqref{eqn_eqnofmotionscalfieldwithT2} are future global, and further that a constant potential~$\constpot$ is in particular bounded towards the future. In~$\TT^3$-Gowdy symmetry, we can therefore conclude from Lemma~\ref{lemm_lambdaasympt} that the resulting metric satisfies~$\lambda$-asymptotics with constant~$\constpot$, see Definition~\ref{defi_lambdaasympt}. Throughout this section, we assume that the same asymptotic behaviour holds true even for the general~$\TT^2$-symmetry, \ie we assume that there is, for every~$\eps>0$, a~$T>t_0$ such that
	\begin{equation}
		\lambda(t,\theta)\ge{-}3\ln t +2\ln(\frac{3}{4\constpot})-\eps,
	\end{equation}
	for all~$(t,\theta)\in[T,\infty)\times\SS^1$.

	In our setting of a constant potential, we can also make use of Lemma~\ref{lemm_estimatealphaVconstant}, providing us with an upper bound for~$\alpha$:
	\begin{equation}
		\alpha(t,\theta)\le Ct^{-3}
	\end{equation}
	for all $(t,\theta)\in [t_1,\infty)\times\SS^1$.
\end{rema}

We define the energy
\begin{equation}
\label{eqn_defiEbas}
\begin{subaligned}
	\Ebas={}&\int_{\SS^1}t\alpha^{{-}1/2}(\lambda_t-2\frac{\alpha_t}{\alpha}-4t^{1/2}e^{\lambda/2}\constpot)d\theta\\
	  ={}&\int_{\SS^1}\left(t^2\alpha^{{-}1/2}\left[P_t^2+\alpha P_\theta^2+e^{2P}(Q_t^2+\alpha Q_\theta^2)+2(\scalphi_t^2+\alpha\scalphi_\theta^2)\right]\right.\\
	    &\quad{}+ \left.t^{{-}3/2}\alpha^{{-}1/2}(e^{\lambda/2-P}J^2 + e^{\lambda/2+P}(K-QJ)^2)\right)d\theta,
\end{subaligned}
\end{equation}
which is constructed in the same spirit as the energy defined in~\cite[eq.~(85)]{andreassonringstrom_cosmicnohairT3gowdyeinsteinvlasov}. See also their discussion after the energy's definition for its advantages compared to other energies.
In case of~$\lambda$-asymptotics, we can estimate the asymptotic behaviour of~$\Ebas$ towards the future.
\begin{lemm}
\label{lemm_ebasestimatewitha}
	\consconstVandTtwoandlambdaasympt
	Then, for every~$a>1/2$ there is a constant~$C_a>0$ such that
	\begin{equation}
		\Ebas(t)\le C_a t^a,
	\end{equation}
	for all~$t\ge t_1$.
\end{lemm}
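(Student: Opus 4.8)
The plan is to establish a differential inequality of the form $\frac{d}{dt}\Ebas\le\frac{c(\eps)}{t}\Ebas$, valid for $t$ beyond some threshold $T$, in which the constant $c(\eps)$ tends to $\tfrac12$ as $\eps\to0$, and then to integrate it. Given $a>\tfrac12$, I would fix $\eps>0$ small enough that $c(\eps)\le a$; the $\lambda$-asymptotics then supply a time $T$ beyond which the inequality holds, and Grönwall's lemma on $[T,\infty)$ gives $\Ebas(t)\le\Ebas(T)(t/T)^a$. Finally I would cover the compact initial stretch $[t_1,T]$ using that $\Ebas$ is a continuous, hence bounded, function of $t$ (the solution being smooth and future global), and combine the two ranges to get $\Ebas(t)\le C_a t^a$ for all $t\ge t_1$.

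To produce the inequality I would differentiate $\Ebas$ in its compact first form from~\eqref{eqn_defiEbas}. The natural splitting is to write the integrand as $t\alpha^{-1/2}(\lambda_t-2\tfrac{\alpha_t}{\alpha}-\tfrac3t)+\alpha^{-1/2}(3-4\constpot t^{3/2}e^{\lambda/2})$, so that the first summand is precisely the quantity whose time evolution is prescribed by~\eqref{eqn_secondevollambda}. Differentiating that summand and integrating over~$\SS^1$, the total $\theta$-derivative $\partial_\theta(t\alpha^{1/2}\lambda_\theta)$ drops out, leaving the bulk terms of~\eqref{eqn_secondevollambda}; differentiating the second summand produces explicit $\alpha_t/\alpha$- and $\lambda_t$-terms, which I would then eliminate using the evolution equations~\eqref{eqn_timeevolalpha} and~\eqref{eqn_timeevollambda}.

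The heart of the argument, and the step I expect to be the main obstacle, is the bookkeeping of cancellations after these substitutions. Inserting $\lambda_t$ from~\eqref{eqn_timeevollambda} into the $\alpha^{-1/2}\lambda_t$ term makes the time-derivative densities $P_t^2+e^{2P}Q_t^2+2\scalphi_t^2$ cancel against the corresponding bulk term and the potential contributions cancel entirely, leaving only twice the spatial-gradient density. The several $\constpot$-weighted terms coming from $\partial_t(4\constpot t^{3/2}\alpha^{-1/2}e^{\lambda/2})$ either cancel pairwise or recombine, after re-inserting~\eqref{eqn_timeevolalpha} and~\eqref{eqn_timeevollambda}, into a single manifestly nonpositive term. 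The outcome I anticipate is the identity
\begin{align}
\frac{d}{dt}\Ebas
  ={}&2\int_{\SS^1}t\alpha^{1/2}\left(P_\theta^2+e^{2P}Q_\theta^2+2\scalphi_\theta^2\right)d\theta\\
     &{}-\frac32\int_{\SS^1}t^{{-}5/2}\alpha^{{-}1/2}\left(e^{\lambda/2-P}J^2+e^{\lambda/2+P}(K-QJ)^2\right)d\theta\\
     &{}-2\constpot\int_{\SS^1}t^{5/2}\alpha^{{-}1/2}e^{\lambda/2}\left[P_t^2+\alpha P_\theta^2+e^{2P}(Q_t^2+\alpha Q_\theta^2)+2(\scalphi_t^2+\alpha\scalphi_\theta^2)\right]d\theta.
\end{align}
The only positive contribution is the first line, which is exactly $\tfrac2t$ times the spatial-gradient part of $\Ebas$, while the twist line is nonpositive.

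To close, I would control that positive term by the last integral. Discarding the nonnegative $P_t^2+e^{2P}Q_t^2+2\scalphi_t^2$ part of the last integrand (which only increases the expression) and invoking the $\lambda$-asymptotics in the equivalent form $e^{\lambda/2}\ge\tfrac{3}{4\constpot}t^{-3/2}e^{-\eps/2}$ for $t\ge T$, the surviving $\alpha$-weighted gradient part of the last integral is bounded above by $-\tfrac32 e^{-\eps/2}\int_{\SS^1}t\alpha^{1/2}(P_\theta^2+e^{2P}Q_\theta^2+2\scalphi_\theta^2)\,d\theta$. Adding this to the first line leaves $\bigl(2-\tfrac32 e^{-\eps/2}\bigr)\int_{\SS^1}t\alpha^{1/2}(\cdots)\,d\theta$, and since this integral equals $\tfrac1t$ times the spatial-gradient part of $\Ebas$, which is in turn $\le\tfrac1t\Ebas$, I obtain $\frac{d}{dt}\Ebas\le\tfrac1t\bigl(2-\tfrac32 e^{-\eps/2}\bigr)\Ebas$. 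As $\eps\to0$ the coefficient $c(\eps)=2-\tfrac32 e^{-\eps/2}$ decreases to $\tfrac12$, so for any $a>\tfrac12$ a sufficiently small $\eps$ (and the associated $T$) gives $c(\eps)\le a$, and the Grönwall step described above completes the proof. I expect the $\alpha$-decay bound from Lemma~\ref{lemm_estimatealphaVconstant} to be unnecessary here, the needed lower bound on $e^{\lambda/2}$ coming directly from the $\lambda$-asymptotics.
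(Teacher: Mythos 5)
Your proposal is correct and takes essentially the same route as the paper: the paper likewise differentiates the first form of~$\Ebas$ using equation~\eqref{eqn_secondevollambda}, arrives at exactly your identity for~$d\Ebas/dt$ (gradient term, $-\tfrac32$ twist term, $-2\constpot$-weighted bulk term), discards the nonpositive contributions, and uses the $\lambda$-asymptotics to get $d\Ebas/dt\le\tfrac at\Ebas$ for large~$t$, finishing by integration. Your closing remark is also accurate: although the paper cites Lemma~\ref{lemm_estimatealphaVconstant} at this step, the decay of~$\alpha$ is not actually needed, since the bound $\int_{\SS^1}t\alpha^{1/2}(P_\theta^2+e^{2P}Q_\theta^2+2\scalphi_\theta^2)\,d\theta\le\tfrac1t\Ebas$ follows by pure algebra from the second form of~$\Ebas$.
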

\begin{proof}
	The proof is similar to that of~\cite[Lemma~45]{andreassonringstrom_cosmicnohairT3gowdyeinsteinvlasov}: 
	Using the evolution equation~\eqref{eqn_secondevollambda}, we compute the time derivative of the integrand in the first expression of the energy, equation~\eqref{eqn_defiEbas}:
	\begin{align}
		\partt&\left[t\alpha^{{-}1/2}(\lambda_t-2\frac{\alpha_t}{\alpha}-4t^{1/2}e^{\lambda/2}\constpot)\right]\\
		  &\quad= \parttheta(t\alpha^{1/2}\lambda_\theta)+2t\alpha^{1/2}(P_\theta^2+e^{2P}Q_\theta^2+2\scalphi_\theta^2)\\
		  &\qquad-\frac32t\alpha^{{-}1/2}(\quotJ{7}+\quotKQJ{7})\\
		  &\qquad-(P_t^2+\alpha P_\theta^2+e^{2P}(Q_t+\alpha Q_\theta^2)+2(\scalphi_t^2+\alpha\scalphi_\theta^2)) (2t^{5/2}\alpha^{{-}1/2}e^{\lambda/2}\constpot).
	\end{align}
	From this, we can estimate
	\begin{equation}
		\frac{d\Ebas}{dt}\le \int_{\SS^1} 2t\alpha^{1/2}(P_\theta^2+e^{2P}Q_\theta^2+2\scalphi_\theta^2)d\theta 
		    - \int_{\SS^1}2t^{5/2}\alpha^{1/2}e^{\lambda/2}\constpot (P_\theta^2+e^{2P}Q_\theta^2+2\scalphi_\theta^2)d\theta ,
	\end{equation}
	and~$\lambda$-asymptotics together with the asymptotics of~$\alpha$ determined in Lemma~\ref{lemm_estimatealphaVconstant} implies that for every~$a>1/2$ there is a~$T\ge t_1$ such that
	\begin{equation}
		\frac{d\Ebas}{dt}\le \frac at \Ebas,
	\end{equation}
	for all~$t\ge T$. This concludes the proof.
\end{proof}
\begin{rema}
	For a non-constant potential, the method used in the previous proof fails, as the additional term in the derivative of the integrand of the energy~$\Ebas$, the term
	\begin{equation}
		{-}4t^{3/2}\alpha^{{-}1/2}e^{\lambda/2}\potV' \scalphi_t,
	\end{equation}
	potentially has the wrong sign. Further, it is not obvious how to bound this term in terms of the energy.
	We expect that the energy has to be adapted, and some additional a priori assumption on the potential~$\potV$ has to be made in order to gain control over the asymptotic behaviour of the individual variables.
\end{rema}

From the preliminary estimate on the energy~$\Ebas$ in Lemma~\ref{lemm_ebasestimatewitha} we can conclude a stronger one and also gain more knowledge on the asymptotic behaviour of~$\lambda$ as well as, in the following lemma, a lower bound on~$\alpha$ and first estimates on~$P$, $Q$ and~$\scalphi$.
\begin{lemm}
\label{lemm_ebasestimateoneovertwo}
	\consconstVandTtwoandlambdaasympt
	Then there is a constant~$C>0$ such that
	\begin{align}
		\normCzero{\hat\lambda(t,\cdot)}=\normCzero{\lambda(t,\cdot)+3\ln t - 2\ln(\frac{3}{4\constpot})} &{}\le C t^{{-}1/2},\\
		\Ebas(t) &{}\le C t^{1/2},
	\end{align}
	for all~$t\ge t_1$.
\end{lemm}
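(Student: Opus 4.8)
The plan is to leverage the algebraic identity relating $\hat\lambda=\lambda+3\ln t-2\ln(3/(4\constpot))$ to the energy $\Ebas$ and to bootstrap the preliminary bound of Lemma~\ref{lemm_ebasestimatewitha} into the sharp one. First I would record two computations. Abbreviating $S=P_t^2+\alpha P_\theta^2+e^{2P}(Q_t^2+\alpha Q_\theta^2)+2(\scalphi_t^2+\alpha\scalphi_\theta^2)$ and using $e^{\lambda/2}=\tfrac{3}{4\constpot}t^{-3/2}e^{\hat\lambda/2}$, the potential term becomes $4t^{1/2}e^{\lambda/2}\constpot=\tfrac3t e^{\hat\lambda/2}$, so equation~\eqref{eqn_timeevollambda} rewrites as
\[
	\partt\hat\lambda=tS-\frac{e^{\lambda/2-P}J^2}{t^{5/2}}-\frac{e^{\lambda/2+P}(K-QJ)^2}{t^{5/2}}+\frac3t\bigl(1-e^{\hat\lambda/2}\bigr),
\]
where every term but the last is non-negative. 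Second, from the constraint~\eqref{eqn_spaceevollambda} and $2|P_tP_\theta|\le\alpha^{-1/2}(P_t^2+\alpha P_\theta^2)$ (and analogously for the $Q$- and $\scalphi$-terms) one obtains $|\lambda_\theta|\le t\alpha^{-1/2}S$ pointwise; since the first term of the integrand of $\Ebas$ is $t^2\alpha^{-1/2}S$ and the rest is non-negative, this yields $\int_{\SS^1}|\hat\lambda_\theta|\,d\theta\le t^{-1}\Ebas$ and hence control of the spatial oscillation, $\max_\theta\hat\lambda-\min_\theta\hat\lambda\le t^{-1}\Ebas(t)$.

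Next I would estimate the mean $m(t)=\mean{\hat\lambda}$. Integrating the first display over $\SS^1$, the upper bound $\alpha\le Ct^{-3}$ of Lemma~\ref{lemm_estimatealphaVconstant} gives $\alpha^{-1/2}\ge ct^{3/2}$, so both $\int_{\SS^1}tS\,d\theta$ and the integral of the two twist quotients are bounded by $Ct^{-5/2}\Ebas$. Thus $\dot m=\tfrac3t(1-\mean{e^{\hat\lambda/2}})+O(t^{-5/2}\Ebas)$. Jensen's inequality $\mean{e^{\hat\lambda/2}}\ge e^{m/2}$ keeps $m$ bounded above (a super-solution argument), while $\lambda$-asymptotics gives $m\ge-\eps$; combined with the oscillation bound these let me replace $\mean{e^{\hat\lambda/2}}$ by $e^{m/2}+O(\text{osc})$. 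Linearising $1-e^{m/2}\approx-\tfrac12m$ near the stable equilibrium $m=0$ turns the equation into $\dot m=-\tfrac{3}{2t}m+(\text{forcing})$, and the integrating factor $t^{3/2}$ converts a forcing of size $O(t^{-\beta})$ into $m=O(t^{-\beta+1}+t^{-3/2})$. Feeding in $\Ebas\le C_a t^a$ with $a>1/2$ makes the forcing $O(t^{a-2})$ and produces the \emph{weak} estimate $\normCzero{\hat\lambda(t,\cdot)}\le|m|+\text{osc}\le C_a t^{a-1}$.

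With this weak bound I would improve the energy. Starting from the derivative inequality in the proof of Lemma~\ref{lemm_ebasestimatewitha} and using $t^{3/2}e^{\lambda/2}\constpot=\tfrac34 e^{\hat\lambda/2}$,
\[
	\frac{d\Ebas}{dt}\le\int_{\SS^1}2t\alpha^{1/2}\bigl(P_\theta^2+e^{2P}Q_\theta^2+2\scalphi_\theta^2\bigr)\Bigl(1-\tfrac34 e^{\hat\lambda/2}\Bigr)\,d\theta.
\]
Since $\alpha^{1/2}(P_\theta^2+e^{2P}Q_\theta^2+2\scalphi_\theta^2)\le\alpha^{-1/2}S$ the prefactor integral is $\le 2t^{-1}\Ebas$, and $1-\tfrac34 e^{\hat\lambda/2}\le\tfrac14+C\normCzero{\hat\lambda}\le\tfrac14+C_a t^{a-1}$, so $\dot\Ebas\le(\tfrac{1}{2t}+C_a t^{a-2})\Ebas$. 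Because $a<1$, the integral $\int^{\infty}s^{a-2}\,ds$ converges, so Grönwall upgrades the exponent to exactly $\tfrac12$ and gives $\Ebas(t)\le Ct^{1/2}$, the second claimed estimate. Re-running the oscillation and mean arguments with this sharp bound (the forcing in the $m$-equation is now $O(t^{-3/2})$) yields $\text{osc}\le Ct^{-1/2}$ and $|m|\le Ct^{-1/2}$, hence $\normCzero{\hat\lambda(t,\cdot)}\le Ct^{-1/2}$, the first claimed estimate.

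The \textbf{main obstacle} is the coupling between the two conclusions: the sharp energy bound requires a pointwise bound on $\hat\lambda$, which is itself extracted from the energy. The delicate point is that the naive Grönwall coefficient $2(1-\tfrac34 e^{-\eps/2})/t\to\tfrac{1}{2t}$ only yields $t^{1/2+\delta}$; removing the loss forces one to feed the \emph{quantitative} decay $\normCzero{\hat\lambda}=O(t^{a-1})$ back into the inequality and to exploit the convergence of $\int s^{a-2}\,ds$. Handling the exponential nonlinearity in the mean equation — Jensen for the upper barrier on $m$, the $\lambda$-asymptotics lower bound, and the oscillation estimate to pass from $\mean{e^{\hat\lambda/2}}$ to $e^{m/2}$ — is the other step demanding care.
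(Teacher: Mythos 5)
Your proposal is correct, and up to the derivation of the mean equation it coincides with the paper's own proof: both use Young's inequality on the constraint~\eqref{eqn_spaceevollambda} together with $\Ebas\le C_a t^a$ to control $\int_{\SS^1}\absval{\lambda_\theta}\,d\theta$, and both extract $\mean{\hat\lambda_t}=\frac3t(1-\mean{e^{\hat\lambda/2}})+\O(t^{a-5/2})$ from~\eqref{eqn_timeevollambda}. The difference lies in the last step: at this point the paper simply invokes the proof-by-contradiction argument of~\cite[Lemma~46]{andreassonringstrom_cosmicnohairT3gowdyeinsteinvlasov} to conclude $\absval{\mean{\hat\lambda}}\le Ct^{-1/2}$ and $\Ebas\le Ct^{1/2}$, whereas you supply a self-contained quantitative bootstrap: barriers (Jensen from above, $\lambda$-asymptotics from below) to confine $m=\mean{\hat\lambda}$, an integrating-factor argument for the weak bound $\normCzero{\hat\lambda(t,\cdot)}=\O(t^{a-1})$, re-insertion into the differential inequality of Lemma~\ref{lemm_ebasestimatewitha} with $a$ fixed in $(1/2,1)$ so that $\int^\infty s^{a-2}\,ds<\infty$, and a second pass for the sharp bounds. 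This is essentially an explicit version of what the paper outsources to the Vlasov reference, and it has the merit of being readable without that paper. Two minor points to tighten: first, in your initial display the twist terms carry minus signs, so the claim ``every term but the last is non-negative'' should say that $tS$ is non-negative and the two twist terms are non-positive; this is harmless, since you only use absolute-value bounds on them (via $\Ebas$) and their sign is favourable for the upper barrier. Second, the linearisation $1-e^{m/2}\approx-\frac12 m$ has error $O(m^2)$, which is not forcing-small while $m$ is merely bounded; the clean fix is the uniform factorisation $1-e^{m/2}=-\frac m2\,\phi(m)$ with $\phi$ pinched between positive constants on the bounded range $[-\eps,M]$ of $m$, which degrades the decay rate from $\frac32$ to $\frac32\phi_{\min}$ --- still larger than both $1-a$ and $1/2$, so both conclusions of your argument survive unchanged.
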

\begin{proof}
	Young's inequality applied to the evolution equation~\eqref{eqn_spaceevollambda} reveals
	\begin{equation}
		\absval{\lambda_\theta}\le t\alpha^{{-}1/2}\left[P_t^2+\alpha P_\theta^2+e^{2P}(Q_t+\alpha Q_\theta^2)+2(\scalphi_t^2+\alpha\scalphi_\theta^2)\right],
	\end{equation}
	and together with the estimate for the energy~$\Ebas$, Lemma~\ref{lemm_ebasestimatewitha}, this  implies
	\begin{equation}
		\int_{\SS^1}\absval{\lambda_\theta}=\O(t^{a-1}).
	\end{equation}
	As the potential is constant, we can apply the estimates for~$\alpha$ and~$\Ebas$, Lemma~\ref{lemm_estimatealphaVconstant} and Lemma~\ref{lemm_ebasestimatewitha}, to the evolution equation~\eqref{eqn_timeevollambda} and find that the mean over~$\SS^1$ of~$\lambda_t$ satisfies
	\begin{equation}
		\mean{\lambda_t}={-}4t^{1/2}\mean{e^{\lambda/2}}\constpot + \O(t^{a-5/2}).
	\end{equation}
	Here, we have used the notation introduced in Subsection~\ref{subsection_notationmean}. Recalling the definition of~$\hat\lambda$, equation~\eqref{eqn_defihatlambda}, this gives
	\begin{equation}
		\mean{\hat\lambda_t} = \frac3t(1-\mean{e^{\hat\lambda/2}})+ \O(t^{a-5/2}),
	\end{equation}
	which is the same statement as~\cite[eq.~(92)]{andreassonringstrom_cosmicnohairT3gowdyeinsteinvlasov}. As was done in the proof of Lemma~46 in the same reference, a proof by contradiction now shows that~$\mean{\hat\lambda}$ converges to zero, and in terms of a quantitative estimate, reveals that there is a constant~$C$ such that
	\begin{equation}
		\absval{\mean{\hat\lambda}}\le Ct^{{-}1/2}.
	\end{equation}
	Further, it follows that the improved estimate~$\Ebas\le Ct^{1/2}$ holds.
\end{proof}
\begin{lemm}
\label{lemm_PQphibounded}
	\consconstVandTtwoandlambdaasympt
	Then there is a constant~$C>0$ such that
	\begin{align}
		\mean{\alpha^{{-}1/2}(t,\cdot)} &{}\le C t^{3/2},\\
		\normCzero{P(t,\cdot)} + \normCzero{Q(t,\cdot)} + \normCzero{\scalphi(t,\cdot)} &{}\le C,
	\end{align}
	for all~$t\ge t_1$.
\end{lemm}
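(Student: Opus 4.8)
The plan is to establish the bound on $\mean{\alpha^{{-}1/2}}$ first, since it is the ingredient the $C^0$ estimates depend on. I would start from the evolution equation~\eqref{eqn_timeevolalpha}, which for a constant potential yields
\begin{equation}
	\partt\alpha^{{-}1/2} = \frac12\alpha^{{-}1/2}\left[\frac{e^{\lambda/2-P}J^2}{t^{5/2}} + \frac{e^{\lambda/2+P}(K-QJ)^2}{t^{5/2}} + 4t^{1/2}e^{\lambda/2}\constpot\right].
\end{equation}
Taking the spatial mean in the sense of Subsection~\ref{subsection_notationmean}, I would derive a differential inequality of the shape $\partt\mean{\alpha^{{-}1/2}}\le(\tfrac32 t^{{-}1}+Ct^{{-}3/2})\mean{\alpha^{{-}1/2}} + Ct^{{-}1/2}$ and integrate it using the integrating factor $\exp(-\int(\tfrac32 s^{{-}1}+Cs^{{-}3/2})\,ds)\sim t^{{-}3/2}$. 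After multiplication by this factor the inhomogeneous term decays like $t^{{-}2}$ and is integrable, so Grönwall's lemma produces $\mean{\alpha^{{-}1/2}}\le Ct^{3/2}$.

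Two inputs make this work. For the linear coefficient I would use $\lambda$-asymptotics together with the quantitative estimate $\normCzero{\hat\lambda}\le Ct^{{-}1/2}$ from Lemma~\ref{lemm_ebasestimateoneovertwo}: recalling $\hat\lambda=\lambda+3\ln t-2\ln(\tfrac{3}{4\constpot})$ from~\eqref{eqn_defihatlambda}, a direct computation gives $4t^{1/2}e^{\lambda/2}\constpot = 3t^{{-}1}e^{\hat\lambda/2} = 3t^{{-}1}(1+\O(t^{{-}1/2}))$ uniformly in~$\theta$, which is precisely what yields the coefficient $\tfrac32 t^{{-}1}$. For the twist contribution I would observe that taking the mean of $\tfrac12\alpha^{{-}1/2}t^{{-}5/2}(e^{\lambda/2-P}J^2+e^{\lambda/2+P}(K-QJ)^2)$ equals $\tfrac{1}{4\pi}t^{{-}1}$ times the integral over~$\SS^1$ of the twist part of the integrand of~$\Ebas$ in~\eqref{eqn_defiEbas}, hence it is bounded by $Ct^{{-}1}\Ebas\le Ct^{{-}1/2}$ using $\Ebas\le Ct^{1/2}$ from the same lemma.

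For the $C^0$ bounds I would combine the energy estimate with the freshly obtained bound on $\mean{\alpha^{{-}1/2}}$ and the upper bound $\alpha\le Ct^{{-}3}$ from Lemma~\ref{lemm_estimatealphaVconstant}, proceeding as in Lemma~\ref{lemm_unifboundPQscalphi}. The second form of $\Ebas$ gives $\int_{\SS^1}\alpha^{{-}1/2}\scalphi_t^2\le Ct^{{-}3/2}$ and $\int_{\SS^1}\alpha^{1/2}\scalphi_\theta^2\le Ct^{{-}3/2}$. A Cauchy--Schwarz splitting $\int|\scalphi_t|\le(\int\alpha^{{-}1/2}\scalphi_t^2)^{1/2}(\int\alpha^{1/2})^{1/2}$, with $\int\alpha^{1/2}\le Ct^{{-}3/2}$ following from $\alpha\le Ct^{{-}3}$, gives $|\partt\mean{\scalphi}|\le Ct^{{-}3/2}$, so $\mean{\scalphi}$ remains bounded. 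For the oscillation, $\int|\scalphi_\theta|\le(\int\alpha^{1/2}\scalphi_\theta^2)^{1/2}(\int\alpha^{{-}1/2})^{1/2}\le(Ct^{{-}3/2})^{1/2}(Ct^{3/2})^{1/2}=C$, and bounded mean plus bounded oscillation yields $\normCzero{\scalphi}\le C$. The identical argument applies verbatim to~$P$. For~$Q$ the energy only controls $\alpha^{{-}1/2}e^{2P}Q_t^2$ and $\alpha^{1/2}e^{2P}Q_\theta^2$, so I must first use the established bound on $\normCzero{P}$ to bound $e^{{-}2P}$ from above; this dictates the order $\scalphi, P$ first and~$Q$ last.

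The main obstacle is the $\mean{\alpha^{{-}1/2}}$ estimate, specifically the sharp exponent $3/2$: a crude bound on $e^{\lambda/2}$ would leave the linear coefficient as $O(t^{{-}1})$ with an uncontrolled constant, and Grönwall would then only give $\mean{\alpha^{{-}1/2}}\le Ct^{\beta}$ for some $\beta$ possibly exceeding $3/2$. Pinning the coefficient to exactly $\tfrac32 t^{{-}1}$ requires the precise asymptotics $e^{\lambda/2}\sim\tfrac{3}{4\constpot}t^{{-}3/2}$, which is where the quantitative estimate $\normCzero{\hat\lambda}\le Ct^{{-}1/2}$ is indispensable; the sharpness of $3/2$ is corroborated by the matching lower bound $\mean{\alpha^{{-}1/2}}\ge ct^{3/2}$ that follows at once from $\alpha\le Ct^{{-}3}$.
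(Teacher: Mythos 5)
Your proposal is correct and follows essentially the same route as the paper: the same differential inequality $\partt\mean{\alpha^{-1/2}}\le(\tfrac32 t^{-1}+Ct^{-3/2})\mean{\alpha^{-1/2}}+Ct^{-1/2}$, obtained by bounding the twist terms through $\Ebas\le Ct^{1/2}$ and the potential term through the Taylor expansion of $e^{\hat\lambda/2}$ from Lemma~\ref{lemm_ebasestimateoneovertwo}, followed by the same Cauchy--Schwarz splittings for the $C^0$ bounds. The only difference is that you spell out details the paper delegates to the literature (the Grönwall integration and the treatment of the remaining variables), including the correct observation that $Q$ must be handled after $P$ so that $e^{-2P}$ can be bounded.
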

\begin{proof}
	We compute, using the evolution equation for~$\alpha$, equation~\eqref{eqn_timeevolalpha}, 
	and the estimates for~$\Ebas$ and~$\hat\lambda$ from Lemma~\ref{lemm_ebasestimateoneovertwo}, that
	\begin{align}
		\partt\mean{\alpha^{{-}1/2}} ={}& {-}\frac12\mean{\alpha^{{-}1/2}\frac{\alpha_t}{\alpha}}\\
		  ={}& \frac{1}{2\pi}\int_{\SS^1}\alpha^{{-}1/2}(\frac{e^{\lambda/2-P}J^2}{2t^{5/2}} + \frac{e^{\lambda/2+P}(K-QJ)^2}{2t^{5/2}})d\theta + \frac{1}{2\pi}\int_{\SS^1} 2t^{1/2}\alpha^{{-}1/2}e^{\lambda/2}\constpot d\theta\\
		  \le{}& Ct^{{-}1/2} +\frac{3}{2t}\mean{e^{\hat\lambda/2}\alpha^{{-}1/2}}\\
		  \le{}& \frac{3}{2t}\mean{\alpha^{{-}1/2}} + Ct^{{-}3/2}\mean{\alpha^{{-}1/2}} + Ct^{{-}1/2},
	\end{align}
	where in the last estimate, we used Taylor expansion of the estimate for~$e^{\hat\lambda}$ from the previous lemma. The rest of the proof is identical to that of~\cite[Lemma~47]{andreassonringstrom_cosmicnohairT3gowdyeinsteinvlasov}: From the previous expression, one first concludes that~$\mean{\alpha^{{-}1/2}}\le Ct^{3/2}$. This enables one to estimate
	\begin{equation}
		\int_{\SS^1}\absval{P_\theta} d\theta\le C,\qquad \absval{\partt\mean P}\le Ct^{{-}3/2},
	\end{equation}
	for all~$t\ge t_1$, using the energy estimate from Lemma~\ref{lemm_ebasestimateoneovertwo} and the upper bound on~$\alpha$ shown in Lemma~\ref{lemm_estimatealphaVconstant}.
	Combining these two estimates for the partial derivatives of~$P$ implies
	\begin{equation}
		\normCzero{P(t,\cdot)}\le C,
	\end{equation}
	for all~$t\ge t_1$. The two other variables are treated by the same method.
\end{proof}

\begin{lemm}
\label{lemm_quotientsboundstrong}
	\consconstVandTtwoandlambdaasympt
	Then there is a constant~$C>0$ such that
	\begin{equation}
		\normCzero{\frac{e^{\lambda/2-P}J^2}{t^{5/2}}} + \normCzero{\frac{e^{\lambda/2+P}(K-QJ)^2}{t^{5/2}}} \le C t^{{-}4},
	\end{equation}
	for all~$t\ge t_1$.
\end{lemm}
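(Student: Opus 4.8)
The plan is to observe that the only genuinely $t$-dependent factor in either fraction is $e^{\lambda/2}$: the twist constants $J$ and $K$ are fixed numbers, and the remaining factors $e^{{-}P}$, $e^P$ and $(K-QJ)^2$ are uniformly bounded in $C^0$ by Lemma~\ref{lemm_PQphibounded}. So the entire statement reduces to extracting the correct decay rate for $e^{\lambda/2}$ from the refined $\hat\lambda$-estimate already available to us, and then multiplying through by the explicit powers of $t$.

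First I would recall the definition of $\hat\lambda$ from equation~\eqref{eqn_defihatlambda}, which rearranges to
\[
	e^{\lambda/2} = \frac{3}{4\constpot}\, t^{{-}3/2}\, e^{\hat\lambda/2}.
\]
By Lemma~\ref{lemm_ebasestimateoneovertwo} we have $\normCzero{\hat\lambda(t,\cdot)}\le Ct^{{-}1/2}$, so $\hat\lambda$ is in particular uniformly bounded towards the future and hence $e^{\hat\lambda/2}$ is bounded by a constant for all $t\ge t_1$. This gives the pointwise bound
\[
	\normCzero{e^{\lambda/2}(t,\cdot)} \le C t^{{-}3/2}.
\]

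Then I would combine this with the $C^0$-bounds on $P$ and $Q$ from Lemma~\ref{lemm_PQphibounded}. Writing the first fraction as $t^{{-}5/2}e^{\lambda/2}e^{{-}P}J^2$ and using $\normCzero{e^{{-}P}(t,\cdot)}\le C$ together with the constancy of $J$ yields
\[
	\normCzero{\frac{e^{\lambda/2-P}J^2}{t^{5/2}}} \le C\, t^{{-}5/2}\cdot t^{{-}3/2} = C t^{{-}4}.
\]
The second fraction is handled identically, bounding $e^P$ and $(K-QJ)^2$ by the uniform $C^0$-control on $P$ and $Q$ and the constancy of $J,K$. Adding the two estimates produces the claim.

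Since every ingredient is already established, there is essentially no obstacle here; the one point deserving care is simply the bookkeeping of exponents, namely that combining the $t^{{-}3/2}$ decay of $e^{\lambda/2}$ with the explicit $t^{{-}5/2}$ prefactor produces exactly $t^{{-}4}$. I would emphasise that the whole improvement over the coarse bound of Corollary~\ref{coro_unifboundquotients} is driven by the $\lambda$-asymptotics, which enter via the refined estimate on $\hat\lambda$ in Lemma~\ref{lemm_ebasestimateoneovertwo}.
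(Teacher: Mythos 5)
Your proposal is correct and follows essentially the same route as the paper's own (very terse) proof: constancy of $J$, $K$, the $C^0$-bounds on $P$, $Q$ from Lemma~\ref{lemm_PQphibounded}, and the decay $e^{\lambda/2}\le Ct^{-3/2}$ extracted from the $\hat\lambda$-estimate of Lemma~\ref{lemm_ebasestimateoneovertwo}. The only difference is that you spell out the rearrangement $e^{\lambda/2}=\tfrac{3}{4\constpot}t^{-3/2}e^{\hat\lambda/2}$ and the exponent bookkeeping, which the paper leaves implicit.
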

\begin{rema}
	Note that the decay rates we obtain for the quotients are immediately as strong as the ones in~\cite[eq.~(131)]{andreassonringstrom_cosmicnohairT3gowdyeinsteinvlasov}. We do not have  to take an intermediate step as was done in the Vlasov case in~\cite[Lemma~48]{andreassonringstrom_cosmicnohairT3gowdyeinsteinvlasov}. The main reason for this is that~$J$, $K$ are constants for non-linear scalar field solutions. 
\end{rema}
\begin{proof}
	We know that~$J$, $K$ are constant, and~$P$, $Q$ are bounded in~$\normCzero\cdot$ by the previous lemma. Further~$e^{\lambda/2}\le Ct^{{-}3/2}$ due to Lemma~\ref{lemm_ebasestimateoneovertwo}. This concludes the proof.
\end{proof}
So far, we have proven estimates for the zeroth derivatives of all variables in the evolution equations. We now proceed to first derivatives, starting with~$P$, $Q$ and~$\scalphi$.

\begin{lemm}
\label{lemm_PQphifirstderivatives}
	\consconstVandTtwoandlambdaasympt
	Then there is a constant~$C>0$ such that
	\begin{equation}
		t^3\normCzero{P_t^2+\alpha P_\theta^2 + e^{2P}(Q_t^2+\alpha Q_\theta^2)+\scalphi_t^2+\alpha\scalphi_\theta^2}\le C,
	\end{equation}
	for all~$t\ge t_1$.
\end{lemm}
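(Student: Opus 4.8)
The plan is to prove the integrated ($L^1$-in-$\theta$) version of the estimate directly from the energy, and then upgrade it to the asserted pointwise ($C^0$) bound by integrating the lightcone evolution equations along characteristics. Since the potential is constant, $\potV'=0$, so the source term in Lemma~\ref{lemm_Bfctlightconederiv} vanishes and $\scalphi$ is governed by exactly the same type of equation as $P$ and $Q$. I would therefore set $w_\pm=\Afctpm+\Bfctpm$, so that $\tfrac12(w_++w_-)$ is precisely the quantity $P_t^2+\alpha P_\theta^2+e^{2P}(Q_t^2+\alpha Q_\theta^2)+\scalphi_t^2+\alpha\scalphi_\theta^2$ to be estimated.

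The first, clean step is the integrated bound. Dropping the non-negative quotient terms and one copy of the scalar-field contribution, the energy satisfies $\Ebas(t)\ge\int_{\SS^1}t^2\alpha^{-1/2}\cdot\tfrac12(w_++w_-)\,d\theta$. As the potential is constant, Lemma~\ref{lemm_estimatealphaVconstant} gives $\alpha\le Ct^{-3}$, hence $\alpha^{-1/2}\ge c\,t^{3/2}$, while Lemma~\ref{lemm_ebasestimateoneovertwo} gives $\Ebas\le Ct^{1/2}$. Combining these yields $\int_{\SS^1}(w_++w_-)\,d\theta\le Ct^{-3}$, which is already the integrated form of the claim and fixes the expected rate.

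The second step upgrades this to a pointwise estimate. Using Lemmata~\ref{lemm_Afctlightconederiv} and~\ref{lemm_Bfctlightconederiv} (with $\potV'=0$) and rewriting the mixed terms exactly as in the proof of Proposition~\ref{prop_estimfirstderivPQ}, I would obtain the differential inequality $\partpm w_\mp\le(-\tfrac1{2t}+\tfrac{\alpha_t}{\alpha})w_\mp+\tfrac1{2t}w_\pm+(\text{quotient source})$. The coefficient $\alpha_t/\alpha$ equals $-3/t+O(t^{-3/2})$ --- this follows from the evolution equation~\eqref{eqn_timeevolalpha}, the bound $e^{\lambda/2}\le Ct^{-3/2}$ extracted from Lemma~\ref{lemm_ebasestimateoneovertwo}, and the quotient decay of Lemma~\ref{lemm_quotientsboundstrong} --- so the self-interaction is a genuine damping term $\approx-\tfrac7{2t}w_\mp$. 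Fixing $(t,\theta)$ and integrating along the backward characteristics $c_\pm$ down to $t_1$ with integrating factor $s^{7/2}$, the quotient source is of lower order (by Lemma~\ref{lemm_quotientsboundstrong} it is $O(s^{-5}\sqrt{w_\mp})$, harmless after integration), and everything reduces to controlling the cross term $\int_{t_1}^{t}s^{5/2}w_\pm(c_\pm(s))\,ds$.

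I expect this cross term to be the main obstacle. The coupling $\tfrac1{2t}w_\pm$ between the two null directions is scaling-critical: bounding $w_\pm(c_\pm(s))$ crudely by its supremum and applying Grönwall's lemma only produces the suboptimal rate $t^{-5/2}$, and bootstrapping does not improve this fixed point. The sharp rate $t^{-3}$ corresponds to the slow mode $w_++w_-$, whose decay is seen by the energy but not by a single characteristic. The way I would close the argument is to recognise the characteristic integral of $w_\pm$ as an energy flux through that null curve, bounded by $\Ebas$ together with the weight $\alpha^{-1/2}\ge ct^{3/2}$; feeding this --- rather than the crude supremum --- into the integrated lightcone identity, and using the integrated bound from the first step, should yield $\sup_\theta(w_++w_-)\le Ct^{-3}$, which is the assertion.
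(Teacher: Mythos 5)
Your setup is correct and is in fact the paper's own argument: the paper proves this lemma by precisely the characteristic/Gr\"onwall scheme you describe in your second step (it does not even need your integrated first step, which is correct but unused). The genuine problem is your third paragraph: the obstruction you claim is not real, and the alternative you sketch to get around it is not a proof. The ``crude'' supremum-plus-Gr\"onwall argument does produce the sharp rate $t^{-3}$, and the reason is the factor $\tfrac12$ in the coupling, which you have effectively dropped. Write $h_\pm(t)=\sup_{\theta}\bigl[t^{7/2}w_\pm(t,\theta)\bigr]$. Integrating your inequality $\partpm w_\mp\le -\tfrac{7}{2t}w_\mp+\tfrac1{2t}w_\pm+O(t^{-3/2})w_\mp+O(t^{-5})w_\mp^{1/2}$ along a characteristic with integrating factor $s^{7/2}$ gives
\begin{equation}
	h_\mp(t)\le h_\mp(t_1)+\int_{t_1}^t\Bigl[\frac{1}{2s}h_\pm(s)+Cs^{-3/2}h_\mp(s)+Cs^{-3/2}\Bigr]ds,
\end{equation}
and summing over the two signs yields a closed inequality for $h=h_++h_-$ whose leading coefficient is $\tfrac1{2s}$, \emph{not} $\tfrac1s$. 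Gr\"onwall then gives $h(t)\le C(t/t_1)^{1/2}$, hence $\sup_\theta(w_++w_-)\le Ct^{1/2}\cdot t^{-7/2}=Ct^{-3}$, which is the assertion. Your rate $t^{-5/2}$ arises only if the coupling constant is doubled. In matrix terms: the system has diagonal decay $-\tfrac{7}{2t}$ and off-diagonal coupling $+\tfrac1{2t}$, so the slow (symmetric) eigenvalue is $-\tfrac3t$, and Gr\"onwall applied to the symmetric sum of suprema captures exactly this slow mode --- a single characteristic does see it, provided one sums before estimating.

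This is exactly how the paper closes the proof, with slightly different bookkeeping: it sets $\hatAfctmp=t^4\Afctmp+t$, so that the weight $t^4$ cancels the damping $-\tfrac4t\Afctmp$ left after writing the cross terms as $\tfrac1t\Afctmp+\tfrac1{2t}(\Afct_++\Afct_-)$; the summed Gr\"onwall coefficient is then $\tfrac1s$, the growth is $t$, and the decay is again $t\cdot t^{-4}=t^{-3}$ (same slow eigenvalue, different splitting); the scalar field terms $\Bfctpm$ are handled identically since $\potV'=0$ kills the source, just as you observe. By contrast, your proposed repair --- interpreting $\int_{t_1}^t s^{5/2}w_\pm(c_\pm(s))\,ds$ as an energy flux through a null curve controlled by $\Ebas$ and the weight $\alpha^{-1/2}\ge ct^{3/2}$ --- is not established anywhere: neither the paper nor your proposal derives a flux identity through characteristics, and the closing sentence ``should yield $\sup_\theta(w_++w_-)\le Ct^{-3}$'' is a conjecture, not an argument. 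As written, your proof is therefore incomplete at its final step; the fix is simply to carry out the Gr\"onwall estimate you dismissed, keeping track of the factor $\tfrac12$.
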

\begin{rema}
	This estimate should be compared to the one obtained in Lemma~\ref{lemm_Enewbound}. The use of the energy~$\Enew$ gave us direct access at an integral estimate for~$P$, $Q$ and~$\scalphi$, but only provided boundedness by a constant. By using the energy~$\Ebas$ and proving several immediate steps, we are able to conclude a decay of order~$t^{{-}3}$ instead of only boundedness.
\end{rema}

\begin{proof}
	The proof proceeds in the spirit of the proof of~\cite[Lemma~53]{andreassonringstrom_cosmicnohairT3gowdyeinsteinvlasov}, where estimates for~$P$ and $Q$ are obtained for the Vlasov case. We give a full proof both for the variables~$P$, $Q$ as well as for~$\scalphi$, as the details differ.

	From the signs in the evolution equation for~$\alpha$, equation~\eqref{eqn_timeevolalpha}, and the estimate for~$\lambda$ found in Lemma~\ref{lemm_ebasestimateoneovertwo}, we conclude that
	\begin{equation}
	\label{eqn_estimtimederivalpha}
		{-}(\frac2t-\frac{\alpha_t}{\alpha})\le {-}(\frac2t + 4t^{1/2}e^{\lambda/2}\constpot) = {-}\frac5t +\O(t^{{-}3/2}).
	\end{equation}
	
	Let us start with the estimates on~$P$, $Q$.
	A short computation of the individual terms of the function~$\Afctmp$ defined in equation~\eqref{eqn_defiAfctpm} shows that
	\begin{align}
		\mp\frac2t\alpha^{1/2}(P_\mp P_\theta +e^{2P}Q_\mp Q_\theta)
		  ={}&\frac{1}{2t}(\Afctmp-\Afctpm)+\frac2t\alpha(P_\theta^2+e^{2P} Q_\theta^2)\\
		  \le{}&\frac1t \Afctmp +\frac1{2t}(\Afct_++\Afct_-).
	\end{align}
	With this, the lightcone evolution of~$\Afctmp$ which we have given in Lemma~\ref{lemm_Afctlightconederiv} can be estimated via
	\begin{equation}
		\partpm\Afctmp\le {-}\frac4t\Afctmp + Ct^{{-}3/2}\Afctmp + \frac{1}{2t}(\Afct_++\Afct_-) +Ct^{{-}5}\Afctmp^{1/2},
	\end{equation}
	where we have used estimate~\eqref{eqn_estimtimederivalpha} and Lemma~\ref{lemm_quotientsboundstrong}.
	We set
	\begin{equation}
		\hatAfctmp=t^4\Afctmp+t
	\end{equation}
	and find
	\begin{equation}
		\partpm\hatAfctmp\le\frac{1}{2t}(\hat\Afct_++\hat\Afct_-) + Ct^{{-}3/2}\hatAfctmp.
	\end{equation}
	Defining
	\begin{equation}
		\hat F(t)=\sup_{\theta\in\SS^1}\hat\Afct_+(t,\theta)+\sup_{\theta\in\SS^1}\hat\Afct_-(t,\theta),
	\end{equation}
	we can use this estimate on~$\partpm\hatAfctmp$ together with Grönwall's lemma to obtain
	\begin{equation}
		\hat F(t)\le Ct.
	\end{equation}
	Consequently,
	\begin{equation}
		P_t^2+\alpha P_\theta^2 + e^{2P}(Q_t^2+\alpha Q_\theta^2) = \frac12(\Afct_++\Afct_-)\le Ct^{{-}3}.
	\end{equation}

	We now turn our attention to~$\scalphi$. A similar computation as for~$\Afctmp$ shows that
	\begin{equation}
		\mp\frac2t\alpha^{1/2}\scalphi_\mp \scalphi_\theta
		  =\frac{1}{2t}(\Bfctmp-\Bfctpm)+\frac2t\alpha \scalphi_\theta^2
		  \le\frac1t \Bfctmp +\frac1{2t}(\Bfct_++\Bfct_-),
	\end{equation}
	and applying this and estimate~\eqref{eqn_estimtimederivalpha} to Lemma~\ref{lemm_Bfctlightconederiv} yields
	\begin{equation}
		\partpm\Bfctmp\le {-}\frac4t\Bfctmp + Ct^{{-}3/2}\Bfctmp + \frac{1}{2t}(\Bfct_++\Bfct_-).
	\end{equation}
	This is a (simplified) version of the estimate we had available for~$\partpm\Afctmp$. The whole chain of arguments leading to the estimates for~$P$, $Q$ builds on this form, which means that we can use the same steps to find
	\begin{equation}
		\scalphi_t^2+\alpha \scalphi_\theta^2  = \frac12(\Bfct_++\Bfct_-)\le Ct^{{-}3}.
	\end{equation}
\end{proof}

Having obtained estimates for the first derivatives of~$P$, $Q$ and~$\scalphi$, we can also tackle the first derivatives of the remaining variables, and improve our grasp on the zeroth derivatives.
\begin{lemm}
\label{lemm_alphalambdaderivimproved}
	\consconstVandTtwoandlambdaasympt
	Then there is a constant~$C>0$ such that 
	\begin{align}
		\normCzero{\hat\lambda(t,\cdot)}=\normCzero{\lambda(t,\cdot)+3\ln t-2\ln(\frac{3}{4\constpot})}\le {}& Ct^{{-}1},\\
		\normCzero{\frac{\alpha_t}{\alpha}+\frac3t} + \normCzero{\lambda_t+\frac3t}\le {}& Ct^{{-}2},\\
		\normCzero{\alpha^{1/2}\lambda_\theta} \le{}& Ct^{{-}2},\\
		\normCzero{\parttheta(\frac{e^{\lambda/2-P}J^2}{t^{5/2}})} + \normCzero{\parttheta(\frac{e^{\lambda/2+P}(K-QJ)^2}{t^{5/2}})} \le {}& C t^{{-}4},
	\end{align}
	for all~$t\ge t_1$.
\end{lemm}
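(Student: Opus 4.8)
The plan is to bootstrap the sharp decay rates from the weaker estimates already in hand, the crucial new input being the pointwise bound $P_t^2+\alpha P_\theta^2+e^{2P}(Q_t^2+\alpha Q_\theta^2)+\scalphi_t^2+\alpha\scalphi_\theta^2\le Ct^{-3}$ of Lemma~\ref{lemm_PQphifirstderivatives}. I would prove the four inequalities in the stated order, since each relies on its predecessors. The \emph{genuine work}, and what I expect to be the main obstacle, is the first one: the improvement of $\normCzero{\hat\lambda}$ from $t^{-1/2}$ (Lemma~\ref{lemm_ebasestimateoneovertwo}) to $t^{-1}$. It rests on the new pointwise bound together with a careful uniform-in-$\theta$ differential-inequality comparison, whereas the remaining three estimates are then essentially direct substitutions into the evolution equations and the constraint.

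For the $\hat\lambda$-estimate I substitute $e^{\lambda/2}=\frac{3}{4\constpot}t^{-3/2}e^{\hat\lambda/2}$ into the evolution equation~\eqref{eqn_timeevollambda}. Adding $3/t$, the potential term and $3/t$ combine to $\frac3t(1-e^{\hat\lambda/2})$, the quadratic-derivative term $t[\,\cdots\,]$ is $\O(t^{-2})$ by Lemma~\ref{lemm_PQphifirstderivatives}, and the twist quotients are $\O(t^{-4})$ by Lemma~\ref{lemm_quotientsboundstrong}, all uniformly in $\theta$. Using the known bound $\normCzero{\hat\lambda}\le Ct^{-1/2}$ to linearise $1-e^{\hat\lambda/2}=-\tfrac12\hat\lambda+\O(\hat\lambda^2)=-\tfrac12\hat\lambda+\O(t^{-1})$, I obtain, for each fixed $\theta$, the scalar differential equation
\begin{equation}
	\partt\hat\lambda=-\frac{3}{2t}\hat\lambda+\O(t^{-2}).
\end{equation}
Multiplying by the integrating factor $t^{3/2}$ and integrating from $t_1$ to $t$ gives $t^{3/2}\hat\lambda(t,\theta)=\O(1)+\O(t^{1/2})$, whence $\normCzero{\hat\lambda}\le Ct^{-1}$.

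The second and third estimates then follow by substitution. In both~\eqref{eqn_timeevolalpha} and~\eqref{eqn_timeevollambda} the potential contribution reorganises via $-4t^{1/2}e^{\lambda/2}\constpot+\frac3t=\frac3t(1-e^{\hat\lambda/2})$ into an $\O(t^{-2})$ term by the first estimate; the twist quotients are $\O(t^{-4})$ by Lemma~\ref{lemm_quotientsboundstrong}, and the quadratic-derivative term $t[\,\cdots\,]$ present in~\eqref{eqn_timeevollambda} is $\O(t^{-2})$ by Lemma~\ref{lemm_PQphifirstderivatives}. This gives $\normCzero{\frac{\alpha_t}{\alpha}+\frac3t}+\normCzero{\lambda_t+\frac3t}\le Ct^{-2}$. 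For $\lambda_\theta$, Young's inequality applied to the constraint~\eqref{eqn_spaceevollambda}, exactly as in the proof of Lemma~\ref{lemm_ebasestimateoneovertwo}, gives $\alpha^{1/2}\absval{\lambda_\theta}\le t[P_t^2+\alpha P_\theta^2+e^{2P}(Q_t^2+\alpha Q_\theta^2)+2(\scalphi_t^2+\alpha\scalphi_\theta^2)]\le Ct^{-2}$ by Lemma~\ref{lemm_PQphifirstderivatives}.

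For the last estimate I first promote the upper bound $\alpha\le Ct^{-3}$ of Lemma~\ref{lemm_estimatealphaVconstant} to a two-sided pointwise bound: the second estimate reads $\frac{\alpha_t}{\alpha}=-\frac3t+r$ with $\normCzero{r}\le Ct^{-2}$, and since $\int_{t_1}^\infty s^{-2}ds<\infty$, integration gives $\alpha(t,\theta)=\alpha(t_1,\theta)(t_1/t)^3\exp(\int_{t_1}^t r\,ds)$; continuity and positivity of $\alpha(t_1,\cdot)$ on the compact $\SS^1$ then force $ct^{-3}\le\alpha(t,\theta)\le Ct^{-3}$, hence $\alpha^{-1/2}\le Ct^{3/2}$ pointwise. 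Combined with Lemma~\ref{lemm_PQphifirstderivatives} this makes $\absval{P_\theta}=\alpha^{-1/2}(\alpha P_\theta^2)^{1/2}\le C$ and likewise $\absval{Q_\theta}\le C$, while the third estimate gives $\absval{\lambda_\theta}=\alpha^{-1/2}\alpha^{1/2}\absval{\lambda_\theta}\le Ct^{-1/2}$ pointwise. Writing $\parttheta(\quotJ5)=(\quotJ5)(\tfrac12\lambda_\theta-P_\theta)$, together with the analogous expression for the second quotient, which carries an additional $Q_\theta$-term of the same order, and bounding the quotients by $Ct^{-4}$ through Lemma~\ref{lemm_quotientsboundstrong} and $e^{\lambda/2}=\O(t^{-3/2})$, each resulting piece is a product of an $\O(t^{-4})$ factor with a pointwise-bounded one, which yields the claimed $\O(t^{-4})$ bound and completes the proof.
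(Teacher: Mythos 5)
Your proposal is correct and follows essentially the same route as the paper: the same inputs (Lemma~\ref{lemm_PQphifirstderivatives}, Lemma~\ref{lemm_quotientsboundstrong}, Lemma~\ref{lemm_ebasestimateoneovertwo} and the evolution equations) drive the same four steps in the same order, including the intermediate two-sided bound $C_1t^{-3}\le\alpha\le C_2t^{-3}$ needed for the last estimate. The only difference is a technical detail in the first step: the paper derives the differential inequality $\partt\hat\lambda^2\le-\frac3t\hat\lambda^2+Ct^{-2}\absval{\hat\lambda}$ and shows that $t^2\hat\lambda^2$ is bounded (deferring to the argument of Lemma~54 in the Vlasov paper), whereas you linearise to $\partt\hat\lambda=-\frac{3}{2t}\hat\lambda+\O(t^{-2})$ and integrate against the factor $t^{3/2}$ --- an equivalent, self-contained ODE comparison.
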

\begin{proof}[Proof of Lemma~\ref{lemm_alphalambdaderivimproved}]
	Returning to the evolution for~$\partt\hat\lambda$, equation~\eqref{eqn_evolhatlambda}, and applying the results of Lemma~\ref{lemm_PQphifirstderivatives}, we find that
	\begin{equation}
		\partt\hat\lambda=\frac3t-\frac3te^{\hat\lambda/2}+\O(t^{{-}2})
	\end{equation}
	and conclude
	\begin{equation}
		\partt\hat\lambda^2\le {-}\frac3t \hat\lambda^2+Ct^{{-}2}\absval{\hat\lambda}
	\end{equation} via Taylor expansion and the estimate for~$\hat\lambda$ in Lemma~\ref{lemm_ebasestimateoneovertwo}.
	From this, we obtain the first estimate in the statement by showing that~$t^2\hat\lambda^2$ is bounded, as in the proof of~\cite[Lemma~54]{andreassonringstrom_cosmicnohairT3gowdyeinsteinvlasov}. 
	
	For the estimates on~$\alpha_t/\alpha$, $\lambda_t$ and~$\lambda_\theta$ in the statement, we consider equations~\eqref{eqn_timeevolalpha}, \eqref{eqn_timeevollambda} and~\eqref{eqn_spaceevollambda} and apply the estimates from Lemma~\ref{lemm_quotientsboundstrong}, Lemma~\ref{lemm_PQphifirstderivatives} as well as the estimate for~$\lambda$ we have just obtained.
	
	As a direct consequence of this, we find that~$t^3\alpha$ converges to a strictly positive function, in other words
	\begin{equation}
	\label{eqn_improvedestimalpha}
		C_1t^{{-}3} \le \alpha(t,\theta) \le C_2t^{{-}3}, 
	\end{equation}
	for suitable constants~$C_{1,2}>0$, $t\ge t_1$. 
	
	Lastly, the estimates in the statement for the~$\theta$-derivatives of quotients are a direct consequence of the estimates for the non-differentiated quotients from Lemma~\ref{lemm_quotientsboundstrong} together with Lemma~\ref{lemm_PQphifirstderivatives} for~$P_\theta$, the estimate for~$\lambda_\theta$ and the lower bound on~$\alpha$ we just showed, and the fact that~$P$, $Q$ are bounded due to Lemma~\ref{lemm_PQphibounded}.
\end{proof}

\begin{lemm}
\label{lemm_alphaderivtheta}
	\consconstVandTtwoandlambdaasympt
	Then there is a constant~$C>0$ such that
	\begin{align}
		\normCzero{\parttheta(\frac{\alpha_t}{\alpha})}\le {}& Ct^{{-}3/2},\\
		\normCzero{\frac{\alpha_\theta}{\alpha}} \le {}& C,
	\end{align}
	for all~$t\ge t_1$.
\end{lemm}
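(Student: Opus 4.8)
The plan is to prove the first estimate by differentiating the $\alpha$-evolution equation~\eqref{eqn_timeevolalpha} in~$\theta$, and then to deduce the second by integrating the first in time. Since the potential is the constant~$\constpot$, equation~\eqref{eqn_timeevolalpha} reads $\alpha_t/\alpha = -e^{\lambda/2-P}J^2t^{-5/2} - e^{\lambda/2+P}(K-QJ)^2t^{-5/2} - 4t^{1/2}e^{\lambda/2}\constpot$, and differentiating with respect to~$\theta$ (using $\parttheta e^{\lambda/2}=\tfrac12 e^{\lambda/2}\lambda_\theta$) gives
\begin{equation}
	\parttheta\left(\frac{\alpha_t}{\alpha}\right) = -\parttheta\left(\frac{e^{\lambda/2-P}J^2}{t^{5/2}}\right) - \parttheta\left(\frac{e^{\lambda/2+P}(K-QJ)^2}{t^{5/2}}\right) - 2t^{1/2}\constpot e^{\lambda/2}\lambda_\theta.
\end{equation}

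The two twist terms are precisely the~$\theta$-derivatives of the quotients controlled in the final estimate of Lemma~\ref{lemm_alphalambdaderivimproved}, so each is bounded by~$Ct^{{-}4}$. For the potential term I would use $e^{\lambda/2}\le Ct^{{-}3/2}$, which follows from the~$\hat\lambda$-estimate of Lemma~\ref{lemm_ebasestimateoneovertwo}, together with a pointwise bound on~$\lambda_\theta$. The latter I obtain by combining the estimate $\normCzero{\alpha^{1/2}\lambda_\theta}\le Ct^{{-}2}$ from Lemma~\ref{lemm_alphalambdaderivimproved} with the lower bound~$\alpha\ge C_1t^{{-}3}$ from~\eqref{eqn_improvedestimalpha}, which yields $\absval{\lambda_\theta}\le Ct^{{-}1/2}$. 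The potential term is then of order $t^{1/2}\cdot t^{{-}3/2}\cdot t^{{-}1/2}=t^{{-}3/2}$, which dominates the~$t^{{-}4}$ contribution of the twist terms and establishes the first estimate $\normCzero{\parttheta(\alpha_t/\alpha)}\le Ct^{{-}3/2}$.

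For the second estimate I would exploit the commutation of derivatives $\partt(\alpha_\theta/\alpha)=\partt\parttheta\ln\alpha=\parttheta\partt\ln\alpha=\parttheta(\alpha_t/\alpha)$. Integrating in time from~$t_1$ gives
\begin{equation}
	\frac{\alpha_\theta}{\alpha}(t,\theta) = \frac{\alpha_\theta}{\alpha}(t_1,\theta) + \int_{t_1}^t \parttheta\left(\frac{\alpha_t}{\alpha}\right)(s,\theta)\,ds,
\end{equation}
and since the first estimate bounds the integrand by~$Cs^{{-}3/2}$, the integral converges and is uniformly bounded by~$2Ct_1^{{-}1/2}$. As the initial value~$\alpha_\theta/\alpha$ at time~$t_1$ is a bounded smooth function on~$\SS^1$, this yields $\normCzero{\alpha_\theta/\alpha}\le C$ as claimed.

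The computation is routine once these inputs are assembled; the only step requiring care is the passage from the weighted bound on $\alpha^{1/2}\lambda_\theta$ to a pointwise bound on~$\lambda_\theta$ itself, which relies crucially on the control of~$\alpha$ being of exact order~$t^{{-}3}$. Beyond bookkeeping the powers of~$t$, I expect no serious obstacle.
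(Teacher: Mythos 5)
Your proposal is correct and follows essentially the same route as the paper: differentiate equation~\eqref{eqn_timeevolalpha} in~$\theta$, bound the twist terms by~$Ct^{-4}$ via Lemma~\ref{lemm_alphalambdaderivimproved}, bound the potential term by~$Ct^{-3/2}$ using the $\hat\lambda$- and $\alpha^{1/2}\lambda_\theta$-estimates together with~\eqref{eqn_improvedestimalpha}, and then integrate $\partt(\alpha_\theta/\alpha)=\parttheta(\alpha_t/\alpha)$ in time. The only difference is that you spell out the passage from the weighted bound on~$\alpha^{1/2}\lambda_\theta$ to a pointwise bound on~$\lambda_\theta$, which the paper leaves implicit in its citation of Lemma~\ref{lemm_alphalambdaderivimproved}.
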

\begin{proof}
	Differentiating the evolution equation~\eqref{eqn_timeevolalpha} with respect to~$\theta$, we find
	\begin{equation}
		\parttheta(\frac{\alpha_t}{\alpha})={-}\parttheta(\frac{e^{\lambda/2-P}J^2}{t^{5/2}} + \frac{e^{\lambda/2+P}(K-QJ)^2}{t^{5/2}}) -\frac{\lambda_\theta}{2}4t^{1/2}e^{\lambda/2}\constpot.
	\end{equation}
	From Lemma~\ref{lemm_alphalambdaderivimproved}, we conclude that the two terms in the bracket decay as~$\O(t^{{-}4})$. Using the same Lemma, we find that the last term decays is of order~$\O(t^{{-}3/2})$. 
	This implies
	\begin{equation}
		\normCzero{\partt(\frac{\alpha_\theta}{\alpha})}=\normCzero{\parttheta(\frac{\alpha_t}{\alpha})}\le Ct^{{-}3/2},
	\end{equation}
	and integration with respect to~$t$ concludes the proof.
\end{proof}

The next  and final part of discussing the asymptotic behaviour of the individual functions in the system of evolution equations is a longer inductive argument. We make the following inductive assumption and, using several auxiliary steps detailed in Lemma~\ref{lemm_alphalambdahigherderiv}, prove that the statement of this assumption holds for all~$N\in\NN$.
\begin{inducassum}
\label{inductiveassumpt}
	For some~$1\le {N}\in\ZZ$ there is a constant~$C_{N-1}$
	such that
	\begin{equation}
	\label{eqn_inductiveassumpt}
		\normck{N-1}{P_\theta}+\normck{N-1}{Q_\theta}+\normck{N-1}{\scalphi_\theta}
		+t^{3/2}\normck{N-1}{P_t}+t^{3/2}\normck{N-1}{Q_t}+t^{3/2}\normck{N-1}{\scalphi_t}\le C_{N-1},
	\end{equation}
	for all~$(t,\theta)\in[t_1,\infty)\times\TT^3$.
\end{inducassum}

\begin{lemm}
\label{lemm_alphalambdahigherderiv}
	\consconstVandTtwoandlambdaasympt
	Assume that the Inductive assumption~\ref{inductiveassumpt} holds for some~$1\le {N}\in\ZZ$. Then there are constants~$C_j$, $j=0,\ldots N$, depending only on the solution and on~$N$, such that
	\begin{align}
		\normCzero{\parttheta^{j}\lambda} \le{}& C_j t^{{-}1/2},\\
		\normCzero{\alpha^{{-}1}\parttheta^j\alpha} \le{}& C_j,\\
		\normCzero{\parttheta^{j}(\frac{\alpha_t}{\alpha})} \le{}& C_j t^{{-}3/2},
	\end{align}
	for~$t\ge t_1$ and~$1\le j\le N$.
\end{lemm}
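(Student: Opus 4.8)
The plan is to prove the three estimates in a different order than they are stated: first control the $\theta$-derivatives of~$\lambda$, then those of~$\alpha_t/\alpha$, and finally integrate in time to obtain the bound on~$\alpha^{{-}1}\parttheta^j\alpha$. No genuine induction on~$j$ is needed; instead I would make three passes over all $1\le j\le N$. Throughout I rely on the Inductive assumption~\ref{inductiveassumpt}, which bounds $\parttheta^i P,\parttheta^i Q,\parttheta^i\scalphi$ uniformly for $0\le i\le N$ and supplies the decay $\normCzero{\parttheta^{i}P_t}+\normCzero{\parttheta^{i}Q_t}+\normCzero{\parttheta^{i}\scalphi_t}\le Ct^{{-}3/2}$ for $0\le i\le N-1$, together with $e^{\lambda/2}\le Ct^{{-}3/2}$ coming from the estimate on~$\hat\lambda$ in Lemma~\ref{lemm_alphalambdaderivimproved}. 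For the first estimate I would differentiate the constraint~\eqref{eqn_spaceevollambda} for~$\lambda_\theta$ a total of $j-1$ times (permissible since $j-1\le N-1$) and expand by the Leibniz rule: every resulting term is $2t$ times a product containing exactly one factor of the form $\parttheta^a P_t$, $\parttheta^a Q_t$ or $\parttheta^a\scalphi_t$ with $a\le N-1$ — hence $\O(t^{{-}3/2})$ — and remaining factors $\parttheta^b(e^{2P})$, $\parttheta^b P_\theta$, $\parttheta^b Q_\theta$, $\parttheta^b\scalphi_\theta$ with $b\le N-1$, all uniformly bounded. Collecting gives $\normCzero{\parttheta^j\lambda}\le 2t\cdot Ct^{{-}3/2}=Ct^{{-}1/2}$.

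For the third estimate I would differentiate the evolution equation~\eqref{eqn_timeevolalpha} for~$\alpha_t/\alpha$ a total of $j$ times in~$\theta$. Its right-hand side contains no $\alpha$-derivatives, so no circularity arises. For the potential term I use $\parttheta^j(e^{\lambda/2})=e^{\lambda/2}Q_j$, where $Q_j$ is a Bell-type polynomial in $\parttheta\lambda,\dots,\parttheta^j\lambda$ with no constant term; by the first estimate each $\parttheta^i\lambda=\O(t^{{-}1/2})$, so $\normCzero{Q_j}\le Ct^{{-}1/2}$, and with $e^{\lambda/2}\le Ct^{{-}3/2}$ this yields $\normCzero{\parttheta^j(4t^{1/2}e^{\lambda/2}\constpot)}\le 4\constpot t^{1/2}\cdot Ct^{{-}2}=Ct^{{-}3/2}$. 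For the two twist quotients I expand $\parttheta^j(e^{\lambda/2\mp P})=e^{\lambda/2\mp P}R_j$, with $R_j$ a polynomial in $\parttheta^i\lambda$ and $\parttheta^i P$ (and, from $(K-QJ)^2$, in $\parttheta^i Q$) for $i\le j\le N$, all at worst uniformly bounded; since $e^{\lambda/2\mp P}\le Ct^{{-}3/2}$ and the quotients carry $t^{{-}5/2}$, these contributions are $\O(t^{{-}4})$. Hence $\normCzero{\parttheta^j(\alpha_t/\alpha)}\le Ct^{{-}3/2}$.

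For the second estimate I would use that partial derivatives commute, so $\partt(\parttheta^j\ln\alpha)=\parttheta^j(\alpha_t/\alpha)$. Integrating the bound just obtained from~$t_1$ to~$t$ and using $\int_{t_1}^\infty s^{{-}3/2}\,ds<\infty$ together with smoothness of the data at~$t_1$ shows that $\parttheta^j\ln\alpha=\parttheta^{j-1}(\alpha_\theta/\alpha)$ is uniformly bounded. Expressing $\alpha^{{-}1}\parttheta^j\alpha$ through the Faà di Bruno formula as the complete Bell polynomial in $\parttheta\ln\alpha,\dots,\parttheta^j\ln\alpha$, whose arguments are now all bounded, yields $\normCzero{\alpha^{{-}1}\parttheta^j\alpha}\le C_j$. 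For $j=1$ these three estimates reduce to the statements of Lemmata~\ref{lemm_alphalambdaderivimproved} and~\ref{lemm_alphaderivtheta}, serving as a consistency check.

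The main obstacle is purely organisational: keeping track, in the Leibniz and Faà di Bruno expansions of $e^{\lambda/2}$, $e^{\lambda/2\mp P}$ and $\ln\alpha$, of which factors merely stay bounded and which actually decay, so that the claimed powers of~$t$ emerge. The structural facts that make this bookkeeping go through are that each $\theta$-differentiation of~$\lambda$ produces decay rather than growth (despite $\lambda\sim-3\ln t$), that every $\theta$-derivative of~$P$, $Q$, $\scalphi$ up to order~$N$ is controlled by the Inductive assumption, and that the evolution equation for~$\alpha_t/\alpha$ is free of $\alpha$-derivatives, which decouples the three estimates in the order $\lambda\to\alpha_t/\alpha\to\alpha$.
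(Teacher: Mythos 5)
Your proposal is correct and follows essentially the same route as the paper's proof: bound $\parttheta^{j}\lambda$ by differentiating the constraint~\eqref{eqn_spaceevollambda} $j-1$ times, bound $\parttheta^{j}(\alpha_t/\alpha)$ from the $\theta$-differentiated equation~\eqref{eqn_timeevolalpha} using the decay of the $\lambda$-derivatives, $e^{\lambda/2}=\O(t^{-3/2})$ and Lemma~\ref{lemm_quotientsboundstrong}, and then integrate in time to control $\alpha^{-1}\parttheta^{j}\alpha$. The only cosmetic differences are that you treat $j=1$ uniformly instead of citing Lemmata~\ref{lemm_alphalambdaderivimproved} and~\ref{lemm_alphaderivtheta} for the base case, and that you convert bounds on $\parttheta^{j}\ln\alpha$ into bounds on $\alpha^{-1}\parttheta^{j}\alpha$ via Fa\`a di Bruno/Bell polynomials, where the paper uses the equivalent Leibniz identity $\alpha^{-1}\parttheta^{j}\alpha=\parttheta^{j-1}(\alpha_\theta/\alpha)-\sum_{1\le i\le j-1}\gamma_i\,\parttheta^{i}(\alpha^{-1})\,\parttheta^{j-i}\alpha$ together with an induction on~$j$.
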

\begin{proof}
	The case~$N=1$ is covered by Lemma~\ref{lemm_alphaderivtheta} yielding the two estimates with~$\alpha$ and the estimate for~$\lambda_\theta$ from Lemma~\ref{lemm_alphalambdaderivimproved}. We can therefore assume that~$N\ge2$.
	
	For each~$1\le j\le N$, differentiating the equation for~$\lambda_\theta$, equation~\eqref{eqn_spaceevollambda}, $j-1$ times with respect to~$\theta$ and employing the Inductive assumption~\ref{inductiveassumpt} yields
	\begin{equation}
	\label{eqn_auxilhigherorderlambda}
		\normCzero{\parttheta^{j}\lambda} \le C_j t^{{-}1/2}.
	\end{equation}
	Further, the evolution equation~\eqref{eqn_timeevolalpha} reveals
	\begin{equation}
		\parttheta^{j}(\frac{\alpha_t}{\alpha}) = \parttheta^{j}({-}\frac{e^{\lambda/2-P}J^2}{t^{5/2}} - \frac{e^{\lambda/2+P}(K-QJ)^2}{t^{5/2}} - 4t^{1/2}e^{\lambda/2}\constpot).
	\end{equation}
	The last term decays as~$t^{{-}3/2}$, due to estimate~\eqref{eqn_auxilhigherorderlambda} together with Lemma~\ref{lemm_alphalambdaderivimproved} giving~$e^{\lambda/2}=\O(t^{{-}3/2})$.
	In comparison, the two quotients can be neglected, as the~$\theta$-derivatives of~$P$, $Q$ as well as~$\lambda$ are at least bounded, due to the Inductive assumption~\ref{inductiveassumpt} and again estimate~\eqref{eqn_auxilhigherorderlambda}, and the quotients themselves decay of as~$t^{{-}4}$, see Lemma~\ref{lemm_quotientsboundstrong}.
	Consequently,
	\begin{equation}
		\normCzero{\partt \parttheta^{j-1}(\frac{\alpha_\theta}{\alpha})}=\normCzero{\parttheta^{j}(\frac{\alpha_t}{\alpha})}\le Ct^{{-}3/2},
	\end{equation}
	and integration yields 
	\begin{equation}
		\normCzero{\parttheta^{j-1}(\frac{\alpha_\theta}{\alpha})}\le C_j.
	\end{equation}
	As
	\begin{equation}
		\alpha^{{-}1}(\parttheta^{j}\alpha)=\parttheta^{j-1}(\frac{\alpha_\theta}{\alpha})-\sum_{1\le i\le j-1} \gamma_i\parttheta^i(\alpha^{{-}1})  \parttheta^{j-i}(\alpha),
	\end{equation}
	where~$\gamma_i$ are binomial factors,
	we can inductively conclude that the remaining estimate in the statement is true for all~$1\le j\le N$.
\end{proof}

\begin{lemm}
\label{lemm_inductassumfornextN}
	\consconstVandTtwoandlambdaasympt
	Assume that the Inductive assumption~\eqref{inductiveassumpt} holds for some~$1\le {N}\in\ZZ$. Then there is a constant~$C_{N}$, depending only on the solution and on~$N$, such that
	\begin{align}
	\label{eqn_inductionclose}
		t^{3/2}\normCzero{\parttheta^NP_t}+\normCzero{\parttheta^NP_\theta} +
		t^{3/2}\normCzero{e^P\parttheta^NQ_t}+\normCzero{e^P\parttheta^NQ_\theta} &\\
		+
		t^{3/2}\normCzero{\parttheta^N\scalphi_t}+\normCzero{\parttheta^N\scalphi_\theta} &\le C_N,
	\end{align}
	for~$t\ge t_1$. As a consequence, \eqref{eqn_inductiveassumpt} holds with~$N$ replaced by~$N+1$.
\end{lemm}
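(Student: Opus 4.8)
The plan is to run the same lightcone-integration scheme that produced the first-derivative decay in Lemma~\ref{lemm_PQphifirstderivatives}, now applied to the $N$-th order spatial derivatives $\parttheta^N P$, $\parttheta^N Q$ and $\parttheta^N\scalphi$. First I would record the quantities already under control. By the Inductive assumption~\ref{inductiveassumpt}, all $\theta$-derivatives of $P$, $Q$, $\scalphi$ up to order $N$ are bounded and all of their $t$-derivatives up to $\theta$-order $N-1$ decay like $t^{-3/2}$; by Lemma~\ref{lemm_alphalambdahigherderiv}, for $1\le j\le N$ one has $\normCzero{\parttheta^j\lambda}=\O(t^{-1/2})$, $\alpha^{-1}\parttheta^j\alpha$ bounded, and $\normCzero{\parttheta^j(\alpha_t/\alpha)}=\O(t^{-3/2})$; and by Lemmata~\ref{lemm_quotientsboundstrong} and~\ref{lemm_alphalambdaderivimproved} the $J,K$-quotients together with their $\theta$-derivatives decay at least like $t^{-4}$. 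These are exactly the coefficient bounds needed for the differentiated equations.

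Next I would differentiate the rewritten evolution equations \eqref{eqn_secondevolPrewritten} and \eqref{eqn_secondevolQrewritten}, and the equation of motion in the form \eqref{eqn_secondevolscalphifromeqnmotion} (with $\potV'=0$ since $\potV=\constpot$), $N$ times in $\theta$. Writing $\psi=\parttheta^N\scalphi$ and using Leibniz, this yields a wave equation $\psi_{tt}-\alpha\psi_{\theta\theta}=-\frac1t\psi_t+\frac{\alpha_t}{2\alpha}\psi_t+\frac{\alpha_\theta}{2}\psi_\theta+S_N$, whose principal part and leading first-order terms coincide with those for $\scalphi$ itself and whose source $S_N$ collects all commutators. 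Every term in $S_N$ is controlled by the preliminary bounds: it is either lower order (bounded by the Inductive assumption), or carries a factor $\parttheta^j\alpha=\O(\alpha)$, $\parttheta^j(\alpha_t/\alpha)=\O(t^{-3/2})$, or a $J,K$-quotient, and hence either decays strongly or can be written as $\O(t^{-3/2})$ times a square root of the lightcone quantity. I would then form $\DfctNpm=[\parttheta^N\scalphi_t\pm\parttheta^N(\alpha^{1/2}\scalphi_\theta)]^2$ and its $(P,Q)$-counterpart $\A_{N+1,\pm}=[\parttheta^N P_t\pm\parttheta^N(\alpha^{1/2}P_\theta)]^2+e^{2P}[\parttheta^N Q_t\pm\parttheta^N(\alpha^{1/2}Q_\theta)]^2$ (the difference between $\parttheta^N(\alpha^{1/2}\scalphi_\theta)$ and $\alpha^{1/2}\parttheta^N\scalphi_\theta$ being a controlled lower-order term of size $\O(t^{-3/2})$ by Lemma~\ref{lemm_alphalambdahigherderiv}), and compute $\partpm\DfctNmp$ using the wave equation exactly as in the base case.

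The computation should reproduce the inequality $\partpm\DfctNmp\le-\frac4t\DfctNmp+Ct^{-3/2}\DfctNmp+\frac1{2t}(\DfctNp+\DfctNm)+(\text{source})$, identical in form to \eqref{eqn_auxilestimintegrderivAfctpm}. Setting $\hatDfctNmp=t^4\DfctNmp+t$ turns this into $\partpm\hatDfctNmp\le\frac1{2t}(\hatDfctNp+\hatDfctNm)+Ct^{-3/2}\hatDfctNmp$; integrating along the characteristics $c_\pm$ of Lemma~\ref{lemm_PQphifirstderivatives}, passing to $\hat F(t)=\sup_\theta\hatDfctNp+\sup_\theta\hatDfctNm$, and applying Grönwall gives $\hat F(t)\le Ct$, hence $\DfctNmp\le Ct^{-3}$ and likewise for $\A_{N+1,\mp}$. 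From $\DfctNmp\le Ct^{-3}$ I read off $t^{3/2}\normCzero{\parttheta^N\scalphi_t}\le C$ and $\normCzero{\alpha^{1/2}\parttheta^N\scalphi_\theta}\le C$; the lower bound $\alpha\ge C_1 t^{-3}$ from \eqref{eqn_improvedestimalpha} upgrades the latter to $\normCzero{\parttheta^N\scalphi_\theta}\le C$, and the analogous extraction for $P$, $Q$ (the $e^P$ factor being harmless since $P$ is bounded by Lemma~\ref{lemm_PQphibounded}) yields \eqref{eqn_inductionclose}. Combining with the Inductive assumption then gives \eqref{eqn_inductiveassumpt} with $N$ replaced by $N+1$, closing the induction.

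The main obstacle is the commutator bookkeeping in forming $S_N$ and in differentiating the quadratic nonlinearities $e^{2P}(Q_t^2-\alpha Q_\theta^2)$ and $P_tQ_t-\alpha P_\theta Q_\theta$ of the $P,Q$ equations: one must verify that after $N$ spatial derivatives each resulting product contains at most one top-order factor, the others being controlled by the Inductive assumption, so that the right-hand side stays linear in the unknown lightcone quantity and every remaining coefficient decays like $t^{-3/2}$ or better. Once this is confirmed, the differential inequality, the rescaling, and the Grönwall step are a direct transcription of the base case.
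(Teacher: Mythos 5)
Your proposal is correct and follows essentially the same route as the paper's proof: commute $\parttheta^N$ through the wave equations, form $\DfctNpm$ together with its $(P,Q)$-analogue, derive the lightcone differential inequality with damping~$-\tfrac{7}{2t}$, rescale, and apply Grönwall along characteristics, the commutator terms being controlled exactly by Inductive assumption~\ref{inductiveassumpt}, Lemma~\ref{lemm_alphalambdahigherderiv} and Lemma~\ref{lemm_quotientsboundstrong}. The only (harmless) deviations are your weight $t^4\DfctNpm+t$ in place of the paper's $t^{7/2}\DfctNpm+t^{1/2}$ --- both absorb the $\O(t^{-3})\sqrt{\DfctNpm}$ source and yield $\DfctNpm\le C_Nt^{-3}$ --- and a mislabel: the inequality your scheme reproduces is the unnumbered one in the proof of Lemma~\ref{lemm_PQphifirstderivatives}, not~\eqref{eqn_auxilestimintegrderivAfctpm}.
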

\begin{proof}
	The proof for~$P$, $Q$ is identical to that of~\cite[Lemma~64]{andreassonringstrom_cosmicnohairT3gowdyeinsteinvlasov}: One expands
	\begin{equation}
		\partpm\left[\parttheta^NP_t\mp\parttheta^N(\alpha^{1/2}P_\theta)\right], \qquad
		\partpm\left[\parttheta^NQ_t\mp\parttheta^N(\alpha^{1/2}Q_\theta)\right],
	\end{equation}
	separating the~$N$-th order terms from the lower order ones. The lower order terms are estimated using the Inductive assumption~\ref{inductiveassumpt} which in particular implies Lemma~\ref{lemm_alphalambdahigherderiv}.
	For the highest order term, one then uses the lightcone wave equation expression from Lemma~\ref{lemm_Afctlightconederiv}. 
	For the reader's convenience, we list the prerequisites which the proof in~\cite{andreassonringstrom_cosmicnohairT3gowdyeinsteinvlasov} makes use of and where these statements have been proven in our case:
	\begin{itemize}
		\item our Inductive assumption~\ref{inductiveassumpt} replaces their Inductive assumption~61,
		\item our Lemma~\ref{lemm_alphalambdahigherderiv} replaces their Lemma~63,
		\item our equation~\eqref{eqn_improvedestimalpha} replaces their equation~(135),
		\item our evolution equations~\eqref{eqn_secondevolP} and~\eqref{eqn_secondevolQ} replace their evolution equations~(61) and~(68),
		\item the combined statemens in our Lemma~\ref{lemm_quotientsboundstrong} and Lemma~\ref{lemm_alphalambdaderivimproved} replace their Lemma~54, of which their equation~(128) is a part. 
	\end{itemize}
	
	Using a similar approach, we now prove the estimates for~$\scalphi$. We compute explicitly that
	\begin{align}
		\partpm\left[ \parttheta^N\scalphi_t \mp \parttheta^N(\alpha^{1/2}\scalphi_\theta )\right]
		  ={}& \parttheta^N\scalphi_{tt} \pm\alpha^{1/2}\parttheta^{N+1}\scalphi_t \mp\parttheta^N\left[\frac{\alpha_t}{2\alpha}\alpha^{1/2}\scalphi_\theta+\alpha^{1/2}\scalphi_{t\theta}\right]\\
		      &{}-\alpha^{1/2}\parttheta^{N+1}(\alpha^{1/2}\scalphi_\theta)\\
		  ={}& \parttheta^N\scalphi_{tt} \mp\frac{\alpha_t}{2\alpha}\parttheta^N(\alpha^{1/2}\scalphi_\theta)\mp\frac{N\alpha_\theta}{2\alpha}\alpha^{1/2}\parttheta^N\scalphi_t\\
		      &{}-\parttheta^N\left[\alpha^{1/2}\parttheta(\alpha^{1/2}\scalphi_\theta)\right] +\frac{N\alpha_\theta}{2\alpha}\alpha^{1/2}\parttheta^N(\alpha^{1/2}\scalphi_\theta)+\O(t^{{-}3}),
	\end{align}
	where we have estimated all terms with lower derivatives of~$\scalphi$ and~$\alpha$ via the Inductive assumption~\ref{inductiveassumpt} and Lemma~\ref{lemm_alphalambdahigherderiv}.
	Similarly, we treat the derivative of the equation of motion~\eqref{eqn_eqnofmotionscalfieldwithT2} and find
	\begin{align}
		\parttheta^N\left[ \scalphi_{tt}-\alpha^{1/2}\parttheta(\alpha^{1/2}\scalphi_\theta)\right]
			={}& \parttheta^N\left[ \scalphi_{tt}-\alpha\scalphi_{\theta\theta}-\frac{\alpha_\theta}{2}\scalphi_\theta\right]\\
			={}& \parttheta^N\left[-\frac1t\scalphi_t+\frac{\alpha_t}{2\alpha}\scalphi_t\right]\\
			={}& {-}\frac1t\parttheta^N\scalphi_t+\frac{\alpha_t}{2\alpha}\parttheta^N\scalphi_t+\O(t^{{-}3}).
	\end{align}
	We can now introduce
	\begin{equation}
		\DfctNpm =\left[\parttheta^N\scalphi_t \pm \parttheta^N(\alpha^{1/2}\scalphi_\theta )\right]^2
	\end{equation}
	and combine the previous two computations, making use of the fact that
	\begin{align}
		\mp\frac2t \parttheta^N(\alpha^{1/2}\scalphi_\theta) (\parttheta^N\scalphi_t\mp \parttheta^N(\alpha^{1/2}\scalphi_\theta))
		={}& \frac1{2t}(\DfctNmp -\DfctNpm ) +\frac2t\left[ \parttheta^N(\alpha^{1/2}\scalphi_\theta)\right]^2 \\
		\le{}& \frac1{2t}(\DfctNmp -\DfctNpm ) + \frac1t(\DfctNp+\DfctNm),
	\end{align}
	to conclude
	\begin{align}
		\partpm\DfctNmp
		\le{}&\frac{\alpha_t}{\alpha}\DfctNmp \mp\frac{N\alpha_\theta}{\alpha}\alpha^{1/2}\DfctNmp {-}\frac2t\parttheta^N\scalphi_t (\parttheta^N\scalphi_t\mp \parttheta^N(\alpha^{1/2}\scalphi_\theta))\\
		&{}+ C_N t^{{-}3}(\DfctNmp)^{1/2}\\
		\le{}&{-}\frac5t\DfctNmp \mp\frac2t \parttheta^N(\alpha^{1/2}\scalphi_\theta) (\parttheta^N\scalphi_t\mp \parttheta^N(\alpha^{1/2}\scalphi_\theta)) +C_N t^{{-}3}(\DfctNmp)^{1/2}\\
		&{}+ C_Nt^{{-}3/2}(\DfctNm+\DfctNm)\\
		\le{}&{-}\frac5t\DfctNmp +\frac1{2t}(\DfctNmp -\DfctNpm ) + \frac1t(\DfctNp+\DfctNm)\\
		&{}+C_N t^{{-}3}(\DfctNmp)^{1/2}+ C_Nt^{{-}3/2}(\DfctNp+\DfctNm).
	\end{align}
	Here, we have made use of the asymptotic properties of~$\alpha_t$ and~$\alpha_\theta$ from Lemma~\ref{lemm_alphalambdaderivimproved} and Lemma~\ref{lemm_alphaderivtheta}.
	Setting
	\begin{equation}
		\hatDfctNpm=t^{7/2}\DfctNpm+t^{1/2},\quad \hat G_{N+1,\pm}=\sup_{\theta\in\SS^1}\hatDfctNpm
	\end{equation}
	and~$\hat G_{N+1}=\hat G_{N+1,\pm}+\hat G_{N+1,\mp}$,
	this implies
	\begin{equation}
		\partpm\hatDfctNmp\le\frac{1}{2t}\hatDfctNpm+ C_Nt^{{-}3/2}(\hatDfctNp+\hatDfctNm),
	\end{equation}
	from which we can even conclude
	\begin{equation}
		\hat G_{N+1}(t)\le  \hat G_{N+1}(t_1)+\int_{t_1}^t(\frac{1}{2s}\hat G_{N+1}(s) + C_Ns^{{-}3/2} \hat G_{N+1}(s))ds.
	\end{equation}
	Grönwall's lemma therefore yields that~$\hat G_{N+1}(t)\le C_Nt^{1/2}$. Together with the Inductive assumption~\ref{inductiveassumpt}, Lemma~\ref{lemm_alphalambdahigherderiv} and estimate~\eqref{eqn_improvedestimalpha}, this concludes the proof.
\end{proof}
\begin{coro}
\label{coro_collectasymestim}
	\consconstVandTtwoandlambdaasympt
	Then there is a constant~$C_N$, depending only on~$N$ and the solution, such that
	\begin{align}
		\normck N{P_t}+ \normck N{Q_t} + \normck N{\scalphi_t} + \normck N{\frac{\alpha_t}{\alpha}+\frac3t} + \normck N{\lambda_t+\frac3t} +t^{1/2}\normck N{\alpha^{1/2}\lambda_\theta}\le C_N t^{{-}2},
	\end{align}
	for all~$t\ge t_1$.
\end{coro}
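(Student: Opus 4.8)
The plan is to upgrade the decay rates produced by the inductive scheme. Having closed the induction with Lemma~\ref{lemm_inductassumfornextN}, I may take for granted that at every order $\normck N{P_\theta}+\normck N{Q_\theta}+\normck N{\scalphi_\theta}\le C_N$ and $\normck N{P_t}+\normck N{Q_t}+\normck N{\scalphi_t}\le C_N t^{{-}3/2}$, together with the two-sided bound $C_1t^{{-}3}\le\alpha\le C_2t^{{-}3}$ from~\eqref{eqn_improvedestimalpha} and the boundedness of every logarithmic derivative $\alpha^{{-}1}\parttheta^j\alpha$ (Lemma~\ref{lemm_alphalambdahigherderiv}); these combine to give $\normck N{\alpha^{1/2}}\le C_N t^{{-}3/2}$ and $\normck N{\alpha^{{-}1/2}}\le C_N t^{3/2}$. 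The real content of the corollary is to sharpen the time-derivative bounds from $t^{{-}3/2}$ to $t^{{-}2}$; once this is done, the estimates on $\lambda_\theta$, $\lambda_t$ and $\alpha_t/\alpha$ will follow by feeding the improved rates into the first-order equations.

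For the key step I would integrate the divergence-form equations rather than iterate the lightcone scheme, since the latter cannot beat $t^{{-}3/2}$ because of the self-coupling term $\tfrac1{2t}(\Afct_++\Afct_-)$ in the estimate for $\partpm\Afctmp$, whereas the weighted variable $t\alpha^{{-}1/2}P_t$ carries no such obstruction. Concretely, I take equation~\eqref{eqn_secondevolscalphifromeqnmotion}, which for $\potV=\constpot$ reduces to $\partt(t\alpha^{{-}1/2}\scalphi_t)=\parttheta(t\alpha^{1/2}\scalphi_\theta)$, and equations~\eqref{eqn_secondevolP}, \eqref{eqn_secondevolQ} for $P,Q$. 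Differentiating $j\le N$ times in $\theta$ and estimating with the bounds just recalled, every term on the right is $\O(t^{{-}1/2})$: the principal term $\parttheta^{j+1}(t\alpha^{1/2}\scalphi_\theta)=t\,\parttheta^{j+1}(\alpha^{1/2}\scalphi_\theta)$ is $\O(t^{{-}1/2})$ because $\alpha^{1/2}\scalphi_\theta$ and all its $\theta$-derivatives are $\O(t^{{-}3/2})$; the quadratic source in~\eqref{eqn_secondevolP} is of the same order $t\cdot\alpha^{{-}1/2}\cdot\O(t^{{-}3})=\O(t^{{-}1/2})$; and the twist quotients contribute only $\O(t^{{-}5/2})$. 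Integrating in $t$ from $t_1$ then yields $\normck N{t\alpha^{{-}1/2}\scalphi_t}\le C_N t^{1/2}$, and likewise for $P$ and $Q$ with the weights $t\alpha^{{-}1/2}$ and $t\alpha^{{-}1/2}e^{2P}$. Rewriting $\scalphi_t=t^{{-}1}\alpha^{1/2}\cdot(t\alpha^{{-}1/2}\scalphi_t)$ and using $\normck N{\alpha^{1/2}}\le C_N t^{{-}3/2}$ converts this into $\normck N{\scalphi_t}\le C_N t^{{-}2}$, and analogously $\normck N{P_t}+\normck N{Q_t}\le C_N t^{{-}2}$.

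With the improved rates in hand, the remaining quantities are read off from the first-order equations. Differentiating the constraint~\eqref{eqn_spaceevollambda}, $\lambda_\theta=2t(P_tP_\theta+e^{2P}Q_tQ_\theta+2\scalphi_t\scalphi_\theta)$, and pairing $\normck N{P_t}=\O(t^{{-}2})$ with the bounded $\theta$-derivatives of $P,Q,\scalphi$ gives $\normck N{\lambda_\theta}\le C_N t^{{-}1}$, hence $\normck N{\alpha^{1/2}\lambda_\theta}\le C_N t^{{-}5/2}$; in particular the full $C^N$ norm of $\hat\lambda$ from~\eqref{eqn_defihatlambda} is $\O(t^{{-}1})$, upgrading the $C^0$ statement of Lemma~\ref{lemm_alphalambdaderivimproved}. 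Finally, writing $\tfrac{\alpha_t}\alpha+\tfrac3t=\tfrac3t(1-e^{\hat\lambda/2})-(\text{quotients})$ from~\eqref{eqn_timeevolalpha}, and the analogous expression for $\lambda_t+\tfrac3t$ from~\eqref{eqn_timeevollambda}, the factor $1-e^{\hat\lambda/2}$ is $\O(t^{{-}1})$ in $C^N$, the explicit quadratic terms in~\eqref{eqn_timeevollambda} are $\O(t^{{-}2})$ after multiplication by $t$, and the quotients are negligible by Lemma~\ref{lemm_quotientsboundstrong}; this delivers $\normck N{\tfrac{\alpha_t}\alpha+\tfrac3t}+\normck N{\lambda_t+\tfrac3t}\le C_N t^{{-}2}$.

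The main obstacle is the half-power gain in the second paragraph: one has to recognise that the natural weighted variable $t\alpha^{{-}1/2}P_t$ obeys a genuine equation $\partt(\cdot)=\parttheta(\cdot)+\text{source}$ whose source decays like $t^{{-}1/2}$, integrable with polynomial growth exactly tuned to produce $t^{1/2}$, and then to trade this back through the weight $t^{{-}1}\alpha^{1/2}=\O(t^{{-}5/2})$ to land on $t^{{-}2}$. This is genuinely special to the constant potential, since for non-constant $\potV$ the additional source $-t^{1/2}\alpha^{{-}1/2}e^{\lambda/2}\potV'(\scalphi)$ in~\eqref{eqn_secondevolscalphifromeqnmotion} would enter the integration with neither a controlled sign nor a comparable decay.
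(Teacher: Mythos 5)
Your proposal is correct and follows essentially the same route as the paper: the paper's proof likewise feeds the all-orders inductive estimates and Lemma~\ref{lemm_alphalambdahigherderiv} into the divergence-form equations~\eqref{eqn_secondevolP}, \eqref{eqn_secondevolQ} and~\eqref{eqn_secondevolscalphifromeqnmotion} to get $\normck N{\partt(t\alpha^{{-}1/2}P_t)}+\normck N{\partt(t\alpha^{{-}1/2}Q_t)}+\normck N{\partt(t\alpha^{{-}1/2}\scalphi_t)}\le C_N t^{{-}1/2}$, integrates, and converts back through the $\alpha$-asymptotics~\eqref{eqn_improvedestimalpha} to obtain the $t^{{-}2}$ decay, after which the remaining bounds are read off from~\eqref{eqn_timeevolalpha}, \eqref{eqn_timeevollambda} and~\eqref{eqn_spaceevollambda} together with Lemma~\ref{lemm_alphalambdaderivimproved}. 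Your write-up simply supplies the details the paper leaves implicit, including the apt observation that the lightcone iteration saturates at $t^{{-}3/2}$ and cannot by itself produce the improved rate.
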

\begin{proof}
	By Lemma~\ref{lemm_inductassumfornextN}, we know that the Inductive assumption~\ref{inductiveassumpt} holds for all~$1\le N\in\ZZ$. As a consequence, all conclusions of Lemma~\ref{lemm_alphalambdahigherderiv} hold for all~$1\le N\in\ZZ$. We combine this with the second order evolution equations for~$P$, $Q$ and~$\scalphi$, equations~\eqref{eqn_secondevolP}, \eqref{eqn_secondevolQ} and~\eqref{eqn_secondevolscalphifromeqnmotion}, to find
	\begin{equation}
		\normck N{\partt(t\alpha^{{-}1/2}\scalphi_t)}
		+\normck N{\partt(t\alpha^{{-}1/2}P_t)}
		+\normck N{\partt(t\alpha^{{-}1/2}Q_t)}
		\le C_Nt^{{-}1/2}.
	\end{equation}
	Integrating these expressions and making use of estimates~\eqref{eqn_improvedestimalpha} for~$\alpha$  and Lemma~\ref{lemm_alphalambdahigherderiv}, we conclude that
	\begin{equation}
		\normck N{\scalphi_t}+ \normck N{P_t}+ \normck N{Q_t}\le C_N t^{{-}2}.
	\end{equation}
	For the remaining estimates, we combine this last estimate with the evolution equations~\eqref{eqn_timeevolalpha}, \eqref{eqn_timeevollambda} and~\eqref{eqn_spaceevollambda}, again the estimates for~$\alpha$, and Lemma~\ref{lemm_alphalambdaderivimproved} to conclude.
\end{proof}

\section{Proof of the main statements on the \nohairconj}
\label{section_mainproofs}

We now provide proofs of the main statements from the introduction which discuss the \nohairconj. For this, we heavily rely on the work carried out in the previous section.
\begin{proof}[Proof of Proposition~\ref{prop_mainasympmetric}]
	The estimates on~$P$, $Q$ and~$\scalphi$, both for their zeroth and first derivative, are a direct consequence of Corollary~\ref{coro_collectasymestim}. The same is true for the estimates of the first derivatives of~$\alpha$ and~$\lambda$. For their zeroth derivatives, we combine these two last estimates with Lemma~\ref{lemm_ebasestimateoneovertwo} and estimate~\eqref{eqn_improvedestimalpha}. 
	
	We turn to~$G$ and~$H$, which satisfy
	\begin{equation}
		H_t={-}t^{{-}5/2}\alpha^{{-}1/2}e^{\lambda/2+P}(K-QJ),\quad 
		G_t={-}QH_t-t^{{-}5/2}\alpha^{{-}1/2}e^{\lambda/2-P}J,
	\end{equation}
	according to~\eqref{eqn_twistanalytic}. We recall that~$J$, $K$ are constants, and that~$P$, $Q$ are bounded in every~$C^N$-norm by Corollary~\ref{coro_collectasymestim}. Combining this with  Lemma~\ref{lemm_ebasestimateoneovertwo} and estimate~\eqref{eqn_improvedestimalpha} for the zeroth derivative of~$\alpha$ and~$\lambda$ as well as Lemma~\ref{lemm_alphalambdahigherderiv} for the higher ones, this yields the estimates of~$G$ and~$H$.

	All which remains to show is the geometric estimates. The proof can be carried over from that of Theorem~7 in~\cite{andreassonringstrom_cosmicnohairT3gowdyeinsteinvlasov}, page~52.
\end{proof}
\begin{proof}[Proof of Theorem~\ref{theo_nohair}]
	We do not carry out the details of the proof here but refer the reader to the very detailed arguments in~\cite{andreassonringstrom_cosmicnohairT3gowdyeinsteinvlasov}, where the proof of their Theorem~14 is given on page~53. As is the case there, the \nohairconj\ follows from the estimates on the metric and \fundform\ which, for the case of a non-linear scalar field with~$\TT^2$-symmetry and~$\lambda$-asymptotic, we have provided in Proposition~\ref{prop_mainasympmetric}.
	The steps in the proof are as follows: 
	One starts by noting that the sets~$\{t\}\times\TT^3$ are Cauchy hypersurfaces. Using the limits and convergence rates of the individual variables appearing in the metric and the \fundform, one finds bounds on the injectivity radius of the late time boundary hypersurface~$(\TT^3,\bar g_\infty)$. These bounds ensure existence of a small neighborhood of a fixed point~$\bar x\in\TT^3$ where normal coordinates can be defined. Via the normal coordinates, it is possible to find a four-dimensional set large enough to contain an open, non-empty subset~$D$ diffeomorphic to~$\setHKT$. Showing that the metric and \fundform\ induced by this diffeomorphism have the correct asymptotic behaviour is a matter of applying the known estimates on~$\bar g$ and~$\bar k$.
\end{proof}

\begin{proof}[Proof of Proposition~\ref{prop_geodcomplete}]
	The proof proceeds similarly to that of~\cite[Prop.~4]{ringstrom_futstabnonlinscalfield}. Consider a future directed causal geodesic~$c$ and assume that its maximal interval of existence is~$(s_{\min},s_{\max})$. To conclude, we have to show that~$s_{\max}=\infty$.
	
	We use the notation~$t=c^0(s)$ and introduce the \onorm\ frame~$e_0$, $e_1$, $e_2$, $e_3$ as in~\eqref{eqns_onormframe}. The geodesic equation for~$c$ implies that
	\begin{equation}
	\label{eqn_geodesic}
		\ddot c^0 +\Gamma^0_{\mu\nu}\dot c^\mu \dot c^\nu=0.
	\end{equation}
	The Christoffel symbols~$\Gamma^0_{\mu\nu}$ with respect to the chosen frame have been explicitly computed for a~$\TT^2$-symmetric metric of the form~\eqref{eqn_metricT2} in~\cite[App.~A]{andreassonringstrom_cosmicnohairT3gowdyeinsteinvlasov}, and the expressions are given in terms of the variables~$\alpha>0$, $\lambda$, $P$, $Q$, $J$ and~$K$. Combining these expressions with the asymptotic behaviour we proved in Proposition~\ref{prop_mainasympmetric}, we find that
	\begin{equation}
		\absval{\Gamma_{00}^0}=0,\qquad
		\absval{\Gamma_{0i}^0}\le Ct^{{-}3/2},\qquad
		\absval{\Gamma_{ij}^0-\H \delta_{ij}}\le Ct^{{-}1},
	\end{equation}
	where we have set~$\H = (\constpot/3)^{1/2}$.
	As a consequence, $\Gamma_{ij}^0\dot c^i\dot c^j\ge 0$ for sufficiently large times~$t$.
	Due to causality and future directedness of the curve together with \onorm ity of the frame, we find that
	\begin{equation}
	\label{eqn_causalgeod}
		{-}(\dot c^0)^2 + \delta_{ij}\dot c^i \dot c^j \le 0,\qquad
		{-}\dot c^0 <0.
	\end{equation}
	With this, equation~\eqref{eqn_geodesic} implies
	\begin{equation}
		\ddot c^0\le Ct^{{-}3/2}\dot c^0 \dot c^0.
	\end{equation}
	As in the proof of~\cite[Prop.~4]{ringstrom_futstabnonlinscalfield}, we conclude from this estimate that~$\dot c^0$ is bounded for~$s\ge s_1$, as well as
	\begin{equation}
		c^0(s)-c^0(s_1)\le C\absval{s-s_1}.
	\end{equation}
	
	Suppose that~$s_{\max}$ is finite. This implies that~$c^0(s)$ remains bounded for all~$s\in(s_1,s_{\max})$, and due to~$\dot c^0>0$ has to converge to a finite number. The first estimate in~\eqref{eqn_causalgeod} implies that the same holds true for~$c^i$, $i=1,2,3$.
	Consequently, $c$ converges to a point in the spacetime as~$s\rightarrow s_{\max}$ and can therefore be extended beyond~$s_{\max}$, a contradiction.
\end{proof}

\appendix

\section{Finding the evolution equations for scalar field}
\label{appendix_evolequsscalarfield}

The aim of this section is to explain the provenance of the evolution equations~\eqref{eqn_timeevolalpha}--\eqref{eqn_eqnofmotionscalfieldwithT2}. For this, we partially rely on results obtained in~\cite{andreassonringstrom_cosmicnohairT3gowdyeinsteinvlasov}.

We investigate the~$\TT^2$-symmetric metric given in equation~\eqref{eqn_metricT2} and, as in~\cite[eq~(7)]{andreassonringstrom_cosmicnohairT3gowdyeinsteinvlasov}, introduce the \onorm\ frame
\begin{equation}
\label{eqns_onormframe}
	\begin{subaligned}
		e_0={}&t^{1/4}e^{{-}\lambda/4}\partial_t, &&
		e_1={}t^{1/4}e^{{-}\lambda/4}\alpha^{1/2}(\partial_\theta-G\partial_x-H\partial_y),\\
		e_2={}&t^{{-}1/2}e^{{-}P/2}\partial_x, &&
		e_3={}t^{{-}1/2}e^{P/2}(\partial_y-Q\partial_x).
	\end{subaligned}
\end{equation}

In order to connect to the results in~\cite{andreassonringstrom_cosmicnohairT3gowdyeinsteinvlasov}, we now need to compute the different components of the stress-energy tensor of a non-linear scalar with respect to this frame. To this end, we first notice the following invariance property.
\begin{lemm}
\label{lemm_T2symmderivativesphivanish}
	Consider a~$\TT^2$-symmetric metric~\eqref{eqn_metricT2} satisfying the Einstein non-linear scalar field equations~\eqref{eqn_stressenergytensorgeneralscalarfield} with scalar field~$\scalphi$, potential~$\potV$ and equation of motion~\eqref{eqn_eqnofmotiongeneralscalarfield}. 
	Then~$\scalphi_x=0=\scalphi_y$ as well as~$\nabla_{e_2}\scalphi=0=\nabla_{e_3}\scalphi$.
\end{lemm}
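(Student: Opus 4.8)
The plan is to prove a symmetry\nobreakdash-inheritance statement: the scalar field must be invariant under the isometry group generated by the Killing fields of the metric, and this invariance is exactly the vanishing of the frame derivatives. Write $X=\partial_x$ and $Y=\partial_y$ for the two Killing fields of~\eqref{eqn_metricT2}; both have compact (circle) orbits and the metric coefficients $\alpha,\lambda,P,Q,G,H$ are invariant under their flows. Since $X$ is Killing, $\L_X g=0$, hence $\L_X\operatorname{Ric}=0$ and $\L_X\operatorname{Ein}=0$, and the Einstein equation $\operatorname{Ein}=T$ forces $\L_X T=0$; likewise $\L_Y T=0$. I would set $u\coloneqq\L_X\scalphi=\scalphi_x$ and aim to show $u\equiv0$, the argument for $\scalphi_y$ being identical.

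The key computation is to expand $\L_X T=0$ for the stress\nobreakdash-energy tensor~\eqref{eqn_stressenergytensorgeneralscalarfield}. Using $\L_X g=0$ (so also $\L_X g^{\gamma\delta}=0$) together with $\L_X(d\scalphi)=d(\L_X\scalphi)=du$, one obtains
\begin{equation}
	\partial_\alpha u\,\partial_\beta\scalphi+\partial_\alpha\scalphi\,\partial_\beta u-\left[\langle du,d\scalphi\rangle+\potV'(\scalphi)\,u\right]g_{\alpha\beta}=0.
\end{equation}
Taking the $g$-trace in dimension four eliminates the gradient term, yielding $\langle du,d\scalphi\rangle=-2\potV'(\scalphi)\,u$, so the bracket collapses to $-\potV'(\scalphi)\,u$ and the identity simplifies to the clean tensorial relation
\begin{equation}
	du\otimes d\scalphi+d\scalphi\otimes du+\potV'(\scalphi)\,u\,g=0.
\end{equation}

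The decisive step is then purely algebraic. The first two terms form a symmetric product of two covectors, hence define a symmetric $2$-tensor of rank at most two at each point, whereas $g$ is nondegenerate of rank four. Therefore the coefficient must vanish pointwise, $\potV'(\scalphi)\,u\equiv0$, and the relation reduces to $du\otimes d\scalphi+d\scalphi\otimes du=0$. For two one\nobreakdash-forms this holds exactly when $du=0$ or $d\scalphi=0$ at the point in question. Where $d\scalphi=0$ we read off $u=\scalphi_x=0$ directly, whereas on the open set $\{u\ne0\}$ we necessarily have $d\scalphi\ne0$, hence $du=0$, so $u$ is locally constant there.

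It remains to upgrade this pointwise dichotomy to $u\equiv0$, and this global step is where I expect the only real (if mild) obstacle to lie, since it uses the topology rather than the field equations. On each connected component of $\{u\ne0\}$ the function $u$ is a nonzero constant; by continuity it takes that same nonzero value on the boundary, so the component is open and closed, and as $M=I\times\TT^3$ is connected it would have to be all of $M$. But $u=\scalphi_x$ has vanishing mean around every $x$-circle by periodicity, so it cannot be a nonzero constant; hence $\{u\ne0\}=\emptyset$ and $\scalphi_x\equiv0$. Running the same argument with $Y$ gives $\scalphi_y\equiv0$, and reading off the frame~\eqref{eqns_onormframe} yields $\nabla_{e_2}\scalphi=t^{-1/2}e^{-P/2}\scalphi_x=0$ and $\nabla_{e_3}\scalphi=t^{-1/2}e^{P/2}(\scalphi_y-Q\scalphi_x)=0$, as claimed. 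The only inputs beyond the Killing property are the nondegeneracy of $g$ (for the rank argument), smoothness, and compactness of the orbits together with connectedness of the spacetime (for the final patching).
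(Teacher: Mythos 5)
Your proposal is correct, but it takes a genuinely different route from the paper. The paper's proof is essentially definitional: it reads the hypothesis of $\TT^2$-symmetry as invariance of the \emph{whole solution} $(I\times\TT^3,g,\scalphi)$ under the torus action, so $\scalphi_x=0=\scalphi_y$ holds by assumption, and the frame statement follows at once from the form of~$e_2$, $e_3$ in~\eqref{eqns_onormframe}. You instead prove a symmetry-inheritance theorem: assuming only that the \emph{metric} is invariant, you use $\L_X\operatorname{Ein}=0$ and the field equations to force $\L_X T=0$, reduce this algebraically to $du\otimes d\scalphi+d\scalphi\otimes du+\potV'(\scalphi)\,u\,g=0$ with $u=\scalphi_x$, kill the potential term by the rank-versus-nondegeneracy argument, and then handle the residual case pointwise. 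This extra care is genuinely necessary: symmetry inheritance for scalar fields is \emph{false} in general, the standard counterexample being a ``stealth'' configuration with $u$ a nonzero constant, $du=0$ and $\potV'(\scalphi)=0$ (e.g.\ a massless field linear in the Killing coordinate), which satisfies $\L_X T=0$ without $\L_X\scalphi=0$. Your proof correctly identifies that this is the only obstruction and excludes it using exactly the two features available here — compactness of the $x$- and $y$-circles (zero mean of $\scalphi_x$) and connectedness of $I\times\TT^3$ (the clopen argument). So your argument is sound and in fact establishes a stronger statement under a weaker, more literal reading of the lemma; what the paper's convention buys is brevity and conformity with the standard usage in this literature, where the symmetry class is defined as invariance of all fields, while yours buys the nontrivial fact that imposing the symmetry on the metric alone would suffice in this setting.
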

\begin{proof}
	As the spacetime with non-linear scalar field~$(I\times\TT^3,g,\scalphi)$ is~$\TT^2$-symmetric by assumption, the metric~$g$ and the scalar field~$\scalphi$ are invariant under the isometries described after equation~\eqref{eqn_metricT2}. This implies~$\scalphi_x=0=\scalphi_y$. The remaining two equations in the statement follow by definition of the vector fields~$e_2$, $e_3$.
\end{proof} 

Given an \onorm\ frame~$e_0$, $e_1$, $e_2$, $e_3$, the different components of the stress-energy tensor are denoted as follows:
\begin{equation}
	\rho\coloneqq T(e_0,e_0), \quad 
	P_i\coloneqq(e_i,e_i), \quad
	J_i\coloneqq{-}T(e_0,e_i)\quad
	S_{ij}=T(e_i,e_j),	
\end{equation}
where~$i\not=j$ and we do not sum over indices occuring twice.

Applying the previous lemma, a straight-forward computation now yields that in a~$\TT^2$-symmetric non-linear scalar field these quantities are given by
\begin{align}
	\rho={}&\frac12(\nabla_0\scalphi)^2+\frac12(\nabla_1\scalphi)^2+V(\scalphi)\\
	    ={}&\frac12t^{1/2}e^{{-}\lambda/2}(\scalphi_t^2+\alpha\scalphi_\theta^2)+V(\scalphi),\\
	P_1={}&\frac12(\nabla_0\scalphi)^2+\frac12(\nabla_1\scalphi)^2-V(\scalphi)\\
	    ={}&\frac12t^{1/2}e^{{-}\lambda/2}(\scalphi_t^2+\alpha\scalphi_\theta^2)-V(\scalphi),\\
	P_2={}&\frac12(\nabla_0\scalphi)^2-\frac12(\nabla_1\scalphi)^2-V(\scalphi)\\
	    ={}&\frac12t^{1/2}e^{{-}\lambda/2}(\scalphi_t^2-\alpha\scalphi_\theta^2)-V(\scalphi)=P_3.
\end{align}
Further, 
\begin{align}
	J_1={}&{-}\nabla_0\scalphi\nabla_1\scalphi={-}t^{1/2}e^{{-}\lambda/2}\alpha^{1/2}\scalphi_t\scalphi_\theta,\\
	J_2={}&0=J_3,
\end{align}
and
\begin{equation}
	S_{ij}=T(e_i,e_j)={}0,\qquad i\not=j.
\end{equation}
Replacing the occurences of~$\rho$, $P_i$, $J_i$, and~$S_{ij}$ in the evolution equations~(60)--(69) in~\cite{andreassonringstrom_cosmicnohairT3gowdyeinsteinvlasov} by these expressions and omitting the cosmological constant, we obtain our evolution equations~\eqref{eqn_timeevolalpha}--\eqref{eqn_secondevollambda} as well as
\begin{equation}
	J_\theta=K_\theta=J_t=K_t=0.
\end{equation}

The equation of motion in coordinates, equation~\eqref{eqn_eqnofmotionscalfieldwithT2}, is an immediate consequence of the general equation of motion of a non-linear scalar field, equation~\eqref{eqn_eqnofmotiongeneralscalarfield}, applying Lemma~\ref{lemm_T2symmderivativesphivanish} and using the definition of the frame elements~$e_0,e_1$.

\bibliographystyle{amsalpha}%amsplain or amsalpha
\bibliography{researchbib}
\vfill

\end{document}